\newcommand{\MMS}{\text{MMS}}
\newcommand{\I}{\mathcal{I}}
\newcommand{\Rt}{R^{(t)}}
\newcommand{\Rk}{R^{(k)}}
\newcommand{\Tt}{T^{(t)}}
\newcommand{\Prn}{\{P_1, \dots, P_{t_3} \}}
\newcommand{\Tk}{T^{(k)}}
\newcommand{\ins}{\I=\langle N,M,V\rangle}
\newcommand{\MP}{\mathcal{P}}
\newcommand{\tf}[2]{\tfrac{#1}{#2}}
\newcommand{\FRAC}{\tf34+\gamma}
\newtheorem{theorem}{Theorem}[section]
\newtheorem{corollary}[theorem]{Corollary}
\newtheorem{lemma}[theorem]{Lemma}
\newtheorem{definition}[theorem]{Definition}
\newtheorem{example}[theorem]{Example}
\newtheorem{remark}[theorem]{Remark}
\newtheorem{observation}[theorem]{Observation}
\newtheorem{claim}[theorem]{Claim}
\title{An Improved Approximation Algorithm for Maximin Shares\thanks{Supported by NSF CAREER Award 1942321.}}
\author{Jugal Garg\thanks{University of Illinois at Urbana-Champaign.}\\\texttt{jugal@illinois.edu}
\and Setareh Taki\footnotemark[2]\\\texttt{staki2@illinois.edu}
}
\date{}
\begin{document}
\maketitle

\begin{abstract}
Fair division is a fundamental problem in various multi-agent settings, where the goal is to divide a set of resources among agents in a \emph{fair} manner. We study the case where $m$ indivisible items need to be divided among $n$ agents with additive valuations using the popular fairness notion of \emph{maximin share} (MMS). An MMS allocation provides each agent a bundle worth at least her maximin share. While it is known that such an allocation need not exist~\cite{procaccia2014fair,kurokawa2016can}, a series of remarkable work~\cite{procaccia2014fair,KurokawaPW18,amanatidis2017approximation,barman2017approximation, ourpaper} provided approximation algorithms for a $\tfrac{2}{3}$-MMS allocation in which each agent receives a bundle worth at least $\tfrac{2}{3}$ times her maximin share. More recently, Ghodsi et al.~\cite{ghodsi2017fair} showed the existence of a $\tf34$-MMS allocation and a PTAS to find a ($\tf34-\epsilon$)-MMS allocation for an $\epsilon > 0$. Most of the previous works utilize intricate algorithms and require agents' approximate MMS values, which are computationally expensive to obtain.  

In this paper, we develop a new approach that gives a simple algorithm for showing the existence of a $\tf34$-MMS allocation. Furthermore, our approach is powerful enough to be easily extended in two directions: First, we get a strongly polynomial time algorithm to find a $\tf34$-MMS allocation, where we do not need to approximate the MMS values at all. Second, we show that there always exists a $(\tf34+\tfrac{1}{12n})$-MMS allocation, improving the best previous factor. This improves the approximation guarantee, most notably for small $n$. We note that $\tf34$ was the best factor known for $n> 4$. 
\end{abstract}

\section{Introduction}
Fair division is a fundamental problem in various multi-agent settings, where the goal is to divide a set of resources among agents in a \emph{fair} manner. It has been a subject of intense study since the seminal work of Steinhaus~\cite{steinhaus1948problem} where he introduced the cake-cutting problem for $n>2$ agents: Given a heterogeneous (divisible) cake and a set of agents with different valuation functions, the problem is to find a \emph{fair} allocation. The two most well-studied notions of fairness are: 1) Envy-freeness, where each agent prefers her own share of cake over any other agents' share, and 2) Proportionality, where each agent receives a share that is worth at least $1/n$ of her value for the entire cake.

We study the \emph{discrete} fair division problem where $m$ \emph{indivisible} items need to be divided among $n$ agents with additive valuations. For this setting, no algorithm can provide either envy-freeness or proportionality, in general, e.g., consider allocating a single item among $n>1$ agents. This necessitated an alternate concept of fairness. Budish~\cite{budish2011combinatorial} introduced an intriguing option called \emph{maximin share}, which has attracted a lot of attention~\cite{procaccia2014fair, kurokawa2016can, ghodsi2017fair, ourpaper, KurokawaPW18, barman2017approximation, FarhadiGHLPSSY19, amanatidis2017approximation}. The idea is a straightforward generalization of the popular \emph{cut and choose protocol} in the cake-cutting problem and a  natural relaxation of proportionality. Suppose we ask an agent $i$ to partition the items into $n$ bundles (one for each agent), with the condition that the other $n-1$ agents get to choose a bundle before her. In the worst case, $i$ receives the least preferred bundle. Clearly, in such a situation, the agent will choose a partition that maximizes the value of her least preferred bundle. This maximum possible value is called $i$'s maximin share (MMS) value. In fact, when all agents have the same valuations, $i$ cannot guarantee more than the MMS value. 

Each agent's MMS value is a specific objective that gives her an intuitive measure of the fairness of an allocation. For example, Gates et al.~\cite{GatesGD20} showed that in real-life experiments maximin metric is preferred by participating agents over others. This raises a natural question: Is there an allocation where each agent receives a bundle worth at least her MMS value? An allocation satisfying this property is said to be maximin share allocation (MMS allocation), and if it exists, it provides strong fairness guarantees to each individual agent. However, Procaccia and Wang~\cite{procaccia2014fair}, through a clever counter-example, showed that MMS allocation might not exist but a $\tfrac{2}{3}$-MMS allocation always exists, i.e., an allocation where each agent receives a bundle worth at least $\tfrac{2}{3}$ of their MMS value. Later, Ghodsi et al.~\cite{ghodsi2017fair} improved the factor by showing the existence of a $\tf34$-MMS allocation using a sophisticated technique with a very challenging analysis.

We note that these are primarily existential results that do not provide any efficient algorithm to find such an allocation. The main issue in these techniques is the need for agents' MMS values. The problem of finding the MMS value of an agent is NP-hard,\footnote{Observe that the partition problem reduces to the MMS value problem with $n=2$.} but a polynomial-time approximation scheme (PTAS) exists~\cite{woeginger1997polynomial}. Theoretically, one can use PTAS to find a $(\tf34-\epsilon)$-MMS allocation for an $\epsilon>0$ in polynomial time. However, for practical purposes, such algorithms are not very useful for small $\epsilon$. Hence, finding an efficient algorithm to compute a $\tf34$-MMS allocation remained open. 

\subsection{Our Results and Techniques}
In this paper, we develop a new approach that gives a simple algorithm for showing the existence of a $\tf34$-MMS allocation. Furthermore, our approach is powerful enough to be easily extended in two directions: First, we get a strongly polynomial time algorithm to find a $\tf34$-MMS allocation, where we do not need to use the PTAS in \cite{woeginger1997polynomial} to approximate the MMS values at all. Second, we show that there always exists a $(\tf34+\tfrac{1}{12n})$-MMS allocation, improving the best previous factor by Ghodsi et al.~\cite{ghodsi2017fair}. This improves the approximation guarantee, most notably for small $n$. We note that there are works, e.g.,~\cite{amanatidis2017approximation,ghodsi2017fair}, exploring better approximation factors for a small number of agents, and $\tf34$ was the best factor known for $n> 4$. 

Our algorithms are extremely simple. We first describe the basic algorithm, given in Section~\ref{sec:MMS-known}, that shows the existence of a $\tf34$-MMS allocation. We assume that MMS values are known for all agents. Since the MMS problem is scale-invariant (shown in Lemma \ref{lem:Scale}), we scale valuations to make each agent's MMS value 1. Then, we assign \emph{high-value items} (e.g., a single item that some agent values at least $\tf34$) to agents, who value them at least $\tf34$, with a simple greedy approach based on the pigeonhole principle. We remove the assigned items and the agents receiving these items from further consideration. This reduces the number of high-value items to be at most $2n'$, where $n'$ is the number of remaining agents. These greedy assignments massively simplify allocation of high-value items, which was the most challenging part of previous algorithms. Next, we prepare $n'$ bags, one for each remaining agent, and put at most two high-value items in each bag. Then, we add low-value items on top of each of these bags one by one using a bag filling procedure until the value of bag for some agent is at least $\tf34$. The main technical challenge here is to show that there are enough low-value items to give every agent a bag they value at least $\tf34$. 

In Section \ref{sec:MMS-NOT-known}, we extend the basic algorithm to compute a $\tf34$-MMS allocation in strongly polynomial time without any need to compute the actual MMS values (using the PTAS in \cite{woeginger1997polynomial}). Here, we define a notion of \emph{tentative assignments} and a novel way for updating the MMS upper bound. For each agent, we use the average value, that is the value of all items divided by the number of agents, as an upper bound of her MMS value. The only change from the basic algorithm is that some of the greedy assignments are tentative, i.e., they are valid only if the current upper bound of the MMS values is tight enough. We show that this can be checked by using the total valuation of low-value items. If the upper bounds are not tight enough for some agents, then we update the MMS value of such an agent and repeat. We show that we do not need to update the MMS upper bounds more than $O(n^3)$ times before we have a good upper bound on all MMS values. Then, we show that the same bag filling procedure, as in the basic algorithm, satisfy every remaining agent. The running time of the entire algorithm is $O(nm(n^4 + \log{m}))$. 

In Section~\ref{sec:last}, we show that our basic algorithm also yields a better bound of the existence of a $(\tf34+\tfrac{1}{12n})$-MMS allocation. The entire algorithm remains exactly the same but with an involved analysis. The analysis is tricky in this case, so we add a set of \emph{dummy items} to make proofs easier. We use these items to make up for the extra loss for the remaining agents due to the additional factor, and, of course, these items are not assigned to any agent in the algorithm. 

\subsection{Related Work}
Maximin share is a popular fairness notion of allocating indivisible items among agents. Bouveret and Lema\^itre~\cite{bouveret2016characterizing} showed that an MMS allocation always exists in some restricted cases, e.g., when there are only two agents or if agents' valuations for items are either 0 or 1, but left the general case as an open problem. As mentioned earlier, Procaccia and Wang~\cite{procaccia2014fair} showed that MMS allocation might not exist, but a $\tfrac{2}{3}$-MMS allocation always exists. They also provided a polynomial time algorithm to find a $\tfrac{2}{3}$-MMS allocation when the number of agents $n$ is constant. For the special case of four agents, their algorithm finds a $\tf34$-MMS allocation. Amanatidis et al.~\cite{amanatidis2017approximation} improved this result by addressing the requirement for a constant number of agents, obtaining a PTAS that finds a $(\tfrac{2}{3}-\epsilon)$-MMS allocation for an arbitrary number of agents; see~\cite{KurokawaPW18} for an alternate proof. In~\cite{amanatidis2017approximation}, they also showed that a $\tfrac{7}{8}$ MMS allocation always exists when there are three agents. This factor was later improved to $\tfrac{8}{9}$ in~\cite{GourvesM19}. 

Taking a different approach, Barman and Krishnamurthy~\cite{barman2017approximation} obtained a greedy algorithm to find a $\tfrac{2}{3}$-MMS allocation. While their algorithm is fairly simple, the analysis is not. More recently, Garg et al.~\cite{ourpaper} obtained a simple algorithm to find a $\tfrac{2}{3}$-MMS allocation that also has a simple analysis.

Ghodsi et al.~\cite{ghodsi2017fair} improved these results by showing the existence of a $\tf34$-MMS allocation and a PTAS to find a $(\tf34-\epsilon)$ MMS allocation. 

Maximin share fairness has also been studied in many different setting, e.g., 
for asymmetric agents (i.e., agents with different entitlements)~\cite{FarhadiGHLPSSY19}, for group fairness~\cite{barman2018groupwise,chaudhury2020little}, beyond additive valuations~\cite{barman2017approximation, ghodsi2017fair,li2018fair}, in matroids \cite{GourvesM19}, with additional constraints~\cite{GourvesM19,biswas2018fair}, for agents with externalities~\cite{BranzeiMRLJ13,AhmadiPourAnariEGHIMM13}, with graph constraints~\cite{bei2019connected,lonc2019maximin}, for allocating chores~\cite{aziz2017algorithms, barman2017approximation, huang2019algorithmic}, and with strategic agents \cite{strategicagents,Amanatidis2016Truthful,Amanatidis2017Truthful,Aziz2019Strategyproof}.

\section{The MMS Problem and its Properties}
We consider the fair allocation of a set $M$ of $m$ indivisible items among a set $N$ of $n$ agents with additive valuations, using the popular notion of maximin share (MMS) as our measure of fairness. Let $v_{ij}$ denote agent $i$'s value for item $j$, and $i$'s valuation of any bundle $S\subseteq M$ of items is given by $v_i(S) = \sum_{j\in S} v_{ij}$. Let $V=(v_1, \dots, v_n)$ denote the set of all valuation functions.

An agent's \textit{MMS value} is defined as the maximum value she can guarantee herself if she is allowed to choose a partition of items into $n$ bundles (one for each agent), on the condition that other agents choose their bundles from the partition before her. In the following definition we define it formally.

\begin{definition}[MMS value and MMS partition]
\label{def:MMS-val}
Let $\ins$ denote an instance of the fair division problem, and let $\Pi_n(M) = \{P = \{P_1, \dots, P_n\}\ |\ P_i \cap P_j = \emptyset, \forall i, j;\ \cup_k P_k  = M\}$ be the set of all feasible partitions of $M$ into $n$ bundles (one for each agent).  Agent $i$'s MMS value or $\mu_i^n(M)$ (or simply $\mu_i$ when $n$ and $M$ are clear from the context) is defined as 
\begin{equation*}\label{eq:MMS}
    \mu_i^n(M) = \max_{P\in \Pi_n(M)} \min_{P_k \in P} v_i(P_k)\enspace .
\end{equation*}
We call a partition achieving $\mu_i$, an \textit{MMS partition} of agent $i$.

Further, let $\MP^n_i(M)$ denote the set of partitions achieving $\mu_i^n(M)$, i.e., 
\begin{equation*}\label{eq:MMSp}
   \MP^n_i(M) = \{  P\in \Pi_n(M): \min_{P_k \in P} v_i(P_k) = \mu_i^n(M)\}\enspace .
\end{equation*}
\end{definition}
In other words, $\MP^n_i(M)$ is set of all MMS partitions of agent $i$ for items in $M$ when there are $n$ agents.

\begin{definition}[$\alpha$-MMS allocation and MMS problem]
\label{def:MMS-alloc}
We say an allocation $A=(A_1, \dots, A_n)$ is $\alpha$-MMS, for $\alpha\in (0,1]$, if each agent $i$ receives a bundle $A_i$ worth at least $\alpha$ times her MMS value, i.e., $v_i(A_i) \ge \alpha\cdot \mu_i, \forall i\in N$. 
An MMS allocation is simply $1$-MMS allocation. 

Given an instance $\ins$ and an approximation factor $\alpha\in(0,1]$, the MMS problem is to find an $\alpha$-MMS allocation. 
\end{definition}

\subsection{Properties of Maximin Share}\label{sec:properties}
In this section, we state nice properties of maximin shares that our algorithm exploits. These are standard results appeared in~\cite{procaccia2014fair,amanatidis2017approximation,bouveret2016efficiency,barman2017approximation,ourpaper,ghodsi2017fair}. For completeness, we include their proofs in~\ref{sec:mp}. 

\begin{restatable}{lemma}{Average}\label{lem:avg}{\em \scshape (Average upper bounds MMS).}
$\mu_i^n(M) \leq \frac{v_i(M)}{n}, \forall i\in N$.
\end{restatable}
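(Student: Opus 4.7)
The plan is to prove this directly from the definition of the maximin share, using only the additivity of $v_i$ and the basic fact that the minimum of a set of nonnegative reals is at most their average.

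First I would fix an agent $i \in N$ and let $P = \{P_1, \dots, P_n\}$ be any partition in $\Pi_n(M)$. Since the bundles $P_k$ partition $M$, additivity of $v_i$ gives $\sum_{k=1}^n v_i(P_k) = v_i(M)$. Consequently, the minimum of the $n$ numbers $v_i(P_1), \dots, v_i(P_n)$ is at most their average:
\begin{equation*}
\min_{P_k \in P} v_i(P_k) \;\leq\; \frac{1}{n}\sum_{k=1}^n v_i(P_k) \;=\; \frac{v_i(M)}{n}.
\end{equation*}

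Since this bound holds for every partition $P \in \Pi_n(M)$, it also holds for the partition achieving the maximum in the definition~\eqref{eq:MMS} of $\mu_i^n(M)$. Taking the max over $P$ on the left-hand side therefore yields $\mu_i^n(M) \leq v_i(M)/n$, as desired.

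There is really no main obstacle here — the statement follows in one line from the pigeonhole/averaging argument combined with additivity. The only thing to be careful about is to apply the averaging inequality \emph{inside} the max (i.e., for every fixed partition) before taking the maximum over partitions, rather than trying to bound the max directly.
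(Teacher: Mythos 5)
Your proof is correct and rests on the same averaging/additivity observation as the paper's; the only cosmetic difference is that you argue directly (bound the min by the average for every partition, then take the max) whereas the paper phrases it as a proof by contradiction.
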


\begin{restatable}{lemma}{Scale}\label{lem:Scale}{\em \scshape (Scale Invariance).}
Let $A=(A_1, \dots, A_n)$ be an $\alpha$-MMS allocation for instance $\ins$. If we create an alternate instance $\I'=\langle N,M,V'\rangle$ where valuations of each agent $i$ are scaled by $c_i>0$, i.e., $v'_{ij}:= c_i\cdot v_{ij}, \forall j\in M$, then $\mu'_i = c_i\cdot \mu_i$ and $A$ is an $\alpha$-MMS allocation for $\I'$ .
\end{restatable}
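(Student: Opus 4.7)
The plan is to prove the two claims of the lemma in sequence, both of which follow directly from the additivity of valuations together with the fact that scaling by $c>0$ preserves order.

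First, I would establish the identity $\mu'_i = c \cdot \mu_i$. By additivity, for every bundle $S \subseteq M$ we have $v'_i(S) = \sum_{j\in S} c \cdot v_{ij} = c \cdot v_i(S)$. Since the set $\Pi_n(M)$ of feasible partitions depends only on $M$ and $n$, it is the same for instances $\I$ and $\I'$. For any fixed partition $P \in \Pi_n(M)$, because $c>0$ preserves the order of nonnegative reals, $\min_{P_k \in P} v'_i(P_k) = c \cdot \min_{P_k \in P} v_i(P_k)$. Taking the maximum over $P \in \Pi_n(M)$ and pulling the positive constant $c$ out of the max gives $\mu'_i = c \cdot \mu_i$, as desired. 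Note that agents $j \neq i$ have unchanged valuations, so $\mu'_j = \mu_j$ for all $j \neq i$.

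Next, I would verify that $A$ remains an $\alpha$-MMS allocation for $\I'$ by checking each agent individually. For the scaled agent $i$, additivity gives $v'_i(A_i) = c \cdot v_i(A_i) \geq c \cdot \alpha \cdot \mu_i = \alpha \cdot \mu'_i$, using the hypothesis that $A$ is $\alpha$-MMS for $\I$ together with the scaling identity just proved. For any other agent $j \neq i$, the valuation function is unchanged, so $v'_j(A_j) = v_j(A_j) \geq \alpha \cdot \mu_j = \alpha \cdot \mu'_j$. Thus every agent receives a bundle of value at least $\alpha \cdot \mu'_k$ under $V'$, so $A$ is an $\alpha$-MMS allocation for $\I'$.

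There is no real obstacle here; the statement is essentially a bookkeeping exercise, and the only subtle point is noticing that scaling by a positive constant commutes with both $\min$ and $\max$ so the entire expression defining $\mu_i$ scales cleanly. The proof is a few lines once one writes out the definitions.
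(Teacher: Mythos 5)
Your proof is correct and follows essentially the same route as the paper: expand $v'_i(S) = c\,v_i(S)$ by additivity, observe that the positive constant $c$ commutes with $\min$ and $\max$ to get $\mu'_i = c\,\mu_i$, and then verify $v'_i(A_i) \ge \alpha\,\mu'_i$. Your explicit check of the unchanged agents $j \neq i$ is a small but welcome addition that the paper leaves implicit.
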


\subsubsection{Ordered Instances}
We say that an instance $\ins$ is ordered if:
\[v_{i1} \ge v_{i2} \ge \cdots \ge v_{im}, \forall i\in N\text{.}\]
In words, in an ordered instance, all agents have the same order of preferences over items. Bouveret and Lema\^itre~\cite{bouveret2016efficiency} showed that the ordered instances are the worst case. They provided a reduction from any arbitrary instance $\ins$ to an ordered instance $\I'=\langle N,M,V'\rangle$ with a simple polynomial-time procedure that converts an MMS allocation of $\I'$ into an MMS allocation of $\I$. Later, Barman and Krishnamurthy~\cite{barman2017approximation} generalized this result for $\alpha$-MMS allocations. This property is used in~\cite{barman2017approximation, ourpaper} to find a $\tfrac{2}{3}$-MMS allocation. Observe that the $\MMS$ values of an agent in $\I$ and $\I'$ are the same because, by Definition \ref{def:MMS-val}, it neither depends on the order of the items nor on other agents' valuations. 

\begin{restatable}{lemma}{Order}\label{lem:order}{\em \scshape (Ordered Instance~\cite{bouveret2016efficiency,barman2017approximation}).}
Without loss of generality, we can assume that agents have the same order of preferences over the items, i.e., $v_{i1} \geq v_{i2} \geq \dots \geq v_{im},   \forall i \in N \text{.}$
\end{restatable}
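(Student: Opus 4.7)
The plan is to reduce an arbitrary instance $\I = \langle N, M, V\rangle$ to an ordered instance $\I' = \langle N, M, V'\rangle$ and then convert any $\alpha$-MMS allocation of $\I'$ back into one for $\I$. For each agent $i$, I would fix a permutation $\sigma_i: M \to M$ sorting her values in non-increasing order, i.e., $v_{i, \sigma_i(1)} \geq v_{i, \sigma_i(2)} \geq \cdots \geq v_{i, \sigma_i(m)}$, and define $v'_{ij} := v_{i, \sigma_i(j)}$. By construction $\I'$ is ordered. Moreover, since $i$'s MMS value depends only on the multiset $\{v_{ij}\}_{j \in M}$ and on $n$ (the partition ranges over all labelings and $v_i$ is additive), we have $\mu'_i = \mu_i$ for every $i$. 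So it suffices to convert an $\alpha$-MMS allocation of $\I'$ into one for $\I$.

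Given an $\alpha$-MMS allocation $A' = (A'_1, \ldots, A'_n)$ for $\I'$, I would construct $A$ for $\I$ via a greedy reallocation: process items in the order $j = 1, 2, \ldots, m$, and for each $j$ identify the unique agent $i$ with $j \in A'_i$ and assign to her (in $\I$) the item she values most among those still unassigned, breaking ties arbitrarily. This clearly produces a valid partition of $M$ in polynomial time.

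The main claim is that $v_i(A_i) \geq v'_i(A'_i)$ for every agent $i$, which combined with $\mu_i = \mu'_i$ gives the desired $\alpha$-MMS allocation for $\I$. Fix an agent $i$ and list the items of $A'_i$ in $\I'$-order as $j^i_1 < j^i_2 < \cdots < j^i_{k_i}$. The key counting argument is that when the algorithm processes item $j^i_\ell$, exactly $j^i_\ell - 1$ items have already been removed from $\I$. Therefore, among the top $j^i_\ell$ items in agent $i$'s ranking, at least one is still available, so the algorithm assigns to $i$ an item whose value is at least $v'_{i, j^i_\ell}$. Summing over $\ell$ yields
\[
v_i(A_i) \;\geq\; \sum_{\ell=1}^{k_i} v'_{i, j^i_\ell} \;=\; v'_i(A'_i) \;\geq\; \alpha\, \mu'_i \;=\; \alpha\, \mu_i.
\]

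The only delicate point is the counting argument in the last paragraph; everything else is bookkeeping. I would expect the main subtlety to be making sure the global item-processing order (the ordering of $\I'$) is used both to index items and to bound the number of removed items simultaneously, so that the same $j^i_\ell - 1$ serves both roles. Since the reduction itself runs in polynomial time (one sort per agent plus one linear pass), the lemma also implies that a polynomial-time algorithm for $\alpha$-MMS on ordered instances transfers to arbitrary instances.
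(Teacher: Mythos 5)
Your proposal is correct and takes essentially the same approach as the paper: sort each agent's values to build the ordered instance (preserving each $\mu_i$), then convert an ordered-instance allocation back by processing items $j=1,\dots,m$, giving the owner of $j$ in $A'$ her favorite still-available item, and using the count of $j-1$ removed items to conclude the picked item is worth at least $v'_{ij}$. This is exactly the Algorithm~\ref{algo:Order}/Algorithm~\ref{algo:Reorder} argument in the appendix.
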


\subsubsection{Bag Filling for Low Value Items}\label{sec:bagf}
Ghodsi et al.~\cite{ghodsi2017fair} showed that if we normalize the valuation of agents so that $\mu_i^{n}(M)=1, \forall i\in N$ and $v_{ij} \leq \beta, \forall i\in N, j\in M$, then we can find a $(1-\beta)$-MMS allocation using the following simple \emph{bag filling procedure}: All items are unallocated in the beginning. Start with an empty bag $B$ and keep filling it with unallocated items until some agent $i$ values $B$ at least $(1-\beta)$. Then, allocate $B$ to $i$ (choose $i$ arbitrarily if there are multiple such agents). Note that any remaining agent values $B$ at most one because before adding the last item to $B$ everyone values it less than $(1-\beta)$ and adding one item will not increase the value of $B$ more than $\beta$. We repeat this process for the set of unallocated items and the set of agents who have not allocated any bundle yet. Since $v_i(M) \ge |N|, \forall i$ using Lemma~\ref{lem:avg}, there are enough items to satisfy all the agents with a bag that they value $(1-\beta)$. 

In Sections~\ref{sec:Bag-Filling-Known} and \ref{sec:bag-filling-UNKNOWN}, we design a more general bag filling procedure.

\subsubsection{Reduction}\label{sec:red}
A useful concept of \emph{valid reduction} is used in~\cite{kurokawa2016can,KurokawaPW18,amanatidis2017approximation,ghodsi2017fair,ourpaper}. From Definition \ref{def:MMS-val}, $\mu^{k}_i(S)$ denote the MMS value of agent $i$ when $S$ is the set of items that needs to be divided among $k$ agents (including $i$). Recall that for the $\alpha$-MMS allocation problem for instance $\ins$, we want to partition $M$ into $|N|$ bundles $(A_1, \dots, A_{|N|})$ such that $v_i(A_i) \ge \alpha\cdot \mu^{|N|}_i(M), \forall i$. 

\begin{definition}[Valid reduction]\label{def:Valid-Reduction}
For obtaining an $\alpha$-MMS allocation, the act of removing a set $S\subseteq M$ of items and an agent $i$ from $M$ and $N$ is called a \emph{valid reduction} if
\begin{equation}\label{eqn:red}
\begin{aligned}
v_i(S) \ge \alpha\cdot \mu_i^{|N|}(M) \\
\mu_{i'}^{|N|-1}(M\setminus S) \ge \mu_{i'}^{|N|}(M), & \ \ \ \ \forall i'\in N\setminus \{i\}\enspace .
\end{aligned}
\end{equation}
\end{definition}

In words, valid reduction is the process of reducing the size of the instance $\ins$ by assigning a set of items $S$ to an agent $i$ and getting a smaller instance $\I'=\langle N \setminus \{i\}, M \setminus S, V \setminus \{v_i\} \rangle $ while the two conditions in \eqref{eqn:red} is satisfied. Clearly, an $\alpha$-MMS allocation for the smaller instance $\I'$ gives an $\alpha$-MMS allocation for the original instance. In our algorithms, we use it to remove \emph{high-value items} and get smaller instances.

\section{Existence of $\frac{3}{4}$-MMS Allocation} \label{sec:MMS-known}
In this section, we present a simple proof of the existence of a $\frac{3}{4}$-MMS allocation for a given instance $\ins$. We assume that the MMS value $\mu_i$ of each agent $i$ is given. Finding the exact $\mu_i$ is an NP-Hard problem, however a PTAS exists~\cite{woeginger1997polynomial}. This implies a PTAS to compute a $(\tf34-\epsilon)$-MMS allocation for any $\epsilon>0$. Using the properties stated in Section~\ref{sec:properties}, we normalize valuations so that $\mu_i^n(M)=1, \forall i$ (Lemma \ref{lem:Scale}) and assume that $\I$ is an ordered instance, i.e., $v_{i1} \geq \dots \geq v_{im}, \forall i$ (Lemma \ref{lem:order}). Our proof is algorithmic. Whenever we apply a valid reduction and more than one agent satisfies the conditions~\eqref{eqn:red}, we choose one arbitrarily.

For the ease of exposition, we abuse notation and use $M$ and $N$ to respectively denote the set of unallocated items and the set of agents who have not received any bundle yet. We also use $n:=|N|$ and $m:=|M|$. Further, we use $j$ to denote the $j^{\text{th}}$ highest value item in $M$.

\begin{algorithm}[t!]
\caption{$\alpha$-MMS Allocation} \label{algo:first}
\DontPrintSemicolon
  \SetKwFunction{Define}{Define}
  \SetKwInOut{Input}{Input}\SetKwInOut{Output}{Output}
  \Input{Ordered Instance $\ins$, i.e., $v_{i1} \ge v_{i2} \ge \dots \ge v_{im}, \forall i\in N$ and $\alpha$}
  \Output{$\alpha$-MMS Allocation}
  \BlankLine
  Normalize Valuations \tcp*{Scale valuations so that $\mu_i^n(M)=1, \forall i$}
  $(N,M,V) \gets$ Initial-Assignment$(N,M,V,\alpha) \label{line:init} $\tcp*{Algorithm~\ref{algo:initial}}
 
  Bag-Filling$(N,M,V,\alpha)$ \tcp*{Algorithm~\ref{algo:Bag-Filling-MAIN}}
\end{algorithm}

The algorithm is given in Algorithm~\ref{algo:first}. We use $\alpha=\frac{3}{4}$ in this section. Later, in Section~\ref{sec:last}, we use the same algorithm for $\alpha=\tf34 + \tfrac{1}{12n}$. Algorithm~\ref{algo:first} has two main parts: Initial Assignment and Bag Filling. We now describe each part separately in detail.

\subsection{Initial Assignment}\label{sec:initial-assignment}
We first assign \emph{high-value items} using Algorithm~\ref{algo:initial}. We note that handling high-value items is the biggest challenge in the MMS problem, e.g., a major part of Ghodsi et al. algorithm~\cite{ghodsi2017fair} is devoted to allocating high-value items. In our algorithms, we make it a simple process of greedy assignment by leveraging the pigeonhole principle to make valid reductions. We define bundles $S_1:=\{1\}$, $S_2:=\{n, n+1\}$ ($\emptyset$, if $m \le n$), $S_3:=\{2n-1, 2n, 2n+1\}$
($\emptyset$, if $m \le 2n$), and $S_4:=\{1, 2n+1\}$ ($\emptyset$, if $m \le 2n$), where $S_1$ has the highest value item in $M$, $S_2$ has the $n^{\text{th}}$ and $(n+1)^{\text{th}}$ highest valued items in $M$, and so on. We will show that allocating any of these bundles to an agent who values it at least $\tf34$ is a valid reduction. Note that these bundles change after every valid reduction.  

\begin{algorithm}[t!]
\caption{Initial-Assignment}
\label{algo:initial}
\DontPrintSemicolon
  \SetKwFunction{Define}{Define}
  \SetKwInOut{Input}{Input}\SetKwInOut{Output}{Output}
  \Input{Ordered Instance $\ins$, where $\mu_i^n(M) = 1, \forall i\in N$, and an approximation factor $\alpha$}
  \Output{Reduced Instance}
  \BlankLine
  For any $S\subseteq M$, define $\Gamma(S):= \{i \in N: v_i(S) \geq \alpha \}$\;
  $n \gets |N|$\tcp*{$n$ changes with $N$}
  $S_1 \gets \{1\};\  S_2 \gets \{n, n+1\};\ S_3 \gets \{2n-1, 2n, 2n+1\};\ S_4 \gets \{1, 2n+1\}$\tcp*{bundles that can be assigned}
   \BlankLine
     \While{$(\Gamma(S_1)\cup \Gamma(S_2) \cup \Gamma(S_3) \cup \Gamma(S_4)) \neq \emptyset$ \label{line:while}}{ 
     $S\gets$ the lowest index bundle in $\{S_1, S_2, S_3, S_4\}$ for which $ \Gamma(S) \neq \emptyset$\;\label{line:low}
     $i\gets$ an agent in $\Gamma(S)$\;
     Assign $S$ to agent $i$\tcp*{initial assignment}
     $M\gets M\setminus S;\ \  N\gets N\setminus\{i\}$\label{line:end}\;
     }
  \Return{$\langle N, M, V\rangle$}\;
\end{algorithm}

In Algorithm \ref{algo:initial}, we keep assigning the lowest index $S\in \{S_1,S_2,S_3,S_4\}$ to agent $i$, if any, for which $v_i(S)\geq \frac{3}{4}$. Then, we update $M$ and $N$ to reflect the current unallocated items and agents who have not been assigned with any bundle yet. 
The following lemma extends the ideas that appeared in~\cite{ghodsi2017fair,ourpaper}.

\begin{lemma}\label{lem:valid-reduction}
Let $S$ be the lowest index bundle in $S\in\{S_1,S_2,S_3,S_4\}$ for which $\Gamma(S):= \{i \in N: v_i(S) \geq \tf34 \}$ is non-empty. Then, removing $S$ and agent $i$ with $v_i(S) \ge \tf34$ is a valid reduction.
\end{lemma}

\begin{proof} Clearly, $v_i(S) \geq \frac{3}{4}$. Therefore, we only need to show the second condition in Definition~\ref{def:Valid-Reduction}. We show this separately for each case of $S\in\{S_1,S_2, S_3,S_4\}$. Fix agent $i' \in N \setminus \{i\}$ and $P\in \MP^n_{i'}(M)$ (Recall from Definition \ref{def:MMS-val} that $\MP^n_{i'}(M)$ denote the set of partitions achieving $\mu^n_{i'}(M)$). We show that after removing $S$, there exists a partition of $M \setminus S$ into $(n-1)$ bundles such that the value of each bundle is at least $\mu_{i'}^n(M)$, i.e., $\mu^{n-1}_{i'}(M\setminus S) \ge \mu_{i'}^{n}(M)$. 
\begin{itemize}
    \item \textbf{$S=S_1$. } Removal of one item from $P$ affects exactly one bundle and each of the remaining $(n-1)$ bundles has value at least $\mu_{i'}^{n}(M)$. Therefore, $\mu^{n-1}_{i'}(M\setminus \{1\}) \ge \mu_{i'}^{n}(M), \forall i'\in N \setminus \{i \}$.
    \item \textbf{$S=S_2$. } In $P$, there exists a bundle with two items from $\{1, \dots, n+1\}$ (pigeonhole principle). Let $T$ be a bundle in $P$ that has two items from $\{1, \dots,n+1\}$. Let us exchange these items with items $n$ and $n+1$ in other bundles and distribute any remaining items in $T$ among other bundles arbitrarily. Clearly, the value of other bundles except $T$ does not decrease, and hence $\mu^{n-1}_{i'}(M\setminus \{n,n+1\}) \ge \mu_{i'}^{n}(M), \forall i'\in N\setminus \{i \}$.
    \item \textbf{$S=S_3$. } Similar to the proof of Case $`S = S_2$'. 
    \item \textbf{$S=S_4$.} In each iteration, the lowest index bundle from $\{S_1,S_2,S_3,S_4 \}$ is picked. Therefore, $S_4$ is only picked when $v_i(S_1),v_i(S_3) < \tf34$ for all $i \in N$ which implies that $v_{i'1} < \tf34$ and $v_{i'(2n+1)} < \tfrac{1}{4}$ and hence $v_i(S_4) < 1$ for all $i\in N$.
    
    In $P$, if items $1$ and $2n+1$ are in the same bundle, then clearly, removing $S_4$ and agent $i$ is a valid reduction. For the other case, if $1$ and $2n+1$ are in two different bundles, then we can make two new bundles, one with $\{1,2n+1\}$ and another with all the remaining items of the two bundles. The value of the bundle without $\{1,2n+1\}$ is at least $2\mu_i -1 \ge  \mu_i$ because $v_i(S_4) < 1$ for all $i\in N$ and $\mu_i \ge 1$. 
   Hence, this is a valid reduction. \qedhere
\end{itemize}
\end{proof}

Lemma~\ref{lem:valid-reduction} implies the following corollary. 

\begin{corollary}\label{cor:mu>1}
After the execution of Algorithm \ref{algo:initial} is completed, $\mu_i \ge 1$ for all $i \in N$.
\end{corollary}

Note that the original MMS value is 1 for each agent. Although it may increase after a valid reduction, 
we only need to give each agent a bundle of value at least $\tf34$ to achieve a $\tf34$-MMS allocation for the original instance. For this reason, we are interested in partitions where the value of each bundle is at least 1 for an agent. Therefore, we abuse notation to denote by $\MP^n_i(M)$ the set of partitions achieving the original MMS value, i.e., 
\begin{equation}\label{eq:MMSp1}
 \MP^n_i(M) = \{  P\in \Pi_n(M): \min_{P_k \in P} v_i(P_k) \ge 1\}\enspace .
\end{equation}
Observe that $\MP^n_i(M)$ contains MMS partitions of agent $i$ for the current $M$ and $n$ as well as the partitions that do not achieve the current $\mu_i$, but the value of each bundle is at least 1. 
\subsection{Bag Filling}\label{sec:Bag-Filling-Known}
\begin{algorithm}[t!]
\caption{Bag Filling Algorithm for $\alpha$-MMS Allocation}\label{algo:Bag-Filling-MAIN}
  \SetKwFunction{Define}{Define}
  \SetKwInOut{Input}{Input}\SetKwInOut{Output}{Output}
  \Input{Reduced Instance $\ins$ with $\mu^n_i =1 \ \forall i$ and an approximation factor $\alpha$}
  \Output{Allocation $A=(A_1, \dots, A_n)$ where $v_i(A_i) \geq \alpha$}
  \BlankLine
  \DontPrintSemicolon
  Initialize Bags $B=\{B_k\}_{k\in [n]}$ as in~\eqref{eq:B_i}\tcp*{see Figure~\ref{fig:bags}}
  $R \gets M \setminus J$\;
  \For{$k$ = 1 to $n$}{
  $T \gets B_k$ \label{line:four}\;
  Define $\Gamma(T) := \{i\in N\ : \ v_i(T) \geq \alpha\} $ \label{line:already} \;
  \While{$\Gamma(T)=\emptyset$\label{line:less}} {
  	$j \in R$ \tcp*{pick one low value item arbitrarily}
  	$T \gets T\cup \{j\}$;\ \ $R \gets R \setminus \{j\}$\tcp*{add the new item to bag}}
  	$ i\in \Gamma(T);\ A_i \gets T$;\ \ $N\gets N\setminus \{i\}$\tcp*{assign $T$ to $i\in \Gamma(T)$}}
\end{algorithm}

We use the bag filling procedure given in Algorithm~\ref{algo:Bag-Filling-MAIN} to satisfy the remaining agents. Let $J_1 := \{1,\dots,n\}$ denote the set of first $n$ items. Similarly, let us define $J_2:=\{n+1,\dots,2n\}$ and $J:=J_1 \cup J_2$. We call $J$ to be the set of high-value items. The following corollary is straightforward.

\begin{corollary}\label{cor:Bound-on-Values}
If $v_i(S)< \frac{3}{4}$, for all $i$ and for all $S \in\{S_1,S_2,S_3\}$, then $(i)$ $v_{ij} < \frac{3}{4},\ \forall j \in J_1$,\ $(ii)$ $v_{ij} < \frac{3}{8},\ \forall j \in J_2$, and $v_{in} <\frac{3}{4} - v_{i(n+1)}$, and $(iii)$ $v_{ij} < \frac{1}{4},\ \forall j \in M\setminus J$, for all $i$.
\end{corollary}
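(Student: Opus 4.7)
The plan is to derive each of the three bounds directly from the hypothesis $v_i(S) < \tfrac{3}{4}$ for $S \in \{S_1, S_2, S_3\}$, using only the fact that the instance is ordered (so $v_{i1} \geq v_{i2} \geq \cdots \geq v_{im}$ for every agent $i$). The statement is really a collection of elementary monotonicity arguments packaged together, so I do not expect any obstacle; the only thing to be careful about is matching indices to the correct defining bundle.

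For part (i), I unpack $S_1 = \{1\}$: the hypothesis gives $v_{i1} < \tfrac{3}{4}$, and since the valuations are ordered, $v_{ij} \leq v_{i1} < \tfrac{3}{4}$ for every $j \in J_1 = \{1,\dots,n\}$. For the first half of part (ii), I use $S_2 = \{n, n+1\}$, which yields $v_{in} + v_{i(n+1)} < \tfrac{3}{4}$. Because the instance is ordered, $v_{i(n+1)} \leq v_{in}$, so $2v_{i(n+1)} \leq v_{in} + v_{i(n+1)} < \tfrac{3}{4}$, giving $v_{i(n+1)} < \tfrac{3}{8}$; monotonicity then extends this bound to every $j \in J_2 = \{n+1,\dots,2n\}$. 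The auxiliary bound $v_{in} < \tfrac{3}{4} - v_{i(n+1)}$ is just a rearrangement of $v_{in} + v_{i(n+1)} < \tfrac{3}{4}$.

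For part (iii), I apply the hypothesis to $S_3 = \{2n-1, 2n, 2n+1\}$ to obtain $v_{i(2n-1)} + v_{i(2n)} + v_{i(2n+1)} < \tfrac{3}{4}$. Since $v_{i(2n+1)}$ is the smallest of the three by the ordered property, $3 v_{i(2n+1)} < \tfrac{3}{4}$, hence $v_{i(2n+1)} < \tfrac{1}{4}$. Monotonicity again extends this to every $j \geq 2n+1$, i.e., every $j \in M \setminus J$, completing the claim. Note that the bundle $S_4$ is not needed anywhere in this corollary; it played its role inside Lemma~\ref{lem:valid-reduction} to justify the validity of the reduction, but does not contribute to the numerical bounds stated here.
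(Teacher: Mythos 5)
Your proof is correct and is exactly the elementary monotonicity argument the paper has in mind when it calls the corollary "straightforward" and omits the proof. Each bound falls out of the ordered property applied to the corresponding bundle $S_1$, $S_2$, $S_3$, and your observation that $S_4$ is irrelevant here is accurate.
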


Next, we initialize $n$ bags as follows:  
\begin{equation}\label{eq:B_i}
B=\{B_1,B_2,\dots,B_n\}, \text{ where } \ B_k = \{k, 2n-k+1\}, \forall k \enspace .
 \end{equation}
 
Each bag contains one item from $J_1$ and one item from $J_2$ such that from $B_1$ to $B_n$ value of items from $J_1$ decreases and value of items from $J_2$ increases (see Figure \ref{fig:bags} for an illustration). 

 \begin{figure}[h]
     \centering
     \includegraphics[width=0.8\textwidth]{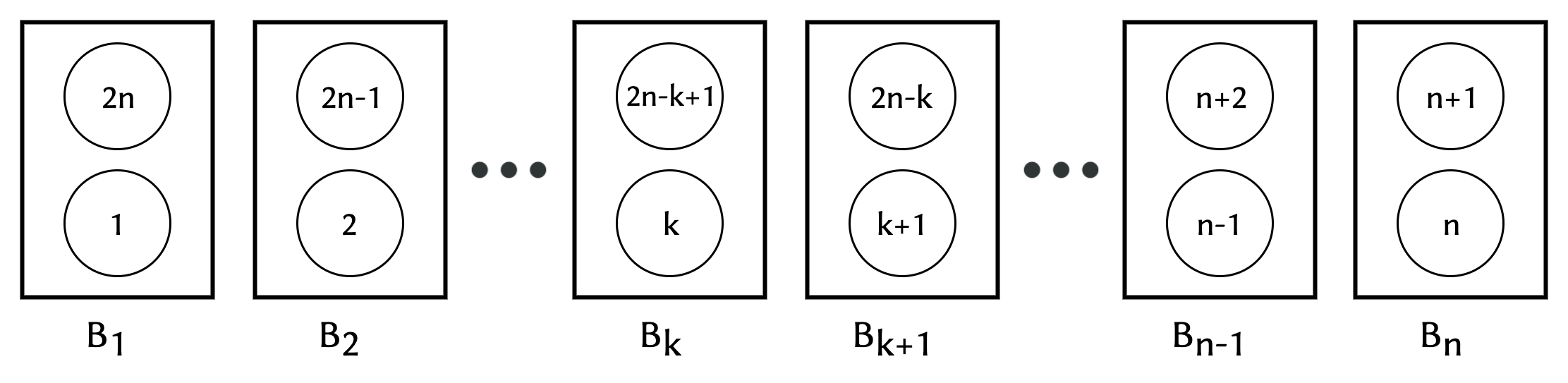}
     \caption{Setting of the items in the bags}
     \label{fig:bags}
 \end{figure}
 
Algorithm~\ref{algo:Bag-Filling-MAIN} has $n$ rounds. In each round $k$, it starts a new bundle $T$ with $T \gets B_k$. If there is an agent who values $T$ to be at least $\tf34$, then assign $T$ to such an agent. Otherwise, keep adding items from $M\setminus J$ to $T$ one by one until an agent with no bundle assigned to her values $T$ at least $\tf34$. The algorithm allocates $T$ to that agent, and if there are multiple such agents, it chooses one arbitrarily. 

For correctness, we need to show that there are enough items in $M\setminus J$ to add on top of each bag in~\eqref{eq:B_i} so that each agent gets a bundle that they value at least $\tf34$. For this, we first divide agents into two types: 
\[N^1:=\{i\in N\ |\ v_i(B_k) \le 1, \forall k\} \text{ and } N^2 := N\setminus N^1\enspace .\] 

If $N^2$ is empty, then it is easy to check that using Corollary \ref{cor:Bound-on-Values}(iii) and the ideas in Section~\ref{sec:bagf} that Algorithm~\ref{algo:Bag-Filling-MAIN} gives each agent at least $\tf34$. We need some more notation to show correctness when $N^2$ is not empty.

For an agent $i\in N^2$, define
\begin{equation}\label{eqn:LKx1}
\begin{aligned}
    L_i := & \ \{B_k: v_i(B_k) < \tf34 \}; \ \ \ \ l_i:=|L_i| \\ 
    H_i := & \ \{B_k: v_i(B_k) > 1\}; \ \ \ \ h_i := |H_i|\\
    x_i := & \ (\tf34)l_i - \sum_{k:B_k \in L_i} v_i(B_k). 
\end{aligned}
\end{equation}
In words, $L_i$ is the set of bags that $i$ values strictly less than $\tf34$, $H_i$ is the set of bags that $i$ values strictly more than $1$, and $l_i$ and $h_i$ are the number of bags in $L_i$ and $H_i$ respectively. Further, $x_i$ is the least total value needed to make all bundles in $L_i$ at least $\tf34$. Agents in $N^2$ have nice properties that we show in the following lemma.

\begin{lemma}\label{lem:three}
For an agent $i\in N^2$, $(i)$ $l_i >0$ and $h_i > 0$, $(ii)$ $v_{i1} > \tfrac{5}{8}$, $(iii)$ $v_i(B_k) <\tfrac{9}{8}, \forall k$, and $(iv)$ $v_{ij} < \tf18, \forall j \in M\setminus J$.
\end{lemma}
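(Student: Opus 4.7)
The plan is to carry out the four claims in an order that respects the fact that~(iv) builds on~(ii); for the rest, each is a short direct consequence of the two hypotheses at our disposal, namely that agent $i$ refused every bundle $S_1,S_2,S_3,S_4$ during Initial-Assignment (Corollary~\ref{cor:Bound-on-Values}, plus the analogous statement for $S_4$ that is not packaged into the corollary) and that $i \in N^2$, i.e., there exists some $B_{k^*}$ with $v_i(B_{k^*})>1$.

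I would dispose of the structural claims~(i) and~(iii) first. For~(i), $k_i>0$ is immediate from $i\in N^2$; and since $B_n=\{n,n+1\}$ coincides with $S_2$, the fact that $i$ refused $S_2$ gives $v_i(B_n)=v_{in}+v_{i(n+1)}<\tfrac{3}{4}$, so $B_n\in L_i$ and $l_i>0$. For~(iii), observe that by construction every bag $B_k=\{k,2n-k+1\}$ pairs an item $k\in J_1$ with an item $2n-k+1\in J_2$; Corollary~\ref{cor:Bound-on-Values}(i) bounds the first contribution by $\tfrac{3}{4}$ and part~(ii) of the same corollary bounds the second by $\tfrac{3}{8}$, summing to the desired $\tfrac{9}{8}$.

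Part~(ii) then follows by running this same reasoning in reverse on the witness bag $B_{k^*}$: since $v_i(B_{k^*})>1$ but its $J_2$-item contributes less than $\tfrac{3}{8}$, its $J_1$-item $k^*$ must contribute more than $\tfrac{5}{8}$, and the ordered-instance hypothesis gives $v_{i1}\geq v_{ik^*}>\tfrac{5}{8}$.

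The main obstacle — and the only place where the quantitative strengthening genuinely matters — is part~(iv). Here I would invoke the additional fact that $i$ also refused $S_4=\{1,2n+1\}$, so $v_{i1}+v_{i(2n+1)}<\tfrac{3}{4}$; combined with~(ii) this forces $v_{i(2n+1)}<\tfrac{3}{4}-\tfrac{5}{8}=\tfrac{1}{8}$, and the ordered-instance hypothesis then propagates this bound to every $j\in M\setminus J$. The subtlety to watch is that~(iv) cannot be proved in isolation: it must be placed after~(ii), and it is precisely the $S_4$-refusal (which is \emph{not} captured by Corollary~\ref{cor:Bound-on-Values}, whose $M\setminus J$ bound is only $\tfrac{1}{4}$) that tightens the conclusion from $\tfrac{1}{4}$ to $\tfrac{1}{8}$, using membership in $N^2$ in an essential way through~(ii).
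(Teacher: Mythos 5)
Your proposal is correct and follows essentially the same reasoning as the paper: each part is derived from Corollary~\ref{cor:Bound-on-Values} together with the fact that every remaining agent also rejected $S_4$, and you correctly identify that the key step for $(iv)$ is combining the $S_4$-rejection $v_{i1}+v_{i(2n+1)}<\tfrac{3}{4}$ with the lower bound $v_{i1}>\tfrac{5}{8}$ from $(ii)$. The only difference is the cosmetic reordering of $(ii)$ and $(iii)$; both presentations yield the same proof.
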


\begin{proof}
For the first part, if $h_i = 0$ then $i\in N^1$. Further, $l_i= 0$ implies that $v_i(B_n)\geq \tf34$, which cannot be true after the execution of Algorithm \ref{algo:initial} due to Line \ref{line:while}--\ref{line:end} of the algorithm.

For the second part, $v_{ij} < \tfrac{3}{8}, \forall j\in J_2$ due to Corollary \ref{cor:Bound-on-Values}. Since $h_i>0$, there exists a $j\in J_1$ such that $v_{ij} > \tfrac{5}{8}$. Further, since item $1$ is the highest value item for every agent, $v_{i1}>\tfrac{5}{8}$.

For the third part, each item of $J_1$ has value less than $\tf34$ and each item of $J_2$ has value less than $\tfrac{3}{8}$ (Corollary \ref{cor:Bound-on-Values}) for any agent. Therefore, $v_i(B_k) <\tfrac{9}{8}$ for any bundle $B_k$. 

For the fourth part, the value of item 1 for an agent $i\in N^2$ is more than $\tfrac{5}{8}$ and $v_i(S_4) < \tf34$, hence $v_{ij} \leq v_{i(2n+1)} < \tf18, \forall j\in M\setminus J$. 
\end{proof}
 
In the following lemma, we show that if the total value of the items in $M \setminus J$ for each agent $i$ in $N^2$ is at least $x_i + \tfrac{l_i}{8} - \tf18$, i.e., 
\[v_i(M\setminus J) \ge x_i + \tfrac{l_i}{8} -\tf18\enspace,\] 
then the bag filling algorithm will assign every agent (in $N^1$ and $N^2$) a bundle with value at least $\tf34$. In the rest of this section, we show that the bound on the value of $M \setminus J$ actually holds by using the fact that $\mu_i \ge 1$ for all $i \in N$.

\begin{lemma}\label{lem:bag-filling-terminates}
If $v_i(M\setminus J) \ge x_i + \tfrac{l_i}{8} -\tf18, \forall i\in N^2$, then Algorithm~\ref{algo:Bag-Filling-MAIN} gives every agent a bundle that they value at least $\tf34$.
\end{lemma}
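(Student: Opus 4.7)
The plan is a proof by contradiction: suppose the bag-filling procedure stalls, i.e., at some round $k$ the pool $L = M\setminus J$ is exhausted while the current bag $T$ still has $v_i(T) < \tfrac{3}{4}$ for every remaining agent. I fix such a remaining $i$ and derive a contradiction. The argument branches into two cases, $i \in N^1$ and $i \in N^2$, because the hypothesis of the lemma only bounds $v_i(L)$ in the latter.

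First I would account for where the items of $L$ ended up. Because $L$ is fully distributed among $A_1, \ldots, A_{k-1}$ and $T$,
\[
v_i(L) \;=\; \sum_{j \in A}\bigl(v_i(A_j) - v_i(B_j)\bigr) \;+\; \bigl(v_i(T) - v_i(B_k)\bigr),
\]
where $A \subseteq \{1, \ldots, k-1\}$ indexes the rounds in which items from $L$ were added. The key structural fact is that for $j \in A$ the initial bag $B_j$ must satisfy $v_i(B_j) < \tfrac{3}{4}$ — otherwise $i \in N(B_j)$ at the start of round $j$ and no items would have been added — and likewise $v_i(B_k) \le v_i(T) < \tfrac{3}{4}$; hence $S := \{B_j : j\in A\} \cup \{B_k\}$ lies inside $L_i$ with $|S| \le l_i$.

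Next I would bound each summand using the stopping rule of Algorithm~\ref{algo:Bag-Filling-MAIN}. For each $j \in A$, just before the last item was appended no agent valued the prefix at $\tfrac{3}{4}$, and the last item is drawn from $L$; by Lemma~\ref{lem:three}(iv) its $v_i$-value is $< \tfrac{1}{8}$ when $i \in N^2$, and by Corollary~\ref{cor:Bound-on-Values}(iii) it is $< \tfrac{1}{4}$ when $i \in N^1$. Setting $c := \tfrac{7}{8}$ in the first case and $c := 1$ in the second, this yields $v_i(A_j) < c$, while for the stalled bag I only use $v_i(T) < \tfrac{3}{4}$ — producing an extra $(c - \tfrac{3}{4})$ of slack, which turns out to be the crucial quantity. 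Summing and relaxing $S \subseteq L_i$ (each term $c - v_i(B_l)$ is positive since $v_i(B_l) < \tfrac{3}{4} < c$) gives
\[
v_i(L) \;<\; c\,l_i \,-\, V_i^{L_i} \,-\, (c - \tfrac{3}{4}),
\]
where $V_i^{L_i} := \sum_{B_l \in L_i} v_i(B_l)$.

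Finally I would close the two cases. For $i \in N^2$, $c = \tfrac{7}{8}$ makes the right-hand side exactly $x_i + l_i/8 - 1/8$, directly contradicting the hypothesis $v_i(L) \ge x_i + l_i/8 - 1/8$. For $i \in N^1$ the hypothesis is silent, so I would derive the lower bound $v_i(L) \ge l_i - V_i^{L_i}$ by combining Lemma~\ref{lem:avg} ($v_i(M) \ge n$) with the $N^1$ cap $v_i(B_k) \le 1$ on the $n - l_i$ bags outside $L_i$; this strictly exceeds the upper bound $l_i - V_i^{L_i} - \tfrac{1}{4}$. With stalling ruled out, each round terminates by assigning $T$ to an agent who values it at least $\tfrac{3}{4}$, so all $n$ agents are satisfied. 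The main obstacle I anticipate is precisely the $N^1$ case: with no direct hypothesis, one has to recognize that the slack $(c - \tfrac{3}{4})$ obtained from stopping $T$ early — rather than filling it to the overshoot threshold $c$ — is exactly what the simple average-value lower bound can absorb.
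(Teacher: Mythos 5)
Your proposal is correct and follows essentially the same route as the paper's proof: a contradiction argument that bounds each completed bag by $c=1$ (for $i\in N^1$) or $c=7/8$ (for $i\in N^2$, using Lemma~\ref{lem:three}(iv)), notices that the stalled bag contributes only up to $3/4$, and then contrasts this with the average lower bound $v_i(M)\ge n$ or with the hypothesis $v_i(M\setminus J)\ge x_i+l_i/8-1/8$ respectively. The only difference is presentational — you unify the two cases via the parameter $c$ and explicitly isolate the $(c-3/4)$ slack from the stalled round, whereas the paper handles the two cases with separate but equivalent bookkeeping.
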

\begin{proof}
This is proof by contradiction. Note that, in Algorithm \ref{algo:Bag-Filling-MAIN}, $R$ is the set of unallocated items from $M \setminus J$ and $T$ is the bag that is being filled at a time. Let $\Rk$ and $\Tk$ be respectively $R$ in the beginning and $T$ at the end of round $k$ of the algorithm, i.e., $R^{(1)}=M \setminus J$ and $T^{(1)} \supseteq B_1$.  For contradiction, suppose the algorithm stops at round $t$ because there are not enough unallocated items in $\Rt$ to satisfy any remaining agent $i$, i.e., $v_i(B_t \cup \Rt)< \frac{3}{4}$. 

If $i\in N^1$, each removed bundle in rounds $k \in [t-1]$, has value of at most $1$ for agent $i$. Because, if $v_i(B_k) \ge \tf34$ for $k \in [t-1]$, agent $i$ is already interested in $B_k$ ($i \in \Gamma(B_k)$) and the algorithm does not enter the while loop in Line \ref{line:less}. Therefore, if $v_i(B_k) \ge \tf34$ no more item has been added to $\Tk= B_k$. Also, if $v_i(B_k) < \tf34$ for $k \in [t-1]$, before adding the last item (if any) to $\Tk$, the value of $\Tk$ is less than $\tf34$ (otherwise, it would be out of loop and allocated to someone). Moreover, from Corollary \ref{cor:Bound-on-Values}, $v_{ij} < \tfrac{1}{4}$ for $j \in \Rk$ (since $\Rk \subseteq M \setminus J$). Therefore, at the end of the round $k$, $v_i(\Tk) < 1$. Further, since $v_i(M)\ge n$ and $v_i(B_k) \le 1, \forall k$, we have $v_i(B_t \cup \Rt) \ge 1$, which is a contradiction. 

If $i\in N^2$, then since at round $t$, $v_i(\Tt) < \tf34$, we have $B_t \in L_i$ (See \eqref{eqn:LKx1} for the definition of $L_i$). Consider a round $k \in [t-1]$. If $B_k\not\in L_i$, then $\Tk= B_k$ has been assigned to someone with no additional items added to $\Tk$ from $\Rk$ because $i \in \Gamma(T)$ and the algorithm does not enter the while loop in Line~\ref{line:less}. If $B_k \in L_i$, then in round $k$, before adding the last item (if any) to $\Tk$, the value of $i$ for $\Tk$ is less that $\tf34$. Moreover, from Lemma \ref{lem:three}, each item in $\Rk$ has value of at most $\tf18$. Therefore, if $B_k \in L_i$, the value of the assigned bag for $i$ in round $k$ is less than $\tfrac{7}{8}$. Since $B_t\in L_i$, at most $l_i-1$ bags from $L_i$ have been assigned up to $t-1$ iterations. Further, since items from $M\setminus J$ are added to bags in $L_i$ only, the total value taken from $M\setminus J$ up to $t-1$ iterations, according to agent $i$, is at most $x_i - (\tf34-B_t) + (l_i-1)/8$ where $x_i - (3/4-B_t)$ to make each of $L_i\setminus B_t$ exactly $\tf34$ (See \eqref{eqn:LKx1} for the definition of $x_i$) and $(l_i-1)/8$ to add an extra $\tf18$ to each. 
Hence, in the beginning of round $t$, 
\begin{equation}\label{eq:bN2}
    v_i(\Rt) \ge \left(x_i+ \tfrac{l_i}{8} - \tf18 \right) - \left(x_i- (\tf34 - v_i(B_t)) +(l_i-1)/8 \right) = \tf34 - v_i(B_t) \ ,
\end{equation} 
which is a contradiction. 
\end{proof}

Now, we only need to show that for each $i\in N^2$, we have
\begin{equation}\label{eqn:mb1}
     v_i(M\setminus J) \ge x_i + \tfrac{l_i}{8} -\tf18 \enspace .
\end{equation} 

\subsubsection{Showing~\eqref{eqn:mb1}}
In fact, we will show a stronger bound without $-\tf18$ in Theorem~\ref{thm:fb}. We will use the extra $-\tf18$ to improve the bound in Section~\ref{sec:last}. We start with a few lemmas to show more properties of agents in $N^2$, in addition to previous properties shown in Lemma \ref{lem:three}. Recall from \eqref{eq:MMSp1} that $\MP^n_{i}(M)$ denote the set of partitions where the value of each bundle is at least $1$. 

\begin{lemma}\label{lem:Giver-Bundle}
For an agent $i\in N^2$, there exists a bundle $P_k$ in every partition $P =\{P_1, \dots, P_n\} \in \MP^{n}_{i}(M)$ such that $v_i(P_k \setminus J) >\tfrac{1}{4}$.
\end{lemma}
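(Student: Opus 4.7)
The plan is to argue by contradiction: suppose every bundle $P_k$ satisfies $v_i(P_k \setminus J) \leq \tfrac{1}{4}$. Since $P \in \MP^n_i(M)$ and valuations have been normalized so that $\mu_i^n(M) = 1$, we have $v_i(P_k) \geq 1$ for every $k$, and therefore $v_i(P_k \cap J) \geq \tfrac{3}{4}$ for every $k$. The goal is to use this together with the item-value ceilings of Corollary~\ref{cor:Bound-on-Values} to expose enough structure in $P$ to reach a contradiction with $v_i(S_2) < \tfrac{3}{4}$.

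The first substantive step is to pin down how the items of $J$ are distributed among the $n$ bundles. By Corollary~\ref{cor:Bound-on-Values} we have $v_{ij} < \tfrac{3}{4}$ for $j \in J_1$ and $v_{ij} < \tfrac{3}{8}$ for $j \in J_2$, so no single item and no pair of items drawn entirely from $J_2$ can reach the threshold $\tfrac{3}{4}$. Hence each $P_k$ must contain at least two items of $J$, at least one of which lies in $J_1$. Since $|J| = 2n$, $|J_1| = n$, and there are exactly $n$ bundles, a short counting argument forces $|P_k \cap J_1| = |P_k \cap J_2| = 1$ for every $k$; in other words the MMS partition bijectively couples the items of $J_1$ with those of $J_2$.

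The rest is then short. Focus on the bundle $P_a$ that contains item $n$, the smallest-value element of $J_1$, and write $P_a \cap J_2 = \{j\}$. Since $j \geq n+1$, the ordered-instance assumption gives $v_{ij} \leq v_{i(n+1)}$. The standing hypothesis $v_i(P_a \setminus J) \leq \tfrac{1}{4}$ gives $v_{in} + v_{ij} = v_i(P_a \cap J) \geq \tfrac{3}{4}$, and hence
\[
v_{in} + v_{i(n+1)} \;\geq\; v_{in} + v_{ij} \;\geq\; \tfrac{3}{4}.
\]
This contradicts $v_i(S_2) = v_{in} + v_{i(n+1)} < \tfrac{3}{4}$, which holds by Corollary~\ref{cor:Bound-on-Values} since $i$ survived the initial assignment stage. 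The only delicate step is the $(1,1)$-split structural claim, which needs both $J_1$- and $J_2$-ceilings simultaneously; once the pairing structure is exposed, the ordering of valuations (plus the single bundle containing item $n$) closes the argument immediately.
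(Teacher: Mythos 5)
Your proof is correct and follows essentially the same approach as the paper. Both arguments reduce to the case in which every bundle holds exactly two items of $J$ and then locate a bundle whose pair sums to strictly less than $\tfrac{3}{4}$; you frame it as a contradiction and explicitly pin down the $(1,1)$-split across $J_1$ and $J_2$ before inspecting the bundle containing item $n$, while the paper does a direct case split (one-item bundle vs.\ all-two-item bundles) and finds the relevant bundle by a pigeonhole over $J_1\setminus\{n\}$.
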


\begin{proof}
If there exists a bundle $P_k$ with exactly one item $j$ from $J$, then $v_i(P_k \setminus J) \ge 1 - v_{ij}$ $> \tfrac{1}{4}$ because the value of every item is less than $\tf34$. Otherwise, each bundle has exactly two items from $J$, which implies that one of the bundles, say $P_{k}$, has two items $j_1,j_2$ from the set $\{n\} \cup J_2$. Since $v_{ij_1} + v_{ij_2} \leq v_{in} + v_{i(n+1)} <\tf34 $, $v_i(P_{k}\setminus J) >\tfrac{1}{4}$.
\end{proof}

Next, we show that there exists a partition in $\MP^n_i(M)$ for $i \in N^2$ where all items with value more than $\tfrac{5}{8}$ are in separate bundles. The intuition of this proof is that the bundle which has two items of value greater than $\tfrac{5}{8}$ can be merged with another bundle (or possibly two other bundles) and make two (or three) bundles each value at least 1. This basically utilizes the extra $\tfrac{2}{8}$ value in the first bundle to reshuffle items to obtain the desired partition. We begin with the following claim.

\begin{claim}\label{clm:1}
If there exists a partition $P = \{P_1, \dots, P_n\} \in \MP^{n}_{i}(M)$ for an agent $i\in N^2$ where a bundle $P_k\in P$ contains two items with value more than $\tfrac{5}{8}$ for agent $i$. Then, 
\begin{enumerate}
\item there exists another bundle $P_{k'} \in P$ for which $\max_{j \in P_{k'}} v_{ij} < \tfrac{3}{8}$. 
\item if $v_i(P_k \setminus J) > \tfrac{1}{4} $, then we can make two new bundles from the items in $P_k \cup P_{k'}$ where each bundle has one item with value more than $\tfrac{5}{8}$ and each bundle values at least 1.
\item if $P_{k'}$ cannot be divided into two parts with value at least $\tfrac{3}{8}$ each. Then, $v_i(P_{k'} \setminus J) < \tfrac{1}{4}$. 
\end{enumerate}
\end{claim}
\begin{proof}
Corollary~\ref{cor:Bound-on-Values} implies that the only items that can have value more than $\tfrac{5}{8}$ are items in $J_1 = \{1,\dots,n \}$. Therefore, $P_k$ has two items from $J_1$. By the pigeonhole principle, there must exist another bundle $P_{k'}$ where all items are from $\{n+1,\dots,m \}$. Corollary~\ref{cor:Bound-on-Values} implies that each item in $P_{k'}$ has value less than $\tfrac{3}{8}$. This proves the first part. 

For the second part, if $v_i(P_k \setminus J) > \tfrac{1}{4} $ we have $v_i(P_k) > \tfrac{5}{8} + \tfrac{5}{8} + \tfrac{1}{4} = \tfrac{3}{2}$. We make two bundles by initializing two empty bags and adding items from $P_k \cup P_{k'}$ one by one in decreasing order of value to the bundle which has the lower value. We will get new bundles with values $v_1$ and $v_2$ where $v_1 + v_2 > \tfrac{5}{2}$ and $|v_1 - v_2| < \tfrac{3}{8}$. The second inequality holds because each item in $P_{k'}$ and $P_k \setminus J$ has value less than $\tfrac{3}{8}$. This proves the second part. 

For the third part, assume $v_i(P_{k’} \setminus J) \ge \tfrac{1}{4}$ for a contradiction. We initialize two empty bundles and add items from $P_{k’}$ one by one in decreasing order of value to the bundle with lower value. We get two new bundles with value $v_1$ and $v_2$ where $v_1 + v_2 = v_i(P_{k’}) \ge 1 $ and $|v_1 - v_2| < \tf18$, which implies $v_1,v_2 \ge \tfrac{3}{8}$. The second inequality holds because each item in $P_{k’}\setminus J$ has value less than $\tf18$.
\end{proof}

\begin{lemma}\label{lem:sp}
For every $i$ $\in N^2$, there exists a partition $P = \{P_1, \dots, P_n\} \in \MP^{n}_{i}(M)$ such that each $P_k\in P$ has at most one item $j$ with $v_{ij} > \tfrac{5}{8}$. 
\end{lemma}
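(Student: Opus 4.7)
My plan is to prove this by a swap argument among MMS-optimal partitions of $M$ for agent $i$. By Corollary~\ref{cor:Bound-on-Values}, any item $j$ with $v_{ij} > \tfrac{5}{8}$ must lie in $J_1 = \{1,\dots,n\}$, since items in $J_2$ have value below $\tfrac{3}{8}$ and items outside $J$ have value below $\tfrac{1}{4}$. Hence the set of \emph{big} items (those with value exceeding $\tfrac{5}{8}$) has cardinality at most $n$. Starting from an arbitrary $P \in \MP^n_i(M)$, if some bundle $P_k$ contains two big items $j_1, j_2$ with $v_{ij_1} \geq v_{ij_2}$, the pigeonhole principle gives a bundle $P_\ell$ that contains no big item.

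I will then modify $P$ by moving $j_2$ from $P_k$ to $P_\ell$ and compensating with some subset $X$ of the non-big items of $P_\ell$ sent back to $P_k$, obtaining $P_k' = (P_k \setminus \{j_2\}) \cup X$ and $P_\ell' = (P_\ell \setminus X) \cup \{j_2\}$ while leaving every other bundle untouched. For the resulting $P'$ to remain MMS-optimal, $v_i(X)$ must lie in the interval $I := [\,1 - v_i(P_k) + v_{ij_2},\ v_i(P_\ell) + v_{ij_2} - 1\,]$. Since $v_i(P_k) + v_i(P_\ell) \geq 2$ and in fact $v_i(P_k) > \tfrac{5}{4}$ (two items each exceeding $\tfrac{5}{8}$), the interval is nonempty with length strictly greater than $\tfrac{1}{4}$; moreover its lower endpoint is less than $\tfrac{3}{8}$ and its upper endpoint greater than $\tfrac{5}{8}$. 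A successful swap strictly decreases the number of pairs of big items sharing a bundle, so iterating terminates at a partition with the desired property.

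The main technical step—and the main obstacle—is to show that such an $X \subseteq P_\ell$ always exists. My plan is a case split. If $P_\ell$ contains an item $y$ with $v_{iy} \in I$ (which is automatic whenever some item of $P_\ell$ lies in $[\tfrac{3}{8}, \tfrac{5}{8}]$), then $X = \{y\}$ suffices. Otherwise, since non-big items have value at most $\tfrac{5}{8}$, which is smaller than the upper endpoint of $I$, every item of $P_\ell$ must lie below the lower endpoint of $I$, hence below $\tfrac{3}{8}$. In this small-item regime I will apply a greedy subset-sum argument: add items of $P_\ell$ one at a time until the running sum first reaches the lower endpoint of $I$, so that $v_i(X)$ exceeds the lower endpoint by less than one item value ($<\tfrac{3}{8}$). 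The delicate part is ensuring $v_i(X)$ does not overshoot the upper endpoint of $I$; I expect to use the strict slack $v_i(P_k) > \tfrac{5}{4}$ together with the fact that items outside $J$ have value below $\tfrac{1}{4}$ (Corollary~\ref{cor:Bound-on-Values}) to obtain the finer subset-sum granularity needed in tight configurations where $v_i(P_\ell)$ is close to $1$ and the items of $P_\ell$ cluster just below $\tfrac{3}{8}$.
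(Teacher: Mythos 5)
Your two-bundle swap argument has a genuine gap at exactly the step you flag as delicate: there are configurations where no subset $X\subseteq P_\ell$ has $v_i(X)\in I$, so the swap cannot be performed at all. Take $P_k=\{j_1,j_2\}$ with $v_{ij_1}=v_{ij_2}=0.63$ (so $v_i(P_k)=1.26>\tfrac54$, both items big) and $P_\ell=\{a,b,c\}$ with $v_{ia}=v_{ib}=v_{ic}=\tfrac13$ (so $v_i(P_\ell)=1$, no big items, all values below $\tfrac38$, consistent with Corollary~\ref{cor:Bound-on-Values}). Then $I=[0.37,\,0.63]$, every item of $P_\ell$ lies below the lower endpoint (your Case~2), and the subset sums realizable from $P_\ell$ are $\{0,\tfrac13,\tfrac23,1\}$ --- none of which lies in $I$. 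Greedy overshoots to $\tfrac23>0.63$. The slack $v_i(P_k)>\tfrac54$ is essentially tight here, $P_\ell$ contains no item from $M\setminus J$ to give finer granularity, and moving $j_1$ instead of $j_2$ yields the same interval by symmetry; the two-bundle swap is simply infeasible, not merely hard to analyze.

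The paper circumvents this by involving a third bundle. It likewise locates a bundle $P_{k'}$ (via pigeonhole over $J_1$) all of whose items have value below $\tfrac38$, and first tries to split $P_{k'}$ into two parts of value $\ge\tfrac38$ to pair with $j_1$ and $j_2$. When no such split exists --- exactly the regime where your greedy overshoots --- it instead splits $P_{k'}$ into three pieces of value $<\tfrac38$, invokes Lemma~\ref{lem:Giver-Bundle} to obtain a third bundle $P_{\hat k}$ with $v_i(P_{\hat k}\setminus J)>\tfrac14$, seeds three new bags from $\{j_1\}$, $\{j_2\}$, $P_{\hat k}\cap J$ together with the three pieces of $P_{k'}$, and then greedily rebalances the remaining small items of $P_k\cup P_{k'}\cup P_{\hat k}$ across the three bags. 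The extra bag plus the low-value mass coming from $P_{\hat k}\setminus J$ supply the granularity your two-bundle argument lacks; you will not be able to complete your proof without a three-bundle step of this kind.
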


\begin{proof}
If there are more than two items with value more than $\tfrac{5}{8}$ in a bundle $P_k$ of $P\in \MP^{n}_{i}(M)$, then we add one of these items to $P_{k'}$, defined in Claim \ref{clm:1}. This will ensure that the value of both $P_k$ and $P_{k'}$ is at least $1$. By repeating this, we can obtain a $P\in \MP^{n}_{i}(M)$ that has at most two items with value more than $\tfrac{5}{8}$.

Next, we show that if there are two items $j_1, j_2$ each with value more than $\tfrac{5}{8}$ for an agent $i$ in a bundle $P_k$ of $P \in \MP^{n}_{i}(M)$, then we can construct another $P'\in \MP^n_{i}(M)$ where this is not true. Let $P_{k'} \in P$ be a bundle for which $\max_{j \in P_{k'}} v_{ij} < \tfrac{3}{8}$ (see Claim {\ref{clm:1}}(1) for the proof of its existence).

\textbf{Case 1:} If $v_i(P_k \setminus J) > \tfrac{1}{4}$, using Claim \ref{clm:1}(2), we make two bundles with value at least 1 and exactly one item with value more than $\tfrac{5}{8}$ in each.

\textbf{Case 2:} If $v_i(P_k \setminus J) \le \tfrac{1}{4}$ and there exists a partition $Q_{k'}^1$ and $Q_{k'}^2$ of items in $P_{k'}$ such that value of each $Q_{k'}^1$ and $Q_{k'}^2$ is at least $\tfrac{3}{8}$, then we can rearrange items in $P_k \cup P_{k'}$ and make two new bundles $(\{j_1\} \cup Q_{k'}^1)$ and $(\{j_2\} \cup Q_{k'}^2)$. Clearly, the value of each bundle is at least $1$ and each has exactly one item with value more than $\tfrac{5}{8}$.

\textbf{Case 3:} Finally, if $v_i(P_k \setminus J) \le \tfrac{1}{4}$ and no such $Q_{k'}^1$ and $Q_{k'}^2$ exists (as in Case 2), then we claim that there exists a partition of $P_{k'}$ into three bundles, each with value less than $\tfrac{3}{8}$. We can find this partition as follows: Initialize three empty bundles, and repeatedly add the highest value item of $P_{k'}$ to the bundle with the lowest value. For a contradiction, suppose one of the three bundles has value more than $\tfrac{3}{8}$, then the sum of the values of the other two sets must be less than $\tfrac{3}{8}$ because otherwise, they make a partition of two where each has value more than $\tfrac{3}{8}$. This means that at least one of the bundles must have a value less than $\tfrac{3}{16}$. This implies that the value of the largest bag before adding the last item must be less than $\tfrac{3}{16}$ and the last item also should value less than $\tfrac{3}{16}$, which is a contradiction. Therefore, there exists a partition $Q_{k'}^1$, $Q_{k'}^2$ and $Q_{k'}^3$ of $P_{k'}$ such that each has value less than $\tfrac{3}{8}$.

According to Lemma \ref{lem:Giver-Bundle}, there exists a bundle $P_{\hat{k}} \in \MP^{n}_{i}(M)$ such that $v_i(P_{\hat{k}} \setminus J) > \tfrac{1}{4}$. Let $v_i(P_{\hat{k}})=1+\delta$ for some $\delta \geq 0$. $P_{\hat{k}}$ cannot be same as $P_{k}$ (Case 1 of this proof) and $P_{k'}$ (Claim \ref{clm:1}(3)). We initialize three bags: $$\text{bag 1}: \{j_1\} \cup Q_{k'}^1 \hspace{0.8cm} \text{bag 2}: \{j_2\} \cup Q_{k'}^2 \hspace{0.8cm} \text{bag 3}:(P_k \setminus \{j_1,j_2\}) \cup (P_{\hat{k}} \cap J) \cup Q_{k'}^3 $$

Observe that the value of each of bag 1 and bag 2 is at most $\tfrac{9}{8}$, 
and the total value of all items in $P_k$, $P_{k'}$ and $P_{\hat{k}}$ is at least $3+\tfrac{2}{8}+\delta$. We sort the remaining items in decreasing order and add them one by one to a bag with the lowest value. Since the value of the last item added is at most $\tf18$ (Lemma \ref{lem:three}), each bag has a value of at least 1 and it has at most one item with value more than $\tfrac{5}{8}$. We repeat this process for each bundle with two items of value more than $\tfrac{5}{8}$ to find a desired partition. 
\end{proof}

Let $a$ be an agent in $N^2$. For simplicity, until the end of this section, when we use \textit{value} of an item or a bundle, we mean the value for agent $a$ (unless mentioned otherwise). Recall that we need to show \eqref{eqn:mb1}. Let $P=(P_1, \dots, P_n)\in \MP^n_a(M)$ be a partition satisfying the condition of Lemma~\ref{lem:sp}, i.e., there is at most one item with value greater than $\tfrac{5}{8}$ in $P_k$ for all $k \in [n]$. 

Using \eqref{eq:B_i}, we manipulate $P$ as follows: First, for each $B_i= \{j,j' \}$ with $\tf34 \leq v_a(B_i) \leq 1$, if $j \in P_k$ and $j' \in P_{k'}$, then we turn $P_k$ and $P_{k'}$ into two new bundles $\{j,j' \}$ and $((P_k \cup P_{k'}) \setminus  \{j,j' \})$. Observe that $v_a \left((P_k \cup P_{k'}) \setminus  \{j,j' \}  \right) \geq 1$. Hence, we can assume that for each $B_i \not\in L_a \cup H_a$ (as defined in \eqref{eqn:LKx1}), there exists a $P_{k} = B_i$ and all other bundles value at least 1. Second, we re-enumerate the bundles in $P$ such that 

\begin{itemize}
    \item $P_1, \dots, P_{t_1}$: each has an item $j$ of value more than $\tfrac{5}{8}$ and $v_a(B_j) < \tf34$. 
    \item $P_{t_1+1},\dots, P_{t_2}$: each has an item $j$ of value more than $\tfrac{5}{8}$ and $v_a(B_j) > 1$ (Observe that there are $h_a$ such bundles, so $t_2 - t_1 = h_a$).
    \item $P_{t_3+1},\dots, P_{n}$: each such $P_k = B_{k'}$ for some $k'$ and $\tf34 \leq v_a(B_{k'}) \leq 1$.
    \item $P_{t_2+1}, \dots, P_{t_3}$ be the remaining bundles (Observe that $t_3 = l_a + h_a$).
\end{itemize}

Let $L^1_a\subseteq L_a$ be the set of bags in $L_a$, which have one item with value more than $\tfrac{5}{8}$, and $L^2_a=L_a\setminus L^1_a$. Clearly, $|L^1_a|=t_1$ and $|L^2_a|=l_a-t_1$. 

\begin{observation}\label{obs:last}
 \begin{enumerate}
    \item If $v_a(B_j) > 1 \ge v_{aj} > \tfrac{5}{8}$, then by Corollary \ref{cor:Bound-on-Values}, $v_{a(2n-j+1)} > \tfrac{1}{4}$. 
    \item If $v_a(B_j) < \tf34$ and $v_{aj} > \tfrac{5}{8}$, i.e., $B_j \in L^1_a$ then $v_{a(2n-j+1)} < \tf18$.
    \item Using the two previous observations, if $B_{j_1} \in L^1_a$ and $B_{j_2} \in H_a$, then $j_1 < j_2$ because $v_{a(2n-j_2+1)} > \tfrac{1}{4} > \tf18 > v_{a(2n-j_1+1)}$.
    \item if $B_{j_1} \in H_a$ and $B_{j_2} \in L^2_a$, then $j_1 < j_2$ because $v_{aj_1} > \tfrac{5}{8} \ge v_{aj_2}$.
    \item Bags are \emph{ordered} by value of most-valued item in each bag. First, $L^1_a$ then $H_a$ then $L^2_a$ (excluding the bags for which $\tf34 \le v_a(B_k) \le 1$).
    \item The above observations imply that items with value in $[\tf18,\tfrac{1}{4}]$ (if any) automatically belong to bags $B_j$'s, which are neither in $H_a$ nor $L_a$. Hence, there is no such item in $\Prn$. 
\item After the termination of Algorithm \ref{algo:initial} we have $v_i(S_2) = v_i(B_n) < \tf34 $. Therefore, $B_{n}\in L_a$. Further, since $h_a>0$ (Lemma~\ref{lem:three}), $B_n\in L^2_a$. This implies that $|L^2_a|\ge 1$.
 \end{enumerate}
\end{observation}

Further, observe that there is no item from $M \setminus J$ in $P_{t_3+1},\dots, P_{n}$. We leverage this observation to prove \eqref{eqn:mb1} by only considering partitions $\Prn$ that contain all items of $M \setminus J$. Here is a simple example to help understand the construction.

\begin{example}\label{ex:manipulate}
Consider an example where, after the execution of Algorithm \ref{algo:initial}, $n=6$ (hence $|J|=12$). The valuation of agent $a$ for items in $J$ is shown in Table \ref{tbl:vals}. The construction of bags is shown in Figure \ref{fig:exB} and Table \ref{tbl:bags}.
\begin{table}[h!]
\caption{Valuation of agent $a$ for items in $J$ in Example \ref{ex:manipulate}}\label{tbl:vals}
\centering
\begin{tabular}{||l| c c c c c c c c c c c c||} 
 \hline
  Item & 1 & 2 & 3 & 4&5&6&7&8&9&10&11&12 \\ [0.5ex] 
 \hline\hline
 $v_{aj}$ & $\tfrac{28}{40}$ & $\tfrac{28}{40}$ & $\tfrac{28}{40}$ & $\tfrac{28}{40}$ & $\tfrac{17}{40}$ & $\tfrac{15}{40}$ & $\tfrac{14}{40}$ & $\tfrac{14}{40}$ & $\tfrac{14}{40}$ & $\tfrac{13}{40}$ & $\tfrac{5}{40}$ & $\tfrac{1}{40}$  \\ [1ex] 
 \hline
\end{tabular}
\end{table}

\begin{figure}[h!]
     \centering
     \includegraphics[width=0.8\textwidth]{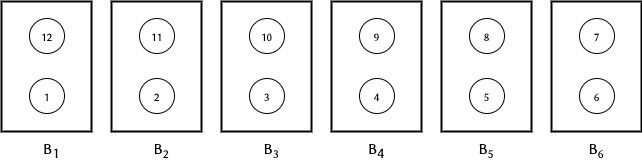}
     \caption{Setting of the items in the bags in Example \ref{ex:manipulate}}
     \label{fig:exB}
 \end{figure}
 
\begin{table}[h!]
\caption{Bags and their valuations in Example \ref{ex:manipulate} where $(H_a \cup L_a)^c$ refers to the bags not in $(H_a \cup L_a)$.}\label{tbl:bags}
\centering
\begin{tabular}{||l| c c c c c c ||} 
 \hline
  Bag & $B_1$ & $B_2$ & $B_3$ & $B_4$& $B_5$ &$B_6$\\ [0.5ex] 
 \hline\hline
 Setting & $\{1,12 \}$ & $\{2,11 \}$ & $\{3,10 \}$ & $\{4,9 \}$ & $\{5,8 \}$ & $\{6,7 \}$   \\ \hline
 $v_{a}(B_k)$ & $\tfrac{29}{40}$ & $\tfrac{33}{40}$ & $\tfrac{41}{40}$ & $\tfrac{42}{40}$ & $\tfrac{31}{40}$ & $\tfrac{29}{40}$  \\  \hline
 Bag type& $L^1_a$ & $(H_a \cup L_a)^c$ & $H_a$ & $H_a$ & $(H_a \cup L_a)^c$ & $L^2_a$\\ [1ex] 
 \hline
\end{tabular}
\end{table}

Now let $P$ be an MMS partition of agent $a$ satisfying the condition of Lemma~\ref{lem:sp} be as Figure \ref{fig:partition}. $M_k \subseteq (M \setminus J)$ is the set of low value items in $P_k$ so $\bigcup_{k \in [6]} M_k = M \setminus J$. $M_k$'s can be possibly empty. Then, the enumerated $P$ is shown in Figure~\ref{fig:pruned}. Observe that: 

\begin{figure}[h!]
     \centering
     \includegraphics[width=0.8\textwidth]{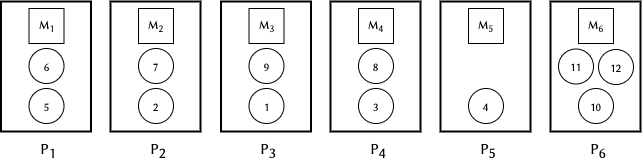}
     \caption{an MMS partition ($P$) of agent $a$ in Example \ref{ex:manipulate}}
     \label{fig:partition}
 \end{figure}

\begin{figure}[h!]
     \centering
     \includegraphics[width=0.8\textwidth]{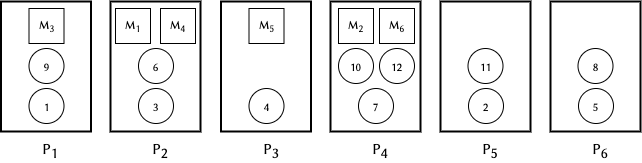}
     \caption{Re-enumarated $P$ in Example \ref{ex:manipulate}}
     \label{fig:pruned}
 \end{figure}
 
\begin{itemize}
    \item $v_a(B_2) \in [\tf34,1]$. Therefore, items 2 and 11 are put into one bundle ($P_5$) and all remaining items, which were previously in the same bundle with each of these items, are in another bundle ($P_4$). This is the case for $B_5$ and items 5 and 8 too. These bundles are put in the last ($P_5$ and $P_6$).
    \item $B_1 \in L^1_a$ and also $v_{a1} > \tfrac{5}{8}$. Therefore, we set the bundle containing item 1 to be $P_1$. $B_1$ is the only bag with this property and hence $t_1 =1$.
    \item $B_3, B_4 \in H_a$. Therefore, we put the bundles containing item 3 and item 4 next ($P_2$ and $P_3$). Therefore, $t_2=3$.
    \item The remaining bundle is set to be next ($P_4$). It means that $t_3=4$.
\end{itemize}
\end{example}

\begin{definition}\label{def:y1}
Define $y:=v_{ak'}$ where $k'$ is the highest value item in the set $\{j\in J_2\cap B_k, B_k\in H_a\}$. 
\end{definition}

Observe that $y$ is the value of the highest valued item from $J_2$ (items in top row of Figure \ref{fig:bags}), which is in a bag with value more than 1. Further, $\tfrac{2}{8} <y < \tfrac{3}{8}$ since by definition $y + v_{ij}>1$ for some $j \in J_1$ and $y = v_{ij'}$ for some $j' \in J_2$. For instance, in Figure \ref{fig:bags}, if all bags in $\{B_{k+1}, \dots, B_{n}\}$ value at most 1 for agent $a$ and $v_a(B_k) >1$, i.e., $B_k$ is in $H_a$, then $y= v_{a(2n-k+1)}$. This also implies that $v_{a1}\ge \dots \ge v_{ak} > \tfrac{5}{8}$. Moreover, $y=v_{a(2n-k+1)} \le v_{a(2n-k)}\le \dots \le v_{a(k+1)}$. In Example~\ref{ex:manipulate}, $y= \tfrac{14}{40}$.

The value $y$ turns out to be crucial, which we use to relate the total value of $\cup_{k=1}^{t_3} P_k \cap (M\setminus J)$ to the value needed in bag filling procedure. Intuitively, if $y$ is high, then less value is needed from $M \setminus J$ to make all bundles in $\Prn$ to be at least 1. This is also true for the bag filling argument: If $y$ is high then less value is needed to add to the bags in \eqref{eq:B_i} to make each of them at least $\tf34$. Otherwise, when $y$ is low, the value of $M \setminus J$ can be shown to be relatively high. We begin with the following lemma.

\begin{lemma}\label{lem:y}
For any bundle $B_k=\{j,j'\}\in L_a$ such that $v_{aj}, v_{aj'} <\tfrac{5}{8}$, we have $v_{aj}, v_{aj'} \ge y$. 
\end{lemma}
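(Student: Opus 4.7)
The plan is to pin down the concrete index at which the item realizing $y$ sits, and then read off the two required inequalities directly from the ordered-instance assumption. Set $k := \max\{\ell : B_\ell \in K_a\}$, which is well-defined since $k_a > 0$ by Lemma~\ref{lem:three}. Inside bag $B_\ell$ the $J_2$-item is $2n-\ell+1$, whose value is non-decreasing in $\ell$ (larger $\ell$ gives smaller index, hence larger value by the ordered assumption). Consequently the highest-valued $J_2$-item lying in some bag of $K_a$ is the one in $B_k$, namely item $2n-k+1$, so Definition~\ref{def:y} yields $y = v_{a(2n-k+1)}$.

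Next I would establish that $v_{a\ell'} > 5/8$ for every $\ell' \le k$. Since $B_k \in K_a$ we have $v_a(B_k) > 1$, and by Corollary~\ref{cor:Bound-on-Values} the $J_2$-item $2n-k+1$ of $B_k$ has value strictly less than $3/8$; subtracting forces $v_{ak} > 5/8$, and the ordered assumption extends this to all smaller indices.

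Finally, let $B_\ell = \{j, j'\} \in L_a$ satisfy the hypothesis, with $j = \ell \in J_1$ and $j' = 2n-\ell+1 \in J_2$. The bound $v_{aj} = v_{a\ell} < 5/8$ together with the previous step forces $\ell \ge k+1$, so $\ell \in \{k+1,\ldots,n\}$. Then $j = \ell \le n < 2n-k+1$ (using $k \le n$) and $j' = 2n-\ell+1 \le 2n-k < 2n-k+1$, so both items have index strictly less than $2n-k+1$; the ordered assumption therefore gives $v_{aj}, v_{aj'} \ge v_{a(2n-k+1)} = y$, which is the claim. There is no real obstacle; the only thing to be careful about is that as $\ell$ sweeps from $1$ to $n$ the $J_1$-index of $B_\ell$ increases while its $J_2$-index decreases, so the monotonicity argument must be applied separately to $j$ and $j'$.
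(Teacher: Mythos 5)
Your proof is correct and follows essentially the same route the paper takes: it identifies $k=\max\{\ell: B_\ell\in K_a\}$ so that $y=v_{a(2n-k+1)}$, derives $v_{a\ell'}>5/8$ for all $\ell'\le k$ (from $v_a(B_k)>1$ together with the $J_2$ bound in Corollary~\ref{cor:Bound-on-Values}), and then uses the ordered-instance monotonicity to place both items of any qualifying $B_\ell$ at indices strictly below $2n-k+1$. The paper's own proof is a one-line appeal to these same facts, already laid out in the discussion preceding the lemma; you have simply made that reasoning explicit and self-contained, including the small but necessary care that $K_a$ need not be a prefix of the bags, which your choice of $k$ as a maximum handles correctly.
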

\begin{proof}
The claim follows from the construction of $B_k$'s, and the fact that each bundle of $H_a$ has an item with value more than $\tfrac{5}{8}$. Let $k$ be the index of the bundle in \eqref{eq:B_i} for which the $y$ in Definition \ref{def:y1} is achieved. Then {\small $$v_{a1} \ge \dots \ge v_{ak} > \tfrac{5}{8} \ge v_{a(k+1)} \ge \dots \ge v_{a(2n-k+1)} = y \ge v_{a(2n-k+2)} \ge \dots \ge v_{a(2n)}.$$} 
If $v_{aj} < \tfrac{5}{8}$ for $j\in J_1$, then $j > k$ and $j' < 2n-k+1$, implying the result. 
\end{proof}

First, we only consider the case when bags in $L_a$ satisfy Lemma \ref{lem:y}. This is equivalent to the case when $t_1 = 0$. Then, we prove the general case in Theorem \ref{thm:fb}. 

In the bags, which are in $H_a$, there are exactly $2h_a$ items from $J$.  However, in $\{P_{t_1+1},\dots, P_{t_2}\}$, we know that each bundle has one item with value $> \tfrac{5}{8}$ but may contain more items from $J$. We define $z$ to reflect the possible difference, i.e., $z:=\max\{2h_a-|\bigcup_{t=t_1+1}^{t_2} P_t\cap J|, 0\}$. In Lemma \ref{lem:bound-vm-max}, we show two things. First we show a bound for the value that is needed with respect to $y$ to make bags in $L_a$ to become at least $\tf34$ (see definition of $x_i$ in \eqref{eqn:LKx1}). Second, we show a bound for $M \setminus J$ with respect to $z$ and $y$ by using the fact that the value of each bundle in $\Prn$ is at least 1. As we discussed before, we will relate these two observations to prove the bound for $v_a(M \setminus J)$ and we show \eqref{eqn:mb1} is always correct.

\begin{lemma}\label{lem:bound-vm-max}
For agent $a\in N^2$, if $t_1 = 0$ then we have  
\begin{enumerate}
\item $x_a \leq (\tf34-2y)l_a$.
\item $v_a(M\setminus J) \geq \max \{x_a+\tfrac{l_a}{4} - y z+\tfrac{z}{4}, \tfrac{z}{4} \}$, where $z=\max\{2h_a-|\bigcup_{t=t_1+1}^{t_2} P_t\cap J|, 0\}$.
\end{enumerate}
\end{lemma}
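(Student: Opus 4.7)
My plan is to handle both assertions using the manipulated MMS partition $P$ of agent~$a$ together with the $t_1=0$ hypothesis, which forces every bag in $L_a$ to have both items of value at most $\tfrac58$. For Part~1 I apply Lemma~\ref{lem:y} directly: since $t_1=0$, the two items of each $B_k\in L_a$ satisfy its hypothesis, so both have value at least $y$, giving $v_a(B_k)\ge 2y$; summing over $L_a$ and rewriting via the definition of $x_a$ yields $x_a\le(\tfrac34-2y)l_a$.

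For Part~2 I set $U=\bigcup_{t=t_1+1}^{t_2}P_t$ and $W=\bigcup_{t=t_2+1}^{t_3}P_t$. Because the rest-bag bundles $P_{t_3+1},\dots,P_n$ contain no items of $M\setminus J$, we have $v_a(M\setminus J)=v_a(U\setminus J)+v_a(W\setminus J)$. I first establish $v_a(U\setminus J)\ge z/4$ by a pigeonhole count: let $a_i$ denote the number of $P_t\in U$ with $|P_t\cap J|=i$. Since every $P_t\in U$ contains its high item, $a_0=0$, and when $z>0$ one has $z=2k_a-N_J=\sum_i(2-i)a_i\le a_1$. Each of the $a_1$ bundles has its unique $J$-item valued below $\tfrac34$, so $v_a(P_t\setminus J)>\tfrac14$; summing gives $v_a(U\setminus J)>a_1/4\ge z/4$, which already settles the second inequality.

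For the first inequality I show $v_a(W\setminus J)\ge l_a/4+x_a-yz$ and then combine with the $z/4$ slack from $U$. Let $B_U,B_W$ denote the total $v_a$-values of the $L_a$-items lying in $U,W$, with counts $\ell_U,\ell_W$, and let $A_W$, $|E_W|$ be the total value and count of $K_a$-$J_2$ items sitting in $W$. Since every item of $U\cap J$ is either high, a $K_a$-$J_2$ item, or an $L_a$-item, $N_J=k_a+|E_U|+\ell_U$, hence $|E_W|=k_a-|E_U|=2k_a-N_J+\ell_U\le z+\ell_U$. Applying $v_a(\cdot)\le y$ on $K_a$-$J_2$ items gives $A_W\le(z+\ell_U)y$; applying $v_a(\cdot)\ge y$ on $L_a$-items (Lemma~\ref{lem:y}, as in Part~1) gives $B_U\ge\ell_U y$, whence $B_W\le\sum_{B_k\in L_a}v_a(B_k)-\ell_U y=(\tfrac34)l_a-x_a-\ell_U y$. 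Adding these cancels the $\ell_U y$ terms to yield $v_a(W\cap J)=A_W+B_W\le yz+(\tfrac34)l_a-x_a$. Since $v_a(W)\ge l_a$, this gives $v_a(W\setminus J)\ge l_a/4+x_a-yz$, and adding $v_a(U\setminus J)\ge z/4$ closes the argument.

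The main obstacle is arranging the bookkeeping so that the $\ell_U$-dependence cancels. A priori, pushing more $L_a$-items into $U$ decreases $B_W$ but, through the identity $|E_W|=2k_a-N_J+\ell_U$, simultaneously raises the number of $K_a$-$J_2$ items parked in $W$, so that the upper bound on $A_W$ creeps up by exactly $\ell_U y$. The saving graces are the forced lower bound $v\ge y$ on every $L_a$-item and the identity $B_U+B_W=\sum_{B_k\in L_a}v_a(B_k)$; together they guarantee that the two effects offset each other, leaving a clean $yz$ after simplification and producing the sharp bound with $z$ in place of $k_a$.
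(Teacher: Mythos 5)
Your proposal is correct and follows the same decomposition as the paper: Part 1 is Lemma~\ref{lem:y} applied to both items of each $L_a$-bag, the $z/4$ bound comes from counting $U$-bundles with a single $J$-item, and the stronger bound comes from controlling $v_a(W\cap J)$ and using $v_a(W)\ge l_a$. The one place you genuinely add value is in Part~2: the paper's phrase ``each of the $z$ items has value at most $y$'' is left informal, while you make the accounting precise by tracking $\ell_U$, deriving $|E_W|\le z+\ell_U$ from $N_J=k_a+|E_U|+\ell_U$, and observing that the two bounds $A_W\le(z+\ell_U)y$ and $B_U\ge\ell_U y$ cancel the $\ell_U$-dependence. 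This is a cleaner, more defensible write-up of the paper's argument rather than a different route. One minor point worth stating explicitly in your proof: all $L_a$-items lie in $U\cup W$ (they cannot be in $P_{t_3+1},\dots,P_n$ since those bundles are exactly the $B_{k'}$ with $B_{k'}\notin L_a\cup K_a$), which is what justifies $B_U+B_W=\sum_{B_k\in L_a}v_a(B_k)$.
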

\begin{proof}
For the first part, since $t_1=0$, we have $v_a(B_k) \geq 2y, \forall B_k\in L_a$ (Lemma~\ref{lem:y}). This implies that $x_a = (\tf34)l_a - \sum_{k: B_k \in L_a} B_k \le (\tf34 - 2y)l_a$. 

For the second part, in $\bigcup_{k:B_k \in H_a} B_k$, there are exactly $2h_a$ items from $J$. If there are at least $2h_a$ items in $\bigcup_{t=t_1+1}^{t_2} P_t\cap J$ then $z=0$. On the other hand, if there are $2h_a-z$ items in $\bigcup_{t=t_1+1}^{t_2} P_t \cap J$, then it implies that there are at least $z$ bundles in $\{P_{t_1+1},\dots, P_{t_2} \}$ with exactly one item from $J$. Each of these bundles need more than $\tfrac{1}{4}$ value of items from $M\setminus J$ to be 1. 

Next, if $z=0$, it means that there are at least $2h_a$ items from $J$ in $\{P_{t_1+1},\dots, P_{t_2}\}$ and $2l_a$ items from $J$ in $\{P_{t_2+1}, \dots, P_{t_3}\}$ since $t_1 = 0$. Then all bundles in $\{P_{t_2+1}, \dots, P_{t_3}\}$ need at least $x_a+\tfrac{l_a}{4}$ value of items from $M\setminus J$ to become $1$ because the value of each item in the bundles of $L_a$ is at least $y$ (follows from Lemma~\ref{lem:y} and Observation~\ref{obs:last}). If $z>0$, then all bundles in $\{P_{t_2+1}, \dots, P_{t_3}\}$ need at least $x_a+\tfrac{l_a}{4} -yz$ value of items from $M\setminus J$ to become one because each of the $z$ items has value at most $y$ (Lemma~\ref{lem:y}). Therefore, in total, there should be at least $x_a+\tfrac{l_a}{4}- y z+\tfrac{z}{4}$ value of items from $M\setminus J$ in $\{P_{t_1+1}, \dots, P_{t_3}\}$. Letting $Q = P_{t_2+1} \cup \dots \cup P_{t_3}$, we have (using $v(P_t) \ge 1$ and $t_3 - t_2 = l_a$, and abusing notation):\begin{align*}
    l_a &\le v_a(Q)\\
     &= v_a(Q \cap J) + v_a(Q \cap (M\setminus J))\\
     &= v_a(Q \cap H_a) + v_a(Q \cap L_a) + v(Q \cap (M\setminus J))\\
    &\le zy + v_a(L_a) + v_a(Q \cap (M\setminus J))\\
     &= zy + (\tf34)l_a - x_a + v_a(Q \cap (M\setminus J)),
\end{align*}
implying $v_a(Q \cap (M\setminus J)) \ge x_a + \tfrac{l_a}{4} - yz$, where the second-to-last line uses the fact that $|Q \cap J| = 2l_a + z$, so obviously $v_a(Q \cap J) \le v_a(L_a) + zy$ (any items from $H_a$ that end up in $Q$ are worth at most $y$, which is at most the value of any $L_a$ item), and the last line follows from substituting $x_a = (\tf34)l_a - v_a(L_a)$.

Further, in case $z$ is large such that the value of $x_a+\tfrac{l_a}{4}- y z+\tfrac{z}{4}$ is negative, we still have $z$ bundles in $\{P_{t_1+1},\dots, P_{t_2}\}$ with exactly one item (with value of at most $\tf34$) from $J$ in each. Therefore, we need $\tfrac{z}{4}$ value of items from $M\setminus J$ to make all bundles in $\{P_{t_1+1},\dots, P_{t_2}\}$ value 1. 
\end{proof}

In Lemma \ref{lem:bound-vm-max}, we showed that $v_a(M\setminus J) \geq \max \{x_a+\tfrac{l_a}{4} - y z+\tfrac{z}{4}, \tfrac{z}{4} \}$. In Theorem~\ref{thm:maxva1}, we prove \eqref{eqn:mb1} (for the special case when $t_1 = 0$) by relating the two observations from Lemma \ref{lem:bound-vm-max} and using two different cases for $\max \{x_a+\tfrac{l_a}{4} - y z+\tfrac{z}{4}, \tfrac{z}{4} \}$. 
\begin{theorem}\label{thm:maxva1}
For agent $a\in N^2$, if $t_1=0$ then
\begin{equation}
    v_a(M \setminus J) \geq x_a+\tfrac{l_a}{8} \text{ .}
\end{equation}
\end{theorem}

\begin{proof}
Using Lemma~\ref{lem:bound-vm-max}(2) there are two cases for the maximum of $x_a+\tfrac{l_a}{4} - yz+\tfrac{z}{4}$ and $\tfrac{z}{4}$. We prove the claim for each case separately.
\medskip

\noindent\textbf{Case 1:} Suppose $x_a+\tfrac{l_a}{4} - y z+{z}/{4} \geq {z}/{4}$, then 
        $z \leq \frac{x_a+ \tfrac{l_a}{4}}{y}$. Further, Lemma~\ref{lem:bound-vm-max}(1) implies $y\leq \frac{3l_a-4x_a}{8l_a}$. 
   Using these, we get 
    \[v_a(M \setminus J) \geq x_a+ \tfrac{l_a}{4}- z(y-\tfrac{1}{4}) \ge \frac{4x_a+l_a}{16y} \ge \frac{(4x_a+l_a)l_a}{2(3l_a-4x_a)} \enspace .\]
 
Next, consider
        \[\frac{(4x_a+l_a)l_a}{2(3l_a-4x_a)} - \left( x_a+ \frac{l_a}{8}\right)= \frac{28x_a^2+(2x_a-l_a)^2}{8(3l_a-4x_a)}   \geq 0\enspace .\]
The last inequality follows because $(3l_a-4x_a)>0$ (Lemma~\ref{lem:three}). This implies $v_a(M\setminus J) \geq x_a + \tfrac{l_a}{8}$.
\medskip
    
\noindent\textbf{Case 2:} Suppose $x_a+\tfrac{l_a}{4} - y z+\tfrac{z}{4} \le \tfrac{z}{4}$. Then, using Lemma~\ref{lem:bound-vm-max}(1), we have
\[v_a(M \setminus J) \geq \tfrac{z}{4} \geq \frac{4x_a+ l_a}{16y} \geq \frac{(4x_a+l_a)l_a}{2(3l_a-4x_a)} \geq x_a + \tfrac{l_a}{8}\enspace .\qedhere\]
\end{proof}
Next, we handle the general case when $t_1>0$. Let $J_{\min} = \bigcup_{B_k\in L^1_a} B_k \cap J_2$ (recall that $L^1_a$ denote the set of bags in $L_a$ that have one item with value more than $\tfrac{5}{8}$). Then, we have $v_{aj}< \tf18, \forall  j\in J_{\min}$ because these items are bagged with an item value more than $\tfrac{5}{8}$ and together they have value less than $\tf34$. Therefore, $v_a(J_{\min}) < \tfrac{t_1}{8}$. Let us call items in $(M\setminus J)\cup J_{\min}$ as \textit{filler} items (each filler item, by Lemma \ref{lem:three}(iv), has value $<\tf18$ ) and items in $J \setminus J_{\min}$ as \textit{base} items. 

\begin{theorem}\label{thm:fb}
For agent $a\in N^2$, we have $v_a(M \setminus J) \geq x_a+\tfrac{l_a}{8}$. 
\end{theorem}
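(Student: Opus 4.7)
My plan is to extend the proof of Theorem~\ref{thm:maxva} by separating the deficit along the $L^1/L^2$ split. Set
$x^1_a := \tfrac{3}{4} t_1 - \sum_{B_k \in L^1} v_a(B_k)$ and
$x^2_a := \tfrac{3}{4}(l_a - t_1) - \sum_{B_k \in L^2} v_a(B_k)$,
so $x_a = x^1_a + x^2_a$. Since every bag in $L^1$ contains a single item already worth more than $\tfrac{5}{8}$, we have $v_a(B_k) > \tfrac{5}{8}$ on $L^1$, which gives the clean bound $x^1_a < t_1/8$. By Lemma~\ref{lem:y}, every bag in $L^2$ has value at least $2y$, so $x^2_a \le (\tfrac{3}{4} - 2y)(l_a - t_1)$; this is the analogue of Lemma~\ref{lem:bound-vm-max}(1) applied to $L^2$ alone.

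I would then work with a partition $P \in \MP^n_a(M)$ from Lemma~\ref{lem:sp}, arranged as in the re-enumeration preceding the theorem, and decompose $v_a(M\setminus J)$ into contributions from the two groups $\{P_1,\dots,P_{t_1}\}$ and $\{P_{t_1+1},\dots,P_{t_3}\}$. For the second group I would replay the proof of Lemma~\ref{lem:bound-vm-max}(2) and the case analysis of Theorem~\ref{thm:maxva} almost verbatim, but with the pair $(x^2_a, l_a - t_1)$ in place of $(x_a, l_a)$: the bound $x^2_a \le (\tfrac{3}{4} - 2y)(l_a - t_1)$ plays the role of Lemma~\ref{lem:bound-vm-max}(1), and the identical algebra yields a contribution of at least $x^2_a + (l_a - t_1)/8$ to $v_a(M\setminus J)$. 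For the first group, each bundle $P_t$ ($t \le t_1$) contains the big $L^1$ item $b_t$ with $v_{ab_t} > \tfrac{5}{8}$ and has value at least $1$, so its non-$b_t$ contents have value at least $1 - v_{ab_t}$. Summing over $t \le t_1$ and using $\sum_{t\le t_1} v_{ab_t} = v_a(L^1) - \sum_{t\le t_1} v_{as_t}$, in the idealized case where these non-$b_t$ contents lie entirely in $M\setminus J$ the first group contributes at least $t_1/4 + x^1_a + \sum_{t\le t_1} v_{as_t} \ge x^1_a + t_1/4$. Adding the two contributions gives $v_a(M\setminus J) \ge x_a + l_a/8 + t_1/8$, beating the target by an extra $t_1/8$.

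The main obstacle is the general case in which the bundles $P_1,\dots,P_{t_1}$ contain $J$-items other than the $b_t$'s (small $L^1$-items, items from $L^2$, or small items from $K_a$), shrinking the first-group contribution; symmetrically, small $L^1$-items may drift into $\{P_{t_1+1},\dots,P_{t_3}\}$, perturbing the second-group analysis. The key structural claim needed to make the plan rigorous is that any ``swap'' of a $J$-item of value $v$ between the two groups shifts the two contributions by exactly $v$ in opposite directions, so their sum is preserved; the $t_1/8$ slack from the idealized calculation then absorbs any residual loss. Making this swap-invariance precise --- in particular, verifying that the $y$--$z$ case analysis of Theorem~\ref{thm:maxva} survives when stray $L^1$ small items appear in the second group --- is the technical heart of the proof, after which the conclusion $v_a(M\setminus J) \ge x_a + l_a/8$ follows by the combined accounting above.
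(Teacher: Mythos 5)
Your decomposition $x_a = x^1_a + x^2_a$ along the $L^1/L^2$ split is exactly the paper's $x'_a, x''_a$, and your idealized calculation — first group contributes at least $x^1_a + t_1/4$, second group at least $x^2_a + (l_a - t_1)/8$ by replaying the $t_1 = 0$ machinery, for a total of $x_a + l_a/8 + t_1/8$ — is the same slack computation the paper performs. The paper additionally reclassifies the small $L^1$-partners $J_{\min} = \bigcup_{B_k\in L^1} B_k \cap J_2$ as \emph{filler} items (each worth $< \tfrac{1}{8}$, so $v_a(J_{\min}) < t_1/8$) and proves the intermediate inequality $v_a((M\cup D)\setminus J) + v_a(J_{\min}) \geq x'_a + t_1/4 + x''_a + (l_a-t_1)/8$; this is precisely the bookkeeping that spends the $t_1/8$ headroom and lets it avoid tracking those small items in the exchange step.

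The gap is the one you flag yourself: the exchange argument is not done, and as you frame it (``shifts the two contributions by exactly $v$ in opposite directions, sum preserved'') it is not quite what is needed and is likely false. The second-group lower bound from Lemma~\ref{lem:bound-vm-max}(2) is not a linear per-bundle accounting; it depends on $z$ (how many $K_a$-bag $J$-items are missing from $\{P_{t_1+1},\ldots,P_{t_2}\}$) and on $y$ through the nonlinear expression $x_a + l_a/4 - yz + z/4$, so a $J$-item crossing between groups does not simply move its own value from one side of the ledger to the other. What the paper proves is a one-sided \emph{monotonicity}, not an invariance: it observes that each of $P_1,\dots,P_{t_1}$ contains a base item of value at least the maximum item value in the remaining bundles, and then, for each \emph{extra} base item $j$ appearing in $\{P_1,\ldots,P_{t_1}\}$, case-splits on whether $j$ is from an $L^2$ bag ($x'_a$ drops by less than $v_{aj}$ while $x''_a$ rises by $v_{aj}$) or from a $K_a$ bag (the orphaned $B_k\setminus\{j\}$ needs strictly more filler than the bundle receiving $j$ saves); in both cases the combined requirement only increases. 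Your plan is the right shape, but you must replace ``swap-invariance'' with this monotone-exchange case analysis before the conclusion follows.
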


\begin{proof}
In this proof we only consider the items in $\Prn$ because, as discussed before, these bundles contain all items from $M\setminus J$. Let $x'_a := \frac{3}{4}t_1  - \sum_{B_k\in L^1_a} v_a(B_k)$, and $x''_a := x_a-x'_a$. For agent $a$, we can treat items of $J_{\min}$ as low-value items so we will prove:
 \begin{equation}
     \label{eq-extra-residual1}
     v_a(M \setminus J)+ v_a(J_{\min}) \geq x'_a+\tfrac{t_1}{4}+x''_a+{(l_a-t_1)}/{8} \text{.} 
 \end{equation}
 Since $v_a(J_{\min})$ is at most $\tfrac{t_1}{8}$, \eqref{eq-extra-residual1} directly implies the theorem. 

In $\Prn$, recall that each of $\{P_1, \dots, P_{t_1}\}$ contains a base item whose value is at least the maximum of all the items in the remaining bundles. Further, if there are two base items in any bundle $P_k$ in $\{P_1, \dots, P_{t_1}\}$, then $v_a(P_k\cap J) > 1$ (using Observation~\ref{obs:last}). If each bundle in $\{P_1,\dots, P_{t_1}\}$ has exactly one base item ($t_1$ bundles in total) then items of value at least $x'_a+\tfrac{t_1}{4}$ are needed to make all these bundles at least 1. This, together with Theorem~\ref{thm:maxva1}, proves the bound. On the other hand, if there are more than $t_1$ \textit{base} items in $\{P_1,\dots, P_{t_1}\}$, by Observation \ref{obs:last}(vi), each base item in $\Prn$ has value $> \tfrac{1}{4}$. Therefore, for each extra base item $j$, there are two cases. If $j$ comes from $B_k\in L^2_a$ then clearly, there should be enough items in $M \setminus J$ to compensate for $v_{aj}$ in $\{P_{t_2+1},\dots,P_{t_3}\}$ since the value item $j$ is adding to a bag in $L^1_a$ is more than what it was considered in the right side of \eqref{eq-extra-residual1}. For the other case, if $j$ comes from $B_k\in H_a$ then it will make $B_k\setminus \{j\}$ to need more low-value items than the bundle in $\{P_1,\dots, P_{t_1}\}$ who gets $j$ to become one since if $B_k \in H_a$ and $B_{k'} \in L^1_a$ then $v_{ak} < v_{ak'}$ (Observation \ref{obs:last}(v)). This completes the proof.  
\end{proof}

\section{Algorithm for $\frac{3}{4}$-MMS Allocation}\label{sec:MMS-NOT-known}
The existence proof of $\tf34$-MMS allocation in Section~\ref{sec:MMS-known} requires the knowledge of the exact MMS value $\mu_i$'s of all agents, e.g., the proof of valid reduction for bundle $S_4$ in Lemma~\ref{lem:valid-reduction} needs this assumption. Finding an exact $\mu_i$ of an agent $i$ is an NP-Hard problem, however a PTAS exists~\cite{woeginger1997polynomial}. This implies a PTAS to compute a $(\tf34-\epsilon)$-MMS allocation for an $\epsilon>0$. However, for small $\epsilon$, this PTAS is computationally very expensive and may not be practical. In this section, we show that our algorithmic technique in Section~\ref{sec:MMS-known} is powerful enough to be modified into a strongly polynomial time algorithm to find an exact $\tf34$-MMS allocation for a given instance $\I=\langle N,M,V\rangle$. The key idea is to use the average as an upper-bound for the MMS value; see Lemma~\ref{lem:avg}.

As it is shown in Section \ref{sec:properties}, we assume $\I$ to be an ordered instance, i.e., $v_{i1} \geq v_{i2} \geq \dots \geq v_{im}, \text{ for all } i \in N$. We fix the order of agents in $N$ such that at each step of the algorithm if more than one agent satisfies the conditions of valid reduction, defined in Section~\ref{sec:red}, we choose one with the lowest index. We maintain the same order among remaining agents after every valid reduction. For the ease of exposition, as in Section~\ref{sec:MMS-known}, we abuse notation and use $M$ and $N$ to denote the set of unallocated items and the set of agents who have not received any bundle yet, respectively. Moreover, we use $n:=|N|$ and $m:=|M|$. Further, we use $j$ to denote the $j^{\text{th}}$ highest value item in $M$, and $i$ to denote the $i^{\text{th}}$ agent in $N$ as per the fixed order.

In this section, unlike what we did in Section \ref{sec:MMS-known} (normalizing the valuation of items based on actual $\mu_i^n(M)$), we normalize $v_{ij}$'s using the total value of all items.

\begin{definition}\label{def:norm}
We call an instance $\I=\langle N,M,V\rangle$ \emph{normalized} if $v_i(M)=n$ for all $i \in N$.
\end{definition}

Using the scale invariance property (Lemma \ref{lem:Scale}), we can without loss of generality work with the normalized valuations so that $v_i(M)=n$ for all agents $i \in N$. The proof of following corollary is straightforward. 

\begin{corollary}
For a normalized instance $\I=\langle N,M,V\rangle$, $\mu_i^{n}(M) \le 1 \ \ \forall i.$
\end{corollary}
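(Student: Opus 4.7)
The statement to prove is a direct corollary of Lemma~\ref{lem:avg} (Average upper bounds MMS) combined with the definition of a normalized instance (Definition~\ref{def:norm}). My plan is therefore a one-line chain of substitutions rather than a multi-step argument.

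Concretely, I would fix an arbitrary agent $i \in N$ and invoke Lemma~\ref{lem:avg} to get $\mu_i^n(M) \le v_i(M)/n$. Then, using the hypothesis that $\I$ is normalized, I substitute $v_i(M) = n$ (from Definition~\ref{def:norm}) into the right-hand side, yielding $\mu_i^n(M) \le n/n = 1$. Since $i$ was arbitrary, this gives the claim for every $i \in N$.

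There is no genuine obstacle here: the corollary is stated precisely to record the useful consequence of normalizing valuations so that each agent's total is $n$, and it is used later in the section as a convenient uniform upper bound on the MMS values. The only thing to be careful about is to explicitly cite both Lemma~\ref{lem:avg} and Definition~\ref{def:norm} so the reader sees which two ingredients combine. No induction, case analysis, or reduction is needed, and in particular this step does not depend on the instance being ordered or on any property of the algorithm in Section~\ref{sec:MMS-known}.
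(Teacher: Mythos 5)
Your proof is correct and is exactly the one-line argument the authors intend when they say "the proof of the following lemma is straightforward": apply Lemma~\ref{lem:avg} to get $\mu_i^n(M)\le v_i(M)/n$ and substitute $v_i(M)=n$ from Definition~\ref{def:norm}. Nothing further is needed.
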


The algorithm to compute a $\tf34$-MMS allocation is given in Algorithm~\ref{algo-threefourth}. It has three main parts: Initial Assignment, Update Upper Bound, and Bag Filling. Initial Assignment is further divided into two parts: Fixed and Tentative.  
\begin{algorithm}[t!]
\caption{$\tf34$-MMS Allocation}\label{algo-threefourth}
\DontPrintSemicolon
  \SetKwFunction{Define}{Define}
  \SetKwInOut{Input}{Input}\SetKwInOut{Output}{Output}
  \Input{Ordered Instance $\langle N,M,V\rangle$, i.e., $v_{i1} \ge \dots \ge v_{im}, \forall i\in N$}
  \Output{$\tf34$-MMS Allocation}
  \BlankLine
  Normalize Valuations (i.e., scale valuations so that $v_i(M)=n, \forall i\in N$)\tcp*{$n=|N|$}
  $(N,M,V) \gets$ Fixed-Assignment$(N,M,V)$\tcp*{Algorithm~\ref{algo-fixed}}
  $(N^t,M^t,V^t)\gets$ Tentative-Assignment$(N,M,V)$\tcp*{Algorithm~\ref{algo-tentative}}
  \While{$N^{21}\neq \emptyset$} {
    $i \gets$ the lowest index agent in $N^{21}$\tcp*{$N^{21}$ is defined in \eqref{eqn:N21}}
    Undo Tentative-Assignments\tcp*{go back to $(N,M,V)$}\label{line:undo}
  Update $i$'s MMS bound \& normalize valuations\tcp*{Theorem \ref{theo-updatupperbound} } 
  $(N,M,V) \gets$ Fixed-Assignment$(N,M,V)$\tcp*{these assignments are final}
  $(N^t,M^t,V^t)\gets$ Tentative-Assignment$(N,M,V)$\;
  }
  Make all tentative assignments final: $(N,M,V) \gets (N^t,M^t,V^t)$  \;
  Bag-Filling$(N,M,V,\alpha)$ for $\alpha = \tf34$ \tcp*{Section~\ref{sec:bag-filling-UNKNOWN} and Algorithm~\ref{algo:Bag-Filling-MAIN}}
\end{algorithm}

\subsection{Initial Assignment}\label{sec:initial-assignment2}
This section is very similar to Section~\ref{sec:initial-assignment} where we assign $S_1$, $S_2$, $S_3$, and $S_4$ to agents in order to reduce the number of high-value items to at most $2n$. The main difference for this section is assigning $S_4$. Since assigning $S_4$ is a \textit{valid reduction} only if $\mu_i^{n}(M) =1$ (Lemma~\ref{lem:valid-reduction}), therefore, when $S_1$, $S_2$, and $S_3$ value less than $\tf34$ for all agents we check for $S_4$ and if the first $S_4$ is allocated, then this allocation and all initial allocations after that are called \emph{tentative assignments}. Tentative assignment are finalized only if allocation of any $S_4$ did not take away too much valuation for any agent in $N$. 

It is crucial for this step to assign bundles to agents according to an order. Therefore, for $S_1$, $S_2$, $S_3$, and $S_4$ the bundle with lower index has more priority and if there are more than one agent satisfied with $S \in \{S_1,S_2,S_3,S_4 \}$ we choose the one with lower index based on the original order of $N$. This ordering will be useful when we have to undo tentative assignments.

\paragraph{Update $N,M,V$} After every assignment, say bundle $S$ to agent $i$, we update $N, M, V$ as follows: 
        $$M\gets M\setminus S;\ \  N\gets N\setminus\{i\}; \ \ v_{i'j} \gets v_{i'j}\cdot \frac{|N|}{v_{i'}(M)}, \forall i'\in N, j\in M\enspace .$$
    Note that this maintains $\mu_i^{n}(M) \le 1, \forall i\in N$ after normalization (Definition~\ref{def:norm} and Lemma~\ref{lem:avg}).

\subsubsection{Fixed Assignment}\label{sec-fixed}
In this part, we allocate high-value items to agents using Algorithm \ref{algo-fixed}. We assign the bundle $S \in  \{S_1,S_2,S_3\}$ to the lowest index agent $i \in N$ for which $v_i(S)\geq \tf34$. Then, we update $N,M,V$ and repeat this step until no such agent exists. Lemma~\ref{lem:valid-reduction} is applicable here to show that allocating $S\in\{S_1,S_2,S_3\}$ to agent $i$ and removing them from $M$ and $N$ is a \emph{valid reduction}. 
\begin{algorithm}[t!]
\caption{Fixed-Assignment}
\label{algo-fixed}
\DontPrintSemicolon
  \SetKwFunction{Define}{Define}
  \SetKwInOut{Input}{Input}\SetKwInOut{Output}{Output}
  \Input{Ordered Instance $\langle N,M,V\rangle$ (i.e., $v_{i1} \ge \dots \ge v_{im}, \forall i\in N$), where $\mu_i^{n}(M) \le 1, \forall i\in N$}
  \Output{Fixed Assignments and Reduced Instance}
  \BlankLine
  For any $S\subseteq M$, define $\Gamma(S)= \{i \in N: v_i(S) \geq \tf34 \}$\;
  $S_1 := \{1\};\  S_2 := \{n, n+1\};\ S_3 := \{2n-1, 2n, 2n+1\}$\tcp*{bundles that can be assigned}
  $T= \Gamma(S_1)\cup \Gamma(S_2) \cup \Gamma(S_3)$\;
   \BlankLine
     \While{$T \neq \emptyset$}{
     $S\gets$ the lowest index bundle in $\{S_1, S_2, S_3\}$ for which $ \Gamma(S) \neq \emptyset$\;
     $i\gets$ the lowest index agent in $\Gamma(S)$\;
     
     Assign $S$ to agent $i$\tcp*{final assignment}
     Update $N, M, V, T$
     }
  \Return{$N, M, V$}\;
\end{algorithm}

As in Section \ref{sec:MMS-known}, let $J_1 := \{1,\dots,n\}$ denote the set of first $n$ items. Similarly, define $J_2:=\{n+1,\dots,2n\}$ and $J:=J_1 \cup J_2$. Note that Corollary~\ref{cor:Bound-on-Values} is also applicable here. 

\subsubsection{Tentative Assignment}\label{sec-tentative}
This step is a continuation of Fixed Assignment in Section~\ref{sec-fixed}. It starts when Algorithm \ref{algo-fixed} terminates and the first $S_4$ bundle is being assigned. In this part, we assume that $\mu_i^{n}(M)=1,\forall i\in N$, i.e., the actual MMS value for each agent is equal to her current MMS upper bound, and we allocate $S_1,S_2,S_3,S_4$ using this assumption. 
In the next step, either this assumption works fine and we make all tentative assignments final and move to the next stage of bag filling or we detect an agent $i$ for whom $\mu_i^{n}(M)$ is significantly lower than 1. In the latter case, we update $i$'s MMS upper bound. In particular, we update the MMS upper bound by updating the valuations so that the new MMS upper bound remains $1$ for every agent. 

In Algorithm \ref{algo-tentative}, for $S \in \{S_1,S_2,S_3,S_4 \} $ we check whether there exists an $i \in N$ with $v_i(S) \geq \tf34$. If true, then we \emph{tentatively} assign the lowest index such a bundle $S$ to the lowest index such an agent $i$ and we \emph{tentatively} update $M,N,V$ as we did in Section~\ref{sec-fixed}. Choosing the lowest index $S_k$ makes sure that when $S_4$ is assigned, then none of $S_1, S_2, S_3$ satisfies the condition. This is essential in proving that assigning $S_4$ to an agent is a valid reduction in Lemma~\ref{lem:valid-reduction}.  Further, note that when $S_4$ is assigned, the value of each bundle in $\{S_1, S_2, S_3\}$ for every agent is strictly less than $\tf34$. However, after removing an agent $i$ with $S_4$, it may later trigger valid reductions with $\{S_1,S_2,S_3\}$. 

\begin{algorithm}[t!]
\caption{Tentative-Assignment}\label{algo-tentative}
\DontPrintSemicolon
  \SetKwFunction{Define}{Define}
  \SetKwInOut{Input}{Input}\SetKwInOut{Output}{Output}
  \Input{Ordered Instance $\langle N,M,V\rangle$ that satisfies Conditions 1-3 in Corollary \ref{cor:Bound-on-Values}}
  \Output{Tentative Assignments and Reduced Instance}
  \BlankLine
  For any $S\subseteq M$, define $\Gamma(S)= \{i \in N: v_i(S) \geq \tf34 \}$\;
  $S_1 := \{1\};\  S_2 := \{n, n+1\};\ S_3 := \{2n-1, 2n, 2n+1\};\ S_4 := \{1, 2n+1\}$\;
  $T= \Gamma(S_1)\cup \Gamma(S_2) \cup \Gamma(S_3) \cup \Gamma(S_4)$\;
  \BlankLine
     \While{$T \neq \emptyset$}{
     $S\gets$ the lowest index bundle in $\{S_1, S_2, S_3, S_4\}$ for which $ \Gamma(S) \neq \emptyset$\;
     $i\gets$ the lowest index agent in $\Gamma(S)$\;
     Assign $S$ to agent $i$\tcp*{tentative assignment}
     Update $N, M, V, T$\;
     }
  \Return{$N, M, V$}\;
\end{algorithm}

\subsection{Updating MMS Upper Bound} \label{sec-update}
The goal here is first to detect the agents for whom the MMS upper bound is overestimated and second to update this bound for them. To detect these agents we continue as in Section~\ref{sec:MMS-known} by first initializing $n$ bags as in~\eqref{eq:B_i} (see Figure \ref{fig:bags}), and divide agents into two types $N^1$ and $N^2$ according to their valuations for these bags. Recall that $N^1$ is the set of agents whose value for each bag is at most $1$, and $N^2$ is the set of remaining agents. 
 
In this section, we need to analyze agents in $N^2$ more thoroughly. We further partition agents in $N^2$ into two sub-types as follows: 
\begin{equation}\label{eqn:N21}
    N^{21} := \{i\in N^2 : h_i > l_i \text{ and } v_i (M \setminus J) < x_i + l_i / 8\}; \ \ \ \  N^{22} := N^2\setminus N^{21}\enspace . 
\end{equation}

\begin{lemma} 
\label{lem:k<l}
If $h_i \leq l_i$ for $i \in N^2$ we have $v_i (M \setminus J) \geq x_i + \tfrac{l_i}{8}$.
\end{lemma}

\begin{proof}
The value of any bag in $H_i$ is less than $\tf98$ for $i$ (Lemma \ref{lem:three}). By \eqref{eqn:LKx1}, the value of any bag which is not in $H_i \cup L_i$ is at most 1 and the total value of bags in $L_i$ is at most $$ \tf34 l_i - x_i \ .$$ The total value of the items in $M$, in a normalized instance, is $n$ so we have
\begin{equation}\notag
\begin{aligned}
    v_i(M \setminus J) \geq &\ n - \left[(\tf98 h_i) + (n-h_i-l_i) + (\tf34 l_i -x_i) \right] \\
    = & -\tf{h_i}8 + \tf{l_i}4 +x_i \geq x_i+ \tf{l_i}8 \ . 
\end{aligned}\qedhere
\end{equation}
\end{proof}

Lemma \ref{lem:k<l} shows that \eqref{eqn:mb1} holds for all $i \in N^{22}$. Hence, by Lemma \ref{lem:bag-filling-terminates}, if $N^{21}$ is empty, then we do not need to update the MMS upper bounds, and we can proceed to the next stage of bag filling. Note that Theorem~\ref{thm:fb} is still applicable for all agents in $N^2$, which implies the following corollary. 

\begin{corollary}\label{cor:t2B}
For any agent $i\in N^{21}$, $\mu_i<1$.
\end{corollary}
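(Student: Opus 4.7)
The plan is to prove this corollary by contrapositive, leveraging Theorem~\ref{thm:fb} from Section~\ref{sec:MMS-known}. The key observation is that Theorem~\ref{thm:fb} was established under the hypothesis $\mu_i^n(M)=1$, and it concludes $v_i(M\setminus J) \geq x_i + l_i/8$ for every $i \in N^2$. But the definition of $N^{21}$ in~\eqref{eqn:N21} is precisely the set of agents in $N^2$ for whom this inequality fails. So any agent in $N^{21}$ must violate the hypothesis of Theorem~\ref{thm:fb}.

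First, I would note that the normalization used in this section ($v_i(M)=n$ for all $i$) together with Lemma~\ref{lem:avg} yields $\mu_i^n(M) \leq v_i(M)/n = 1$ for every remaining agent $i$. Thus it suffices to rule out $\mu_i=1$ for $i\in N^{21}$. Note also that the structural assumptions required by Section~\ref{sec:MMS-known} hold at this point in the algorithm: the instance is ordered (by Lemma~\ref{lem:order}, which is preserved by the initial assignments), and after Fixed-Assignment followed by Tentative-Assignment no $S\in\{S_1,S_2,S_3,S_4\}$ can be assigned to any remaining agent, so Corollary~\ref{cor:Bound-on-Values} applies.

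Now suppose for contradiction that some $i\in N^{21}$ has $\mu_i^n(M)=1$. Then agent $i$ satisfies exactly the hypotheses of Theorem~\ref{thm:fb} (she is in $N^2$, the instance is ordered, her MMS value equals $1$, and no further valid reductions apply). Applying Theorem~\ref{thm:fb} to $i$ gives $v_i(M\setminus J) \geq x_i + l_i/8$, contradicting the second condition in the definition of $N^{21}$ which asserts $v_i(M\setminus J) < x_i + l_i/8$. Therefore $\mu_i^n(M) \neq 1$, and combined with $\mu_i^n(M)\leq 1$, we conclude $\mu_i < 1$ as required.

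The proof is essentially a one-line application of the contrapositive of Theorem~\ref{thm:fb}; there is no real obstacle. The only subtlety to mention is justifying that Theorem~\ref{thm:fb}'s framework is still applicable after the re-normalization in Section~\ref{sec:MMS-NOT-known}, which follows from the fact that all the structural preconditions (ordered instance, no available valid reduction via $S_1,\ldots,S_4$, and hence Corollary~\ref{cor:Bound-on-Values} with its downstream lemmas) depend only on relative valuations and the state of the algorithm at the point where the bag-filling stage would begin.
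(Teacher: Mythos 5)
Your proof is correct and takes essentially the same route as the paper: the paper simply remarks that Theorem~\ref{thm:fb} is applicable to all agents in $N^2$, which yields the corollary by contrapositive, and you have spelled out exactly that argument --- the normalization gives $\mu_i\le 1$, and if $\mu_i=1$ held then Theorem~\ref{thm:fb} would force $v_i(M\setminus J)\ge x_i+l_i/8$, contradicting the definition of $N^{21}$.
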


The next theorem gives a new upper bound for these agents.

\begin{theorem}\label{theo-updatupperbound}
For any $i\in N^{21}$, $\mu_i \leq \alpha^* = \max \{\alpha_1,\alpha_2,\alpha_3,\alpha_4,\alpha_5 \}$, where 
\begin{align*}
    \alpha_1 = \tfrac{4}{3} v_{i1} \hspace{4.3cm} & \ \ \ \ \ \alpha_2 = \tfrac{4}{3}(v_{in}+v_{i(n+1)}) & \\ 
    \alpha_3 = \tfrac{4}{3} (v_{i(2n-1)}+v_{i(2n)}+v_{i(2n+1)})  & \ \ \ \ \ \alpha_4 = \tfrac{4}{3} (v_{ik}+v_{ik'}) & \\ 
    \alpha_5 =\max \Big\{\alpha: \tfrac{v_i(M \setminus J)}{\alpha} \ge \sum_{B_k: \tf{v_i(B_k)}{\alpha}<\tf34} & (\tf34 - \tfrac{v_i(B_k)}{\alpha}) +\tf18\big|\big\{B_k: \tfrac{v_i(B_k)}{\alpha}<\tf34\big\}\big| \Big\} \enspace,
\end{align*}
where $k$ and $k'$ are respectively the maximum value items from $J$ and $M\setminus J$ that were not tentatively assigned, and $\langle N,M,V \rangle$ refers to the instance after we undo the tentative assignments in Line~\ref{line:undo} of Algorithm~\ref{algo-threefourth}. 
\end{theorem}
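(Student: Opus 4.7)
My plan is to argue by contradiction: assume $\mu_i > \alpha := \max_k \alpha_k$ and work in the undone instance $\langle N, M, V\rangle$. Using scale invariance (Lemma~\ref{lem:Scale}), rescale $v_i$ by $1/\mu_i$ to obtain $v_i' := v_i/\mu_i$ whose rescaled MMS value for $i$ equals $1$. The bounds $\alpha_1, \alpha_2, \alpha_3$ are calibrated so that $v_i'(S_1), v_i'(S_2), v_i'(S_3)$ each drop strictly below $3/4$; by Corollary~\ref{cor:Bound-on-Values} this in turn forces $v_i'(j) < 1/4$ for every $j \in M \setminus J$. The bound $\alpha_4$ plays the analogous role for the $S_4$-style reduction: because $k \in J$ and $k' \in M \setminus J$ are the largest-value items that escaped tentative assignment, they dominate whatever pair would actually instantiate an $S_4$-bundle for $i$ under the rescaled valuations, so $v_i'$ stays below $3/4$ on such bundles too. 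Combined with the fact that Fixed-Assignment has already terminated without satisfying $i$, this shows that $i$ would not have been selected by any of the four canonical reductions after rescaling.

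Next, membership $i \in N^{21} \subseteq N^2$ gives $k_i > 0$ (Lemma~\ref{lem:three}), so some bag $B_k$ satisfies $v_i(B_k) > 1 \ge \mu_i$, hence $v_i'(B_k) > 1$, placing $i$ in the analogous $N^2$ set of the rescaled instance. Invoking Theorem~\ref{thm:fb} on the rescaled instance then yields
\[
v_i'(M \setminus J) \;\ge\; \tilde{x}_i + \tilde{l}_i/8,
\]
where $\tilde{x}_i, \tilde{l}_i$ are the analogues of $x_i, l_i$ computed under $v_i'$. The coefficient $7/8$ in $\alpha_5$ is engineered exactly so that $\mu_i > \alpha_5$, after multiplying back through by $\mu_i$ and matching the bag-by-bag accounting, contradicts the rescaled lower bound: it delivers $v_i(M \setminus J) \ge x_i + l_i/8$, contradicting the defining property $v_i(M\setminus J) < x_i + l_i/8$ of $N^{21}$.

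\textbf{Main obstacle.} The delicate technical point is the translation between rescaled and unrescaled quantities. Since $\mu_i \le 1$, the $3/4$-cutoff defining $L_i$ pulls back to the smaller cutoff $3\mu_i/4$ under the rescaling, so individual bags may migrate between $L_i$, $K_i$, and the intermediate region, meaning $\tilde{L}_i$ is in general a strict subset of $L_i$. Pinning down this migration via a case analysis on individual bag values — and exploiting the $1/8$ slack inherent in Theorem~\ref{thm:fb} that the $7/8$ coefficient in $\alpha_5$ mirrors — is what closes the argument and converts the rescaled inequality into its unscaled counterpart.
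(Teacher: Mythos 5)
Your proposal mirrors the paper's proof closely: assume $\mu_i = \beta > \alpha$, rescale $v_i$ by $1/\beta$ so the rescaled MMS value is $1$, use $\alpha_1,\dots,\alpha_4$ to keep $i$ out of the greedy reductions, place $i$ in the rescaled $N^2$, and combine $\alpha_5$ with Theorem~\ref{thm:fb} (via Corollary~\ref{cor:t2B}) to reach a contradiction. However, there are two omissions. First, you only verify $k_i>0$, but membership in $N^{21}$ also requires $k_i>l_i$, and you never confirm that this inequality survives the rescaling. Second, you explicitly leave the translation of the $\alpha_5$ condition across the rescaling as a gap, deferring to a ``case analysis on individual bag values.'' The paper resolves both at once with a single monotonicity observation: dividing $v_i$ by $\beta\le 1$ increases the value of every bag $B_k$, so $k_i$ cannot decrease and $l_i$ cannot increase; hence $k_i>l_i$ is preserved, and this monotonicity together with $\beta>\alpha_5$ keeps $i$ in the rescaled $N^{21}$, whence $\mu_i<\beta$ follows from Corollary~\ref{cor:t2B}, a contradiction. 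That monotonicity observation is the ingredient your proposal is missing; once added, your high-level plan is essentially the paper's proof.
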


\begin{proof}
For a contradiction, suppose $\mu_i = \beta$ and $\beta > \alpha^*$. We have 
\begin{align*}
    \beta > \alpha_1 &\implies v_{i1} < \tf34\beta \\ 
    \beta > \alpha_2 &\implies (v_{in}+v_{i(n+1)}) < \tf34\beta\\
    \beta > \alpha_3 &\implies (v_{i(2n-1)}+v_{i(2n)}+v_{i(2n+1)})< \tf34\beta \\ 
    \beta > \alpha_4 &\implies (v_{ik}+v_{ik'})< \tf34\beta \enspace .
\end{align*}

From this, we can conclude that $v_i (S) < (\tf34)\mu_i$ for $S \in \{S_1,S_2,S_3 \}$. Therefore, agent $i$ will not be satisfied in Fixed-Assignment step. Further, since each bundle in $\{S_1, S_2, S_3\}$ is worth strictly less than $(\tf34)\mu_i$ to agent $i$, all $S_4$ bundles, which are tentatively assigned, (see Algorithm~\ref{algo-tentative}) have value at most $\mu_i$, and hence all tentative assignments are valid reductions (Lemma~\ref{lem:valid-reduction}). Also, since $\beta > \alpha_4$, no new pair that form $S_4$ will satisfy $i$. Furthermore, by normalizing $v_{ij}$'s (i.e., scaling them by $1/\beta$), the value of the bags in (\ref{eq:B_i}) will increase and consequently $h_i$ will not decrease therefore, $h_i$ will remain greater than $0$. Consequently, since $h_i> 0$ and $\beta >\alpha_5$, agent $i$ will remain in $N^{21}$ even after we normalize her valuation with respect to $\beta$. This is because, by definition of $\alpha_5$, this is the maximum value that we can normalize the valuation such that agent $i$ will not satisfy the definition of $N^{21}$ in \eqref{eqn:N21}. Note that by normalizing the valuations, $x_i$ and $l_i$ will be affected that we have taken into account in definition of $\alpha_5$.

It implies that $\mu_i < \beta$ (Corollary~\ref{cor:t2B}), which is a contradiction.
\end{proof}

\begin{remark}
It is easy to see from \eqref{eqn:N21} that if we scale the valuations by $\tfrac{1}{\alpha}$ for $\alpha = v(M \setminus J)/(x_i + \tfrac{l_i}{8})$, then agent $i$ will not remain in $N^{21}$. However, $\alpha_5 $ can be more than $\alpha$ because if we scale the valuations by $\tfrac{1}{\alpha}$ for $0 < \alpha <1$ both $x_i$ and $l_i$ will decrease (see \eqref{eqn:LKx1}). 
\end{remark}

\begin{remark}
$\alpha_5$ can be computed in $O(n)$ time using a simple procedure. Sort the value of bags in $L_i$. Let $u^k$ be the value of the $k^{\text{th}}$ highest value bag in $L_i$. If $\alpha \in (1, u^1/(\tf34))$ and we scale the valuations by $\tfrac{1}{\alpha}$ then all bags, which were in $L_i$ will remain in $L_i$. Similarly, if $\alpha \in [u^k/(\tf34), u^{k+1}/(\tf34))$ then $l_i$ will be reduced by $k$. After figuring out which bags remain in $L_i$ and which bags do not, we can compute $x_i$ with respect to $\alpha$. Then, we can check whether there exists an $\alpha$ in the range for each $k$, starting with $k=1$, that holds the definition of $\alpha_5$. If not, then we move to the next $k$.
\end{remark}

\begin{example}\label{ex:N21}
Consider an example where, after the execution of Algorithm \ref{algo-tentative}, $n=30$ (hence $|J|=60$). The valuation of some agent $i \in N^{21}$ for items in $J$ is shown in Table \ref{tbl:vals2} and her valuation for bags is shown in Table \ref{tbl:bags2}. In this example, we show how to find $\alpha_5$.
\begin{table}[h!]
\caption{Valuation of agent $i$ for items in $J$ in Example \ref{ex:N21}}\label{tbl:vals2}
\centering
\begin{tabu}to0.8\linewidth{||l| X[c] X[c] X[c] X[c] X[c]  ||}
 \hline
  Item & $1-27$ & $28-30$ & $31$ & $32$&$33-60$ \\ [0.5ex] 
 \hline\hline
 $v_{ij}$ & $\tf{29}{40}$ & $\tf{15}{40}$ & $\tf{14}{40}$ & $\tf{13}{40}$ & $\tf{12}{40}$   \\ [1ex] 
 \hline
\end{tabu}
\end{table}
 
\begin{table}[h!]
\caption{The value of the bags in Example \ref{ex:N21}.}\label{tbl:bags2}
\centering
\begin{tabu}to0.8\linewidth{||l| X[c] X[c] X[c] X[c] ||}
 \hline
  Bag & $B_1-B_{27}$ & $B_{28}$ & $B_{29}$ & $B_{30}$\\ [0.5ex] 
 \hline\hline
 $v_{i}(B_k)$ & $\tfrac{41}{40}$ & $\tfrac{27}{40}$ & $\tfrac{28}{40}$ & $\tfrac{29}{40}$   \\  \hline
 Bag type& $H_i$ & $L_i$ & $L_i$ & $L_i$ \\ [1ex] 
 \hline
\end{tabu}
\end{table}

Let $v_i(M \setminus J) = \tf9{40}$. First, observe that $h_i = 27, l_i=3,x_i = \tf6{40},$ and $v_i(M\setminus J)= \tf9{40} < \tf{21}{40} = \tf{l_i}8 + x_i  $. It shows that agent $i$ belongs to $N^{21}$. 

In order to obtain $\alpha_5$, we check how the value of $l_i$ would change after scaling the valuations of agent $i$ by $\tf1{\alpha_5}$. \begin{itemize}
    \item Let $\alpha_5 > \tf{29}{40}$ then $l_i = 3$. We check if there exists an $\alpha_5$ in this range for which the following bound holds.  \begin{equation}
        \label{eq:N21a}
        \tf1{\alpha_5}\times v_i(M\setminus J) \ge \tf38+ (\tf34 - \tf{v_i(B_{28})}{\alpha_5})+(\tf34 - \tf{v_i(B_{29})}{\alpha_5})+(\tf34 - \tf{v_i(B_{28})}{\alpha_5})
    \end{equation} 
    \item If not, let $ \tf{28}{40} < \alpha_5 \le \tf{29}{40}$ then $l_i = 2$ ($B_{30}$ will not belong to $L_i$ anymore). We check if there exists an $\alpha_5$ in this range for which the following bound holds. \begin{equation}
        \label{eq:N21b}
        \tf1{\alpha_5}\times v_i(M\setminus J) \ge \tf28+ (\tf34 - \tf{v_i(B_{28})}{\alpha_5})+(\tf34 - \tf{v_i(B_{29})}{\alpha_5})
    \end{equation}  
    \item If not, $ \tf{27}{40} < \alpha_5 \le \tf{28}{40}$ then $l_i = 1$. We check if there exists an $\alpha_5$ in this range for which the following bound holds \begin{equation}
        \label{eq:N21c}
        \tf1{\alpha_5}\times v_i(M\setminus J) \ge \tf18+ (\tf34 - \tf{v_i(B_{28})}{\alpha_5})
    \end{equation}  
    \item Else, $\alpha_5 = \tf{27}{40}$ ($l_i =0$).
\end{itemize}

Note that in \eqref{eq:N21a}, \eqref{eq:N21b}, and \eqref{eq:N21c} we have obtained the sum of $\tf{l_i}8$ and the new $x_i$ after scaling the valuations.
It takes at most $l_i$ steps (3 steps in this example) to find the maximum $\alpha_5$ for which one of the bounds above holds.
\end{example}

We pick an agent $i$ with the lowest index in $N^{21}$ and update $i$'s valuation as $v_{ij} \gets \frac{v_{ij}}{\alpha}, \forall j \in M$ and repeat. In Theorem~\ref{thm:rt}, we show that the number of updates is at most $n^3$ in the entire run of algorithm. 

\subsection{Bag Filling and Running Time}\label{sec:bag-filling-UNKNOWN}
Note that we reach the bag filling stage only when $N^{21}$ (see~\eqref{eqn:N21}) is empty. Here, we use Algorithm~\ref{algo:Bag-Filling-MAIN} of Section~\ref{sec:Bag-Filling-Known}. Since $N^{21}$ is empty, Lemma~\ref{lem:bag-filling-terminates} is applicable, which implies a $\tf34$-MMS allocation. Next, we show that the entire algorithm runs in strongly polynomial time. 

\begin{theorem}\label{thm:rt}
The entire algorithm runs in $O(n^5m)$ time for an ordered instance, and in $O(nm(n^4 + \log{m}))$ time for any instance.
\end{theorem}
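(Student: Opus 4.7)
The plan is to decompose the running time analysis into three parts: (i) the per-call cost of Fixed-Assignment, Tentative-Assignment, the undo step, and Bag-Filling; (ii) a bound on the number of iterations of the outer while loop in Algorithm~\ref{algo-threefourth}; and (iii) the preprocessing cost of sorting required by Lemma~\ref{lem:order} when the input is not already ordered. Multiplying (i) by (ii) will yield the running time for ordered instances, and adding (iii) will yield the general bound.

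For part (i), in a single call to Fixed-Assignment or Tentative-Assignment, at most $n$ assignments are made since each assignment removes one agent. In each inner iteration, locating the lowest-index bundle $S\in\{S_1,S_2,S_3,S_4\}$ and the lowest-index agent in $\Gamma(S)$ takes $O(n)$ time: every bundle has constant cardinality, and the items of $M$ can be maintained in sorted order with $O(1)$ access to the $j$-th largest. The dominant cost per assignment is renormalizing valuations for the remaining agents so that $v_{i'}(M)=|N|$, which requires $O(nm)$ work. Hence a single call costs $O(n^2m)$; the undo step in line~\ref{line:undo} rolls back at most $n$ tentative assignments at the same $O(n^2m)$ cost; computing the threshold $\alpha$ of Theorem~\ref{theo-updatupperbound} is $O(m)$; and Bag-Filling is $O(nm)$ and never dominates.

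The main obstacle, and the technical crux of the theorem, is showing that the outer while loop iterates at most $O(n^3)$ times. I would prove this via an integer-valued monovariant that strictly changes with every MMS upper-bound update. After scaling $v_i$ by $1/\alpha$ with $\alpha<1$ (Corollary~\ref{cor:t2B}), agent $i$'s valuations only increase, so $v_i(B_k)$ weakly grows for every bag $B_k$ in~\eqref{eq:B_i}. Consequently, $k_i$ weakly increases, $l_i$ weakly decreases, $x_i$ weakly decreases, and $v_i(M\setminus J)$ weakly increases. If neither $k_i$ nor $l_i$ strictly changes, then the $\alpha_5$ bound in Theorem~\ref{theo-updatupperbound} forces $v_i(M\setminus J)\geq x_i+l_i/8$, so $i$ exits $N^{21}$ and cannot be re-updated until the bag structure itself changes through removals elsewhere. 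Since $k_i,l_i\in\{0,1,\dots,n\}$, each agent can trigger at most $O(n)$ updates driven by her own count changes; however, $N$ shrinks at most $n$ times over the entire algorithm, and each such removal can revive at most $O(n)$ agents into $N^{21}$, yielding the overall $O(n^3)$ bound on outer iterations.

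Combining the pieces, the main loop costs $O(n^3)\cdot O(n^2m)=O(n^5m)$, establishing the ordered-instance bound. For general instances, the only additional cost is sorting each of the $n$ agents' valuation vectors in decreasing order of items, which takes $O(nm\log m)$ time and combines with the main bound to give $O(nm(n^4+\log m))$. The delicate ingredient is the $O(n^3)$ update count: completing the monovariant argument rigorously will require carefully tracking how removals of other agents and items can reset the bag structure for a given agent, and this is where I expect the technical bulk of the proof to reside.
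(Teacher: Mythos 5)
Your decomposition of the cost into per-iteration work, iteration count, and sorting preprocessing matches the paper exactly, and the per-iteration bound of $O(n^2m)$ (dominated by renormalization after each of up to $n$ reductions) and the $O(nm\log m)$ preprocessing bound are both the same as the paper's. The genuine difference is in how you bound the number of while-loop iterations, and that is where there is a gap.

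You attempt a monovariant argument: under scaling $v_i \mapsto v_i/\alpha$ with $\alpha<1$, the quantities $v_i(B_k)$ only grow, so $k_i$ weakly increases and $l_i$ weakly decreases, and if neither changes then the $\alpha_5$ bound forces $i$ out of $N^{21}$. The flaw is that the bags $B_k$ in \eqref{eq:B_i} are not a fixed sequence of item pairs across iterations: they are defined relative to $J_1,J_2$ in the \emph{tentative} state $(N^t,M^t,V^t)$, and Tentative-Assignment is rerun from scratch every iteration of the outer loop. Scaling agent $i$'s valuations changes which bundles she qualifies for, which can change the set of tentative $S_4$ assignments, which in turn changes $M^t$, $n^t$, and therefore the identities of the items in every $B_k$. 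So $v_i(B_k)$ need not be monotone across iterations, and neither are $k_i$ and $l_i$; your monovariant can regress without any permanent removal from $N$. Your accounting ``$n$ removals, each reviving $O(n)$ agents, each agent $O(n)$ self-driven updates'' also does not compose cleanly into $O(n^3)$: as written it over- or under-counts depending on whether a removal resets the per-agent budget, and you would need to argue that it does. You flag this as the delicate part, which is fair, but the missing idea is more structural than you indicate.

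The paper avoids this by not tracking $k_i,l_i$ at all. Instead it classifies each update by which of $\alpha_1,\dots,\alpha_5$ achieved the maximum: updates via $\alpha_1,\alpha_2,\alpha_3$ immediately trigger a fixed assignment of that agent (she leaves $N$ permanently), so there are at most $n$ of these; an update of agent $a$ via $\alpha_4$ can recur only after some intervening $\alpha_1/\alpha_2/\alpha_3$ event, giving at most $O(n)$ per agent and $O(n^2)$ overall; and an update via $\alpha_5$ can recur only after some intervening $\alpha_1/\alpha_2/\alpha_3/\alpha_4$ event, giving $O(n^2)$ per agent and $O(n^3)$ overall. This hierarchy is the key ingredient you are missing: it ties re-entry into $N^{21}$ to discrete, globally bounded ``reset events'' rather than to a per-agent integer potential that the tentative-assignment machinery can perturb arbitrarily.
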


\begin{proof}
In Algorithm~\ref{algo-threefourth}, normalization takes $O(mn)$ arithmetic operations. Both Fixed-Assignment and Tentative-Assignment procedures take at most $O(n^2m)$ arithmetic operations. The bag filling procedure takes at most $O(n^2m)$ time. Each iteration of the while loop takes $O(n^2)$ arithmetic operations to find the set $N^{21}$ of agents, $O(m)$ arithmetic operations to update valuations of an agent $i\in N^{21}$, and $O(n^2m)$ arithmetic operations to run Fixed-Assignment and Tentative-Assignment procedures. Therefore, each iteration of the while loop takes at most $O(n^2m)$ arithmetic operations.

To bound the number of iterations of the while loop, we upper bound the number of times it is run for a particular agent, say $a$, in $N^{21}$. Consider the first iteration when $a$ is the lowest index agent, then if we update $a$'s valuation due to $\alpha \in \{\alpha_1, \alpha_2, \alpha_3\}$ (see Section~\ref{sec-update}), then $a$ gets a fixed assignment after this iteration, and we will not see her again. If we update $a$'s valuation due to $\alpha = \alpha_4$, then in the future iterations, agent $a$ will not be in $N^{21}$ unless the instance is reduced due to a fixed assignment. This could happen only when another agent in $N^{21}$ ends her iteration due to $\alpha \in \{\alpha_1, \alpha_2, \alpha_3\}$. Clearly, this can occur at most $O(n)$ time. Finally, if we update $a$'s valuation due to $\alpha=\alpha_5$, then agent $a$ will not be in $N^{21}$ again unless the instance is reduced due to either fixed or tentative assignment, which can affect the set of agents in $N^{21}$. This could happen only when another agent of $N^{21}$ ends her iteration due to $\alpha\in\{\alpha_1,\alpha_2,\alpha_3,\alpha_4\}$. Clearly, this can occur at most $O(n^2)$ time. Therefore, the maximum number of iterations of the while loop is at most $O(n^3)$. Since each iteration takes at most $O(n^2m)$ arithmetic operations, Algorithm~\ref{algo-threefourth} takes $O(n^5m)$ time. Note that this is for an ordered instance.

The reduction from the general instance to ordered instance is given in~\ref{sec:mp}. Algorithm~\ref{algo:Order} takes $O(mn\log{m})$ time to make the original instance ordered by creating a sorted list of items for each agent. Next, we show that Algorithm~\ref{algo:Reorder} can be implemented in $O(mn)$ time. Initialize a binary array $A$ of size $m$ to all ones. $A[j]$ indicates whether item $j$ is available or not; 1 means available. Each agent has a sorted list of items from the Algorithm~\ref{algo:Order}. From the solution of the ordered instance using Algorithm~\ref{algo-threefourth}, we can create an array $B$ of size $m$, where $B[j]$ stores the agent who is assigned the item $j$. Observe that $B$ can be constructed in $O(m)$ time. Then, for each item $j$ from $1$ to $m$, we get the agent $B[j]$ who is assigned $j$, and then we try to give $B[j]$ the highest item in her list. If her highest item is not available, which we can check from $A$, then we delete this item from $B[j]$'s list, and move down to the next highest item and repeat until we reach at an available item. We give this item to $B[j]$ and mark it assigned in $A$. Since the size of each agent's list is $m$, the total time is $O(mn)$. Using Algorithms~\ref{algo:Order} and ~\ref{algo:Reorder} together with Algorithm~\ref{algo-threefourth}, the running time of the entire procedure is $O(nm(n^4+\log{m}))$.  
\end{proof}

\section{Existence of $(\frac{3}{4}+\frac{1}{12n})$-MMS Allocation}\label{sec:last}
In this section, we show that our approach in Section~\ref{sec:MMS-known} can be extended to obtain the existence of a ($\frac{3}{4}+\gamma$)-MMS allocation for any given instance $\I=\langle N,M,V\rangle$ where $\gamma=\frac{1}{12n}$. We note that $\gamma$ is a constant for the given instance, where $n:=|N|$. We assume that the MMS value $\mu_i$ of each agent $i$ is given. Finding an exact $\mu_i$ is an NP-Hard problem, however a PTAS exists~\cite{woeginger1997polynomial}. This implies a PTAS to compute a $(\frac{3}{4}+\gamma-\epsilon)$-MMS allocation for any $\epsilon>0$. Using the properties shown in Section~\ref{sec:properties}, we normalize valuations so that $\mu_1=1, \forall i$ and assume that $\I$ is an ordered instance, i.e., $v_{i1} \geq \dots \geq v_{i|M|}, \forall i$. Our proof is algorithmic. If more than one agent satisfies the conditions \eqref{eqn:red} of valid reduction, then we choose one arbitrarily.

For the ease of exposition, we abuse notation and use $M$ and $N$ to denote the set of unallocated items and the set of agents who have not received any bundle yet, respectively.  Further, we use $j$ to denote the $j^{\text{th}}$ highest value item in $M$. Moreover, we use $n:=|N|$ and $m:=|M|$. This is the reason we use $\gamma$, which is a constant for a given instance, to denote the approximation factor.

The approach is identical to Section~\ref{sec:MMS-known}. Here, we run Algorithm~\ref{algo:first} with $\alpha=\FRAC$. The analysis is also almost same. Here, we crucially use the extra $\tf18$ (see \eqref{eqn:mb1} and the paragraph after it) to obtain a better factor. To avoid repetition, we only highlight the differences. 

\subsection{Initial Assignment}\label{sec:f-initial-assignment}
In Algorithm \ref{algo:initial}, we keep assigning a bundle $S\in \{S_1,S_2,S_3,S_4\}$ to agent $i$, if any, for which $v_i(S)\geq \FRAC$. Then, we update $M$ and $N$ to respectively reflect the current unallocated items and agents who are not assigned any bundle yet. Observe that Lemma~\ref{lem:valid-reduction} is applicable here to show that assigning a bundle $S\in\{S_1,S_2,S_3\}$ is a \emph{valid reduction}. However, it only applies for $S_4$ when $v_i(S_4) \le 1,  \forall i$. 

As in Section \ref{sec:MMS-known}, let $J_1 := \{1,\dots,n\}$ denote the set of first $n$ items. Similarly, define $J_2:=\{n+1,\dots,2n\}$ and $J:=J_1 \cup J_2$. The following corollary is straightforward.

\begin{corollary}\label{cor:f-Bound-on-Values}
If $v_i(S)< \FRAC, \forall i$ and $\forall S \in\{S_1,S_2,S_3\}$, then $(i)$ $v_{ij} < \FRAC,\ \forall j \in J_1$,\ $(ii)$ $v_{ij} < \frac{3}{8}+\frac{\gamma}{2},\ \forall j \in J_2$, and $v_{in} <\FRAC - v_{i(n+1)}$, and $(iii)$ $v_{ij} < \frac{1}{4} + \frac{\gamma}{3},\ \forall j \in M\setminus J$, for all $i$.
\end{corollary}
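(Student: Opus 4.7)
The proof is a direct consequence of the ordered‐instance assumption $v_{i1}\ge v_{i2}\ge\cdots\ge v_{im}$ together with the three inequalities $v_i(S_1),v_i(S_2),v_i(S_3)<\FRAC$. My plan is simply to translate each $v_i(S_k)<\FRAC$ into a per‐item upper bound by using the ordering to replace every summand in $S_k$ by the smallest one, and then to propagate that smallest‐index bound to every item of the relevant range.

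First I would dispatch part (i). Since $S_1=\{1\}$, we immediately get $v_{i1}<\FRAC$, and then the ordering yields $v_{ij}\le v_{i1}<\FRAC$ for every $j\in J_1=\{1,\dots,n\}$. Next I would handle part (ii): from $S_2=\{n,n+1\}$ and $v_{in}\ge v_{i(n+1)}$ we have $2v_{i(n+1)}\le v_{in}+v_{i(n+1)}=v_i(S_2)<\FRAC$, so $v_{i(n+1)}<\tfrac{3}{8}+\tfrac{\gamma}{2}$; then for any $j\in J_2=\{n+1,\dots,2n\}$ the ordering gives $v_{ij}\le v_{i(n+1)}<\tfrac{3}{8}+\tfrac{\gamma}{2}$, and the companion inequality $v_{in}<\FRAC-v_{i(n+1)}$ follows directly by rearranging $v_i(S_2)<\FRAC$.

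For part (iii), I would use $S_3=\{2n-1,2n,2n+1\}$ and the ordering $v_{i(2n-1)}\ge v_{i(2n)}\ge v_{i(2n+1)}$ to deduce $3v_{i(2n+1)}\le v_i(S_3)<\FRAC$, hence $v_{i(2n+1)}<\tfrac{1}{4}+\tfrac{\gamma}{3}$; the ordering then extends this bound to every $j\in M\setminus J$, since such $j$ has index at least $2n+1$ and so $v_{ij}\le v_{i(2n+1)}$.

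There is essentially no obstacle here—the only thing to be slightly careful about is making sure the inequalities are strict (so that the later analysis can freely combine them), which is preserved because each $v_i(S_k)<\FRAC$ is strict and the ordering inequalities are non‐strict. No normalization, MMS properties, or valid reduction machinery is required beyond what is already in force after the Initial Assignment step, so the whole argument fits comfortably in a short paragraph.
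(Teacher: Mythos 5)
Your proof is correct and is exactly the straightforward argument the paper has in mind (it declares the corollary ``straightforward'' and omits the proof): each of the three parts follows by bounding the smallest item in $S_1$, $S_2$, $S_3$ via the ordering and then propagating that bound to the rest of $J_1$, $J_2$, and $M\setminus J$ respectively.
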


However, the value of $S_4$ might be strictly greater than 1 for an agent $i$. Corollary~\ref{cor:f-Bound-on-Values} implies that, $v_i(S_4) < \frac{3}{4}+ \gamma +\frac{1}{4}+\frac{\gamma}{3} = 1+\frac{4\gamma}{3} .$
We use dummy items to fix the issue of extra $\frac{4\gamma}{3}$ lost in assigning $S_4$. 

\paragraph{Dummy items ($D_1$):} For each removed $S_4$ in Algorithm \ref{algo:initial}, we add one dummy item $d_j$ such that 
\begin{equation}\label{eq:dummy-value}
    v_i(\{d_j\}) = \frac{4\gamma}{3} \ \ \ \forall i \in N\enspace .
\end{equation}  
Let $D_1$ denote the set of all dummy items after the termination of Algorithm \ref{algo:initial}. We note that dummy items will not be assigned to any agent. They are defined to make proofs easier. Later, we introduce two more sets $D_2$ and $D_3$ of dummy items in Section~\ref{sec:f-Bag-Filling-Known}. Let $D:= D_1 \cup D_2 \cup D_3$ with $D_2=D_3=\emptyset$ currently. The proof of the following corollary easily follows from Lemma~\ref{lem:valid-reduction}.

\begin{corollary}\label{cor:MMS+Dummy}
For any remaining agent $i\in N$, $\mu_i^{n}(M\cup D) \ge 1$.
\end{corollary}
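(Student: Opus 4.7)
The plan is induction on the sequence of reductions performed by Algorithm~\ref{algo:initial}. The base case holds by the normalization $\mu_i^{n}(M)=1$ together with $D=\emptyset$. For the inductive step, assume $\mu_{i'}^{|N|}(M\cup D)\ge 1$ for every remaining agent, and consider the next reduction, which removes bundle $S$ together with the agent $i$ to whom it is assigned.

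If $S\in\{S_1,S_2,S_3\}$, no new dummy is introduced, and the three arguments from the proof of Lemma~\ref{lem:valid-reduction} apply verbatim to the item set $M\cup D$: starting from any partition in $\MP_{i'}^{|N|}(M\cup D)$, one rearranges it into a partition of $(M\cup D)\setminus S$ into $|N|-1$ bundles each of value at least $1$ for $i'$. The same argument covers $S=S_4$ whenever $v_i(S_4)\le 1$.

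The interesting case, and the main obstacle, is $S=S_4$ with $v_i(S_4)>1$, where one fresh dummy $d$ with $v_{i'}(d)=4\gamma/3$ (see~\eqref{eq:dummy-value}) is added to $D$ to form $D'$. Take any $P\in\MP_{i'}^{|N|}(M\cup D)$. If items $1$ and $2n+1$ lie in the same bundle of $P$, removing them preserves the remaining $|N|-1$ bundles at value $\ge 1$ and we are done (the new dummy is a free bonus). Otherwise, let $1\in P_k$ and $2n+1\in P_{k'}$ with $k\ne k'$; replace $P_k,P_{k'}$ by the two bundles $\{1,2n+1\}$ (discarded together with $S_4$) and $B:=(P_k\cup P_{k'})\setminus\{1,2n+1\}$. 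Because $S_4$ is considered only once no $S_1,S_2,S_3$ satisfies any agent, Corollary~\ref{cor:f-Bound-on-Values} gives $v_{i'1}<\tfrac{3}{4}+\gamma$ and $v_{i'(2n+1)}<\tfrac{1}{4}+\tfrac{\gamma}{3}$, so
\[
v_{i'}(B) \;\ge\; v_{i'}(P_k)+v_{i'}(P_{k'})-v_{i'1}-v_{i'(2n+1)} \;>\; 2-\Bigl(\tfrac{3}{4}+\gamma\Bigr)-\Bigl(\tfrac{1}{4}+\tfrac{\gamma}{3}\Bigr) \;=\; 1-\tfrac{4\gamma}{3}\enspace .
\]
Attaching the new dummy $d$ to $B$ restores $v_{i'}(B\cup\{d\})\ge 1$, while the remaining $|N|-2$ bundles are untouched. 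This exhibits a partition of $(M\setminus S_4)\cup D'$ into $|N|-1$ bundles each of value $\ge 1$ for $i'$, completing the induction.

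In short, the value $4\gamma/3$ chosen in~\eqref{eq:dummy-value} is tuned precisely to offset the worst-case deficit that an $S_4$-reduction can inflict on a single bundle of a remaining agent's optimal MMS partition; every other type of reduction preserves the MMS lower bound for free, so exactly one dummy per $S_4$-removal suffices.
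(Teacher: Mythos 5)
Your proof is correct and supplies exactly the verification the paper leaves implicit — the paper only remarks that the corollary ``easily follows from Lemma~\ref{lem:valid-reduction}'' and moves on. The key computation, that an $S_4$-reduction can lower a remaining agent's MMS by at most $\tfrac{4\gamma}{3}$ (using $v_{i'1}<\tfrac34+\gamma$ and $v_{i'(2n+1)}<\tfrac14+\tfrac{\gamma}{3}$ from Corollary~\ref{cor:f-Bound-on-Values}), and that the dummy of value $\tfrac{4\gamma}{3}$ exactly restores it, is the heart of the matter and you get it right. The induction framing and the check that the $S_1,S_2,S_3$ pigeonhole rearrangements carry over to the enlarged ground set $M\cup D$ (the dummies never interfere with the pigeonhole counts on $\{1,\dots,2n+1\}\subseteq M$) are also sound.

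Two small imprecisions worth flagging. First, the paper adds a dummy for \emph{every} $S_4$ removal, not only when $v_i(S_4)>1$; your case split on $v_i(S_4)$ is therefore unnecessary — the argument in your final paragraph already covers every $S_4$ removal uniformly, with the dummy being ``a free bonus'' in the benign case as you yourself observe. Second, the clause that Lemma~\ref{lem:valid-reduction}'s $S_4$ argument ``applies verbatim whenever $v_i(S_4)\le 1$'' is shaky as stated: the relevant quantity is the remaining agent $i'$'s value $v_{i'1}+v_{i'(2n+1)}$, not the assigned agent's, and with $\alpha=\tfrac34+\gamma$ Corollary~\ref{cor:f-Bound-on-Values} only bounds this by $1+\tfrac{4\gamma}{3}$, which is strictly weaker than the $<1$ used in Lemma~\ref{lem:valid-reduction}. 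So the verbatim claim should be dropped and replaced by your uniform dummy argument, which in fact handles this case with no extra work.
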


The proof of the following lemma easily follows using \eqref{eq:dummy-value} and Corollary~\ref{cor:MMS+Dummy}.

\begin{lemma}\label{lem:1/9n-MMS-Decrease}
When Algorithm \ref{algo:initial} terminates, we have $\mu_i^n(M) \ge 1 - \frac{4\gamma r}{3}$ and $v_i(M) \ge n - \frac{4\gamma r}{3}, \forall i$, where $r$ is the total number of rounds in Algorithm \ref{algo:initial} when $S_4$ is assigned.
\end{lemma}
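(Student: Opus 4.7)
My plan is to use the dummy-item framework introduced just before the lemma as a bridge between the original pre-reduction instance and the current one, with Corollary~\ref{cor:MMS+Dummy} doing the heavy lifting. Note that when Algorithm~\ref{algo:initial} terminates, we have $D_2 = D_3 = \emptyset$, so $D = D_1$ and $|D| = r$. By \eqref{eq:dummy-value} every dummy has value exactly $4\gamma/3$ to every agent, so $v_i(D) = 4\gamma r/3$ for all $i \in N$.

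For the MMS bound, I would invoke Corollary~\ref{cor:MMS+Dummy} to obtain a partition $P = \{P_1, \dots, P_n\} \in \MP^n_i(M \cup D)$ with $v_i(P_k) \ge 1$ for every $k$. Forming the induced partition $P'$ of $M$ by setting $P'_k = P_k \setminus D$ gives
\[
v_i(P'_k) = v_i(P_k) - v_i(P_k \cap D) \ge 1 - v_i(D) = 1 - \tfrac{4\gamma r}{3},
\]
since $v_i(P_k \cap D) \le v_i(D)$. Taking the minimum over $k$ and then the maximum over partitions of $M$ yields $\mu_i^n(M) \ge 1 - 4\gamma r/3$.

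For the total-value bound, the cleanest route is to combine Lemma~\ref{lem:avg} with Corollary~\ref{cor:MMS+Dummy}: the lemma applied to the instance with items $M \cup D$ (which still has $n$ agents) gives $v_i(M \cup D)/n \ge \mu_i^n(M \cup D) \ge 1$, hence $v_i(M \cup D) \ge n$. Subtracting $v_i(D) = 4\gamma r/3$ from both sides yields $v_i(M) \ge n - 4\gamma r/3$, as required. I do not anticipate any real obstacle here; the only thing to be careful about is making sure the dummies are counted correctly (one per $S_4$-round, all with identical per-agent value) and that the partition argument for the first inequality allocates the entire dummy-value deficit to a single bundle in the worst case, which is exactly the bound we obtain.
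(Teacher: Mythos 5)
Your proof is correct and follows exactly the route the paper has in mind: it says the lemma "easily follows using \eqref{eq:dummy-value} and Corollary~\ref{cor:MMS+Dummy}," and your argument—computing $v_i(D)=4\gamma r/3$, stripping dummies from a partition achieving $\mu_i^n(M\cup D)\ge 1$ to lower-bound $\mu_i^n(M)$, and combining Lemma~\ref{lem:avg} with the corollary to lower-bound $v_i(M)$—is precisely that. No issues.
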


\subsection{Bag Filling}\label{sec:f-Bag-Filling-Known}
The overall approach is same as in Section~\ref{sec:Bag-Filling-Known}. We use the bag filling procedure given in Algorithm~\ref{algo:Bag-Filling-MAIN} to satisfy the remaining agent by setting $\alpha=\FRAC$. We initialize the bags with items in $J$ as in \eqref{eq:B_i} (see Figure~\ref{fig:bags}). Algorithm~\ref{algo:Bag-Filling-MAIN} has $n$ rounds. In each round $k$, it starts a new bundle $T$ with $T \gets B_k$. If there is an agent who values $T$ to be at least $\FRAC$, then assign $T$ to such an agent. Otherwise, keep adding items from $M\setminus J$ to $T$ one by one until someone values $T$ at least $\FRAC$.

For correctness, we need to show that there are enough items in $M\setminus J$ to add on top of each bag in~\eqref{eq:B_i} so that each agent gets a bundle that they value at least $\FRAC$. For this, we first divide agents into two types: $N^1:=\{i\in N\ |\ v_i(B_k) \le 1+\frac{3\gamma}{2}, \forall k\}$ and $N^2 := N\setminus N^1$. Next, we update the notations of \eqref{eqn:LKx1} to reflect the improved bound. For an agent $i\in N^2$, define
\begin{equation}\label{eqn:LKx}
\begin{aligned}
    L_i := & \ \{B_k: v_i(B_k) < \FRAC\}; \ \ \ \ l_i:=|L_i|\\  
    H_i := & \{B_k: v_i(B_k) > 1 + \tfrac{3\gamma}{2}\}; \ \ \ \ h_i := |H_i|\\
    x_i := & \ (\FRAC)l_i - \sum_{k:B_k \in L_i} v_i(B_k). 
\end{aligned}
\end{equation}

The proof of the following lemma is an easy extension of Lemma~\ref{lem:three}, hence omitted.

\begin{lemma}\label{lem:f-three}
For an agent $i\in N^2$, $(i)$ $l_i >0$ and $h_i > 0$, $(ii)$ $v_{i1} > \tfrac{5}{8}+\gamma$, $(iii)$ $v_i(B_k) <\tfrac{9}{8} + \tfrac{3\gamma}{2}, \forall k$, and $(iv)$ $v_{ij} < \tf18, \forall j \in M\setminus J$.
\end{lemma}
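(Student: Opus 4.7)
The plan is to mirror the proof of Lemma~\ref{lem:three} almost line by line, absorbing the $\gamma$-corrections in four short checks. Two changes propagate from Section~\ref{sec:MMS-known}: the bag-satisfaction threshold $\alpha$ is now $\FRAC$ rather than $\tfrac{3}{4}$, which widens $L_i$; and the $N^2$ boundary is pushed out to $1+\tfrac{3\gamma}{2}$, which narrows $K_i$. Both updates are already baked into Corollary~\ref{cor:f-Bound-on-Values}, so the work reduces to verifying that the new constants still line up.

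For part (i), I would dispatch $k_i>0$ directly from membership in $N^2$: if every bag had value at most $1+\tfrac{3\gamma}{2}$, then by definition $i\in N^1$. For $l_i>0$, I would look at $B_n=\{n,n+1\}$ and invoke the second clause of Corollary~\ref{cor:f-Bound-on-Values}, which gives $v_i(B_n)=v_{in}+v_{i(n+1)}<\FRAC$, i.e., $B_n\in L_i$. For part (ii), pick any $B_k=\{k,2n-k+1\}\in K_i$; the bound $v_{i(2n-k+1)}<\tfrac{3}{8}+\tfrac{\gamma}{2}$ from Corollary~\ref{cor:f-Bound-on-Values} forces the $J_1$-item to satisfy $v_{ik}>(1+\tfrac{3\gamma}{2})-(\tfrac{3}{8}+\tfrac{\gamma}{2})=\tfrac{5}{8}+\gamma$, and the ordered-instance assumption gives $v_{i1}\ge v_{ik}>\tfrac{5}{8}+\gamma$. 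Part (iii) is just the sum of the two per-item bounds $\FRAC$ and $\tfrac{3}{8}+\tfrac{\gamma}{2}$ from Corollary~\ref{cor:f-Bound-on-Values}.

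Part (iv) is the only step where the new numerics have to cancel cleanly, and I expect it to be the main point worth flagging. Since Algorithm~\ref{algo:initial} terminates without assigning $S_4=\{1,2n+1\}$ to $i$, we have $v_i(S_4)<\FRAC$, i.e., $v_{i1}+v_{i(2n+1)}<\FRAC$. Combined with part (ii), this yields $v_{i(2n+1)}<\FRAC-(\tfrac{5}{8}+\gamma)=\tfrac{1}{8}$, and the ordering then gives $v_{ij}\le v_{i(2n+1)}<\tfrac{1}{8}$ for every $j\in M\setminus J$. The extra $\gamma$ in the $S_4$-threshold is exactly matched by the extra $\gamma$ in the $J_1$ lower bound from part (ii), so the low-value item constant stays at the clean $\tfrac{1}{8}$ rather than drifting to $\tfrac{1}{8}+O(\gamma)$. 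This precise cancellation is what allows the downstream bag-filling analysis (the analogue of Lemma~\ref{lem:bag-filling-terminates}) to reuse the same $\tfrac{1}{8}$ slack that was crucial in Section~\ref{sec:MMS-known}, which in turn is what lets us afford the $\tfrac{1}{12n}$ improvement over $\tfrac{3}{4}$.
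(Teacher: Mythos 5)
Your proof is correct and mirrors the omitted ``easy extension'' the paper alludes to: each of the four parts carries over from Lemma~\ref{lem:three} with the expected $\gamma$-shifts fed in from Corollary~\ref{cor:f-Bound-on-Values} and the widened $N^2$ threshold $1+\tfrac{3\gamma}{2}$. Your observation that the $\gamma$'s cancel exactly in part (iv), leaving the filler-item bound at a clean $\tfrac{1}{8}$, is precisely the point that makes the downstream bag-filling analysis reusable, and it is worth stating explicitly even though the paper suppresses the computation.
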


The proof of the following lemma is an extension of the proof of Lemma~\ref{lem:bag-filling-terminates}, and is in~\ref{sec:mp}.

\begin{restatable}{lemma}{termination}\label{lem:f-bag-filling-terminates}
If $v_i(M\setminus J) \ge x_i + \tfrac{l_i}{8} -\tf18, \forall i\in N^2$, then Algorithm~\ref{algo:Bag-Filling-MAIN} with $\alpha=\FRAC$ gives every agent a bundle that they value at least $\FRAC$.
\end{restatable}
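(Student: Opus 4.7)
The plan is to adapt the proof of Lemma~\ref{lem:bag-filling-terminates} with the weakened target $\FRAC$ and the accounting for dummy items provided by Lemma~\ref{lem:1/9n-MMS-Decrease}. Suppose for contradiction that Algorithm~\ref{algo:Bag-Filling-MAIN} halts at round $t$ with a remaining agent $i$ for whom $v_i(B_t \cup L) < \FRAC$, where $L = M \setminus J$. I would split into the two cases $i \in N^1$ and $i \in N^2$.

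For $i \in N^1$, I first bound the $v_i$-value of every previously assigned bundle by $1 + \tfrac{3\gamma}{2}$. If $v_i(B_k) \ge \FRAC$, the bundle $B_k$ is assigned unchanged and the definition of $N^1$ gives the bound directly; otherwise low-value items are appended, and just before the final item is added the bag is still worth $<\FRAC$ to $i$ (else $i$ would have been satisfied), and since each low item is worth $< \tfrac{1}{4} + \tfrac{\gamma}{3}$ to $i$ by Corollary~\ref{cor:f-Bound-on-Values}, the resulting bundle has $v_i$-value $< 1 + \tfrac{4\gamma}{3} \le 1 + \tfrac{3\gamma}{2}$. Combining these bounds over the at most $n-1$ prior rounds with the valuation floor $v_i(M) \ge n - \tfrac{4\gamma r}{3}$ from Lemma~\ref{lem:1/9n-MMS-Decrease}, and using that $r$ plus the current $n$ cannot exceed the initial number of agents, the argument reduces to verifying $1 - \tfrac{4\gamma}{3}(n-1) - \tfrac{4\gamma r}{3} \ge \FRAC$, which is direct with $\gamma = 1/(12n)$.

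For $i \in N^2$, I mirror the original Case~2 proof. For each round $k < t$ with $B_k \in L_i$, just before the final item of that round the bag is worth $< \FRAC$ to $i$, and Lemma~\ref{lem:f-three}(iv) guarantees every low-value item is worth $< \tfrac{1}{8}$ to $i$; hence the $v_i$-value of low items consumed during round $k$ is strictly less than $(\FRAC - v_i(B_k)) + \tfrac{1}{8}$, while rounds with $B_k \notin L_i$ consume no low items. Since $B_t \in L_i$, there are at most $l_i - 1$ such $L_i$-rounds before $t$. Combining these round bounds with the stopping assumption $v_i(L) < \FRAC - v_i(B_t)$ and telescoping against the definition $x_i = (\FRAC)\, l_i - \sum_{k: B_k \in L_i} v_i(B_k)$ yields $v_i(M \setminus J) < x_i + l_i/8 - 1/8$, contradicting the hypothesis.

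The main technical obstacle is Case~1: the choice $\gamma = 1/(12n)$ must be large enough to give the claimed improvement over $\tfrac{3}{4}$ yet small enough that the accumulated per-round losses of $\tfrac{3\gamma}{2}$, together with the dummy-item deficit $\tfrac{4\gamma r}{3}$, still leave a bundle worth at least $\FRAC$ at round $t$. Case~2 is structurally identical to Lemma~\ref{lem:bag-filling-terminates}, with Lemma~\ref{lem:f-three}(iv) providing the same $\tfrac{1}{8}$ bound on low-value items that made the original telescoping argument go through.
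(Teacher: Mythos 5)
Your proposal is correct and follows essentially the same two-case argument as the paper's proof. One small imprecision: in your Case 1, the per-round excess for the \emph{unfilled} bags $B_{t+1},\dots,B_n$ is only bounded by $\tfrac{3\gamma}{2}$ from the definition of $N^1$ (the sharper $\tfrac{4\gamma}{3}$ bound applies only to bags that actually received filler items), so the inequality you should verify is $1 - \tfrac{3\gamma}{2}(n-1) - \tfrac{4\gamma r}{3} \ge \FRAC$ rather than the one you wrote with coefficient $\tfrac{4\gamma}{3}$ throughout; since $\tfrac{4\gamma}{3} \le \tfrac{3\gamma}{2}$, both evaluate to at least $\tfrac{7}{8}+\tfrac{1}{8n} \ge \FRAC$, so the conclusion is unaffected.
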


Now, we only need to show that for each $i\in N^2$, we have
\begin{equation}\label{eqn:mb}
     v_i(M\setminus J) \ge x_i + \tfrac{l_i}{8} -\tf18
\end{equation} 

We start with a few lemmas. Recall from \eqref{eq:MMSp} that $\MP^n_{i}(M)$ denote the set of partitions achieving $\mu_{i}$. The proofs of the following two lemmas are extensions of the corresponding Lemmas~\ref{lem:Giver-Bundle} and \ref{lem:sp} respectively, and are given in~\ref{sec:mp}.

\begin{restatable}{lemma}{giver}\label{lem:f-Giver-Bundle}
For an agent $i\in N^2$, there exists a bundle $P_k$ in every partition $P =\{P_1, \dots, P_n\} \in \MP^{n}_{i}(M\cup D)$ such that $v_i(P_k \setminus J) > \tfrac{1}{4} - \gamma$.
\end{restatable}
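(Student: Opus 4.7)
The plan is to mirror the proof of Lemma \ref{lem:Giver-Bundle} in the basic $3/4$-setting, but using the extra slack afforded by the dummy items and the updated bounds of Corollary \ref{cor:f-Bound-on-Values}. The starting point is that, by Corollary \ref{cor:MMS+Dummy}, $\mu_i^n(M\cup D)\ge 1$ for every remaining agent $i\in N^2$, so each bundle $P_k$ in any $P\in\MP_i^n(M\cup D)$ satisfies $v_i(P_k)\ge 1$. The goal is to show $v_i(P_k\setminus J)>\tfrac14-\gamma$ for some $k$.

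First I would split into two cases according to how many items of $J$ sit in some bundle. If some $P_k$ contains at most one item of $J$, then either $P_k\cap J=\emptyset$, in which case $v_i(P_k\setminus J)=v_i(P_k)\ge 1>\tfrac14-\gamma$, or $P_k\cap J=\{j\}$, in which case Corollary \ref{cor:f-Bound-on-Values} gives $v_{ij}<\tfrac34+\gamma$ (it is $<\tfrac34+\gamma$ for $j\in J_1$ and even smaller, $<\tfrac38+\tfrac{\gamma}{2}$, for $j\in J_2$), so
\[v_i(P_k\setminus J)=v_i(P_k)-v_{ij}>1-\bigl(\tfrac34+\gamma\bigr)=\tfrac14-\gamma.\]

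The second case is the pigeonhole case: every bundle contains at least two items of $J$. Since $|J|=2n$ and there are $n$ bundles, each bundle then contains exactly two items of $J$. The set $\{n\}\cup J_2$ has $n+1$ elements distributed among the $n$ bundles, so some bundle $P_k$ must contain two items $j_1,j_2\in\{n\}\cup J_2$. Because items are ordered and $j_1,j_2$ both have index $\ge n$, while at least one has index $\ge n+1$, we get $v_{ij_1}+v_{ij_2}\le v_{in}+v_{i(n+1)}$. Corollary \ref{cor:f-Bound-on-Values} precisely states $v_{in}<\tfrac34+\gamma-v_{i(n+1)}$, i.e., $v_{in}+v_{i(n+1)}<\tfrac34+\gamma$, so
\[v_i(P_k\setminus J)\ge v_i(P_k)-v_{ij_1}-v_{ij_2}>1-\bigl(\tfrac34+\gamma\bigr)=\tfrac14-\gamma,\]
which completes the argument.

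The main subtlety (and the only place the $\gamma$-correction really matters compared to the original $3/4$ proof) is ensuring that the bound on $v_{in}+v_{i(n+1)}$ is still at most $\tfrac34+\gamma$; this is exactly what Corollary \ref{cor:f-Bound-on-Values} guarantees, courtesy of Initial Assignment terminating with $S_2,S_3\notin\Gamma$. Other than that, no new idea is required beyond carefully tracking the $\gamma$-terms and remembering that $P_k\setminus J$ is allowed to contain items of $D$, so the dummy items never need to be excluded from the bundle in question.
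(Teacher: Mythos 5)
Your proof is correct and follows essentially the same route as the paper's: the one-item-from-$J$ case uses the Corollary~\ref{cor:f-Bound-on-Values} bound on single-item values, and the all-two-items case applies the pigeonhole principle to $\{n\}\cup J_2$ together with $v_{in}+v_{i(n+1)}<\tfrac34+\gamma$. You are in fact slightly more careful than the paper's write-up, explicitly covering the zero-items-from-$J$ subcase and using $v_i(P_k)\ge 1$ rather than equality, but the argument is the same.
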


\begin{restatable}{lemma}{fsp}\label{lem:f-sp}
For every agent $i$, there exists a partition $P = \{P_1, \dots, P_n\} \in \MP^{n}_{i}(M\cup D)$ such that each $P_k\in P$ has at most one item $j$ with $v_{ij} > \tfrac{5}{8}+\gamma$. 
\end{restatable}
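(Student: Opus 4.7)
}
The plan is to mimic the strategy of Lemma~\ref{lem:sp}, tracking how each threshold shifts by an $O(\gamma)$ amount and absorbing any slack via the dummy items in $D$. Start from an arbitrary $P \in \MP^{n}_{i}(M\cup D)$ and repeatedly fix bundles that contain two ``heavy'' items, i.e.\ items $j_1,j_2$ with $v_{ij_1},v_{ij_2} > \tfrac{5}{8}+\gamma$. By part~(ii) of Corollary~\ref{cor:f-Bound-on-Values} every item in $M\setminus J_1$ is worth strictly less than $\tfrac{3}{8}+\tfrac{\gamma}{2}$, and since each heavy item must lie in $J_1$ (there are only $n$ such items), a pigeonhole count identical to Lemma~\ref{lem:sp} produces a bundle $P_{k'}\neq P_k$ in which every item has value strictly less than $\tfrac{3}{8}+\tfrac{\gamma}{2}$.

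I then ask whether $P_{k'}$ admits a two-way partition $Q^1,Q^2$ with $v_i(Q^\ell)\ge \tfrac{3}{8}-\gamma$ for $\ell=1,2$. If it does, the bundles $\{j_1\}\cup Q^1$ and $\{j_2\}\cup Q^2$ both have value strictly exceeding $(\tfrac{5}{8}+\gamma)+(\tfrac{3}{8}-\gamma)=1$, so we may substitute them into $P$ in place of $P_k,P_{k'}$ and retain membership in $\MP^n_i(M\cup D)$. Otherwise, run the greedy three-bin procedure of Lemma~\ref{lem:sp} on $P_{k'}$: I expect the same contradiction to go through with the shifted thresholds, because if one output bin had value $\ge \tfrac{3}{8}-\gamma$ then the other two summed to less than $\tfrac{3}{8}-\gamma$ (else merging them would produce the forbidden two-way partition), which forces at least one bin of value below $\tfrac{3}{16}-\tfrac{\gamma}{2}$, contradicting the greedy invariant together with the item-value cap $\tfrac{3}{8}+\tfrac{\gamma}{2}$.

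Having split $P_{k'}$ into three parts $Q^1,Q^2,Q^3$ each of value less than $\tfrac{3}{8}-\gamma$, I invoke Lemma~\ref{lem:f-Giver-Bundle} to obtain a bundle $P_{\hat k}$ with $v_i(P_{\hat k}\setminus J)>\tfrac{1}{4}-\gamma$. Form the three seed bags
\[
\text{bag 1}: \{j_1\}\cup Q^1,\qquad \text{bag 2}: \{j_2\}\cup Q^2,\qquad \text{bag 3}: (P_{\hat k}\cap J)\cup Q^3,
\]
each of which is capped above by $\tfrac{9}{8}+\tfrac{3\gamma}{2}$ (the first two from Corollary~\ref{cor:f-Bound-on-Values} and the definition of $Q^\ell$, the third plus a slack $\delta$ coming from $v_i(P_{\hat k})=1+\delta$). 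Then sort the leftover items of $P_k\cup P_{k'}\cup P_{\hat k}$ in decreasing value and insert them greedily into the current minimum bag. By Lemma~\ref{lem:f-three}(iv) every leftover item has value strictly less than $\tfrac{1}{8}$, and the total value of the three source bundles is at least $3$, so the familiar balancing argument shows that each of the three final bags has value at least $1$ and contains at most one heavy item. Iterating this substitution across all offending bundles yields the desired partition.

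The main obstacle is bookkeeping: one must verify that the two critical inequalities $(\tfrac{5}{8}+\gamma)+(\tfrac{3}{8}-\gamma)\ge 1$ and $\tfrac{3}{16}-\tfrac{\gamma}{2}> \tfrac{3}{8}+\tfrac{\gamma}{2}-\tfrac{3}{16}-\tfrac{\gamma}{2}$ survive the $\gamma$-shifts, and that the total value $v_i(P_k)+v_i(P_{k'})+v_i(P_{\hat k})\ge 3$ together with the initial bag caps leaves enough room for the greedy fill to raise every bag to value $1$. If a residual $O(\gamma)$ deficit appears in any bag, an unallocated dummy item from $D$ (each worth $\tfrac{4\gamma}{3}$ by~\eqref{eq:dummy-value}) can be placed into the bag to close the gap; this is precisely the use for which the dummy items were introduced in Section~\ref{sec:f-initial-assignment}, and Corollary~\ref{cor:MMS+Dummy} guarantees that the resulting partition still witnesses $\mu_i^n(M\cup D)\ge 1$.
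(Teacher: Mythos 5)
Your proposal follows the same route as the paper's proof: the pigeonhole argument producing $P_{k'}$ with all items below $\tfrac{3}{8}+\tfrac{\gamma}{2}$, the attempted two-way split into $Q^1,Q^2$ each worth at least $\tfrac{3}{8}-\gamma$, the fallback greedy three-way split, the invocation of Lemma~\ref{lem:f-Giver-Bundle} to obtain $P_{\hat k}$, the three seed bags, and the decreasing-order greedy fill using the $\tfrac{1}{8}$ cap from Lemma~\ref{lem:f-three}. This matches the paper's argument structurally.

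However, there are two gaps in the final balancing step. First, the bound ``total value of the three source bundles is at least $3$'' is too weak to conclude that the greedy fill brings every bag to value at least $1$. If the three seeds start, say, near $\tfrac{9}{8}$, $\tfrac{9}{8}$, and $\tfrac{3}{4}$ with essentially no leftover items, the total is about $3$ yet the minimum bag stays below $1$. What the paper actually uses is the sharper bound $v_i(P_k)+v_i(P_{k'})+v_i(P_{\hat k}) \ge 3+\tfrac{1}{4}+2\gamma+\delta$, which holds because $P_k$ contains two items each worth more than $\tfrac{5}{8}+\gamma$, so $v_i(P_k)>\tfrac{5}{4}+2\gamma$. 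This extra $\tfrac{1}{4}+2\gamma$ of slack, combined with the caps on the seed bags and the $\tfrac{1}{8}$ item bound, is exactly what makes the balancing argument close. Second, your fallback of placing an ``unallocated dummy item from $D$'' into a deficient bag is not available: every item of $D$ is already part of the set $M\cup D$ being partitioned, hence already lies in some $P_k$ of $P$, so there are no spare dummy items to introduce. The dummy items simply travel with the partition and are absorbed as ordinary leftover items in the greedy fill; they are not a reservoir one can draw from to patch a deficit. Once you replace the ``$\ge 3$'' estimate with the correct $3+\tfrac{1}{4}+2\gamma+\delta$ lower bound, the dummy-item patch becomes unnecessary and the proof goes through as in the paper.
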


Fix an agent, say $a$, in $N^2$. For simplicity, until the end of this section, when we use \textit{value} of an item or a bundle, we mean their value for agent $a$ (unless mentioned otherwise). Recall that we need to show \eqref{eqn:mb}. Let $P=\{P_1, \dots, P_n)\in \MP^n_a(M\cup D)$ be a partition satisfying the condition of Lemma~\ref{lem:f-sp}. 

We manipulate $P$ as follows: First, for each $B_i= \{j,j' \}$ with $\FRAC \leq v_a(B_i) \leq 1+\tfrac{3\gamma}{2}$, if $j \in P_k$ and $j' \in P_{k'}$, then we turn $P_k$ and $P_{k'}$ into two new bundles $\{j,j' \}$ and $(P_k \cup P_{k'} \cup \{d_j\} \setminus  \{j,j' \})$, where $d_j$ is a new dummy item with $v_a(d_j):=\tfrac{3\gamma}{2}$. 

\paragraph{Dummy items ($D_2$):} Let $D_2$ denote the set of all dummy items added during the manipulation of $P$. Observe that $|D_2| = n - h_a - l_a$. Each $d_j \in D_2$ values $\tfrac{3\gamma}{2}$ for agent $a$. Therefore,
\begin{equation}\label{eq:D2_total}
     v_a(D_2) = \frac{3\gamma|D_2|}{2} \enspace . 
\end{equation}

Later we introduce one more set $D_3$ of dummy items. Recall that $D= D_1 \cup D_2 \cup D_3$ where $D_1$ is defined in \eqref{eq:dummy-value} and $D_3 = \emptyset$ currently. Observe that $v_a \left(P_k \cup P_{k'} \cup \{d_j\} \setminus  \{j,j' \}  \right) \geq 1$. Hence, we can assume that for each $B_i \not\in L_a \cup H_a$ (as defined in \eqref{eqn:LKx}), there exists a $P_{k} = B_i$ and all other bundles value at least 1. Second, we re-enumerate the bundles in $P$ such that 

\begin{itemize}
    \item $P_1, \dots, P_{t_1}$: each has an item $j$ of value more than $\tfrac{5}{8}+\gamma$ and $v_a(B_j) < \FRAC$. 
    \item $P_{t_1+1},\dots, P_{t_2}$: each has an item $j$ of value more than $\tfrac{5}{8}+\gamma$ and $v_a(B_j) > 1$ (Observe that there are $h_a$ such bundles, so $t_2 - t_1 = h_a$).
    \item $P_{t_3+1},\dots, P_{n}$: each such $P_k = B_{k'}$ for some $k'$ and $\FRAC \leq v_a(B_{k'}) \leq 1+\tfrac{3\gamma}{2}$.
    \item $P_{t_2+1}, \dots, P_{t_3}$ be the remaining bundles (Observe that $t_3 = l_a + h_a$).
\end{itemize}

\begin{definition}\label{def:y}
Define $y:=v_{ak'}$ where $k'$ is the highest value item in the set $\{j\in J_2\cap B_k, B_k\in H_a\}$. 
\end{definition}
Observe that $\tfrac{1}{4} + \tfrac{\gamma}{2} < y < \tfrac{3}{8} + \tfrac{\gamma}{2}$. For instance, in Figure \ref{fig:bags}, if no bag from $\{B_{k+1}, \dots, B_{n}\}$ is in $H_a$ but $B_k \in H_a$, then $y= v_{a(2n-k+1)}$. This also implies that $v_{a1}\ge \dots \ge v_{ak} > \tfrac{5}{8} + \gamma$. Moreover, $y=v_{a(2n-k+1)} \le v_{a(2n-k)}\le \dots \le v_{a(k+1)}$. The proof of the following lemma is an easy extension of the proof of Lemma~\ref{lem:y}, and hence omitted. 

\begin{lemma}\label{lem:f-y}
For any bundle $B_k=\{j,j'\}\in L_a$ such that $v_{aj}, v_{aj'} <\tfrac{5}{8}+\gamma$, we have $v_{aj}, v_{aj'} \ge y$. 
\end{lemma}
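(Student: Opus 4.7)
The plan is to identify the bag $B_{\bar{k}} \in K_a$ with the largest index $\bar{k}$ among bags in $K_a$, so that by Definition~\ref{def:y} we have $y = v_{a(2n-\bar{k}+1)}$. Such a $\bar{k}$ exists since $k_a > 0$ by Lemma~\ref{lem:f-three}(i). The claim will then follow from the non-increasing ordering of items combined with the explicit pairing $B_k = \{k, 2n-k+1\}$ in~\eqref{eq:B_i}.

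First I would extract the quantitative fact $v_{a\bar{k}} > \tfrac{5}{8} + \gamma$. This uses $v_a(B_{\bar{k}}) > 1 + \tfrac{3\gamma}{2}$ together with $v_{a(2n-\bar{k}+1)} < \tfrac{3}{8} + \tfrac{\gamma}{2}$, the latter holding by Corollary~\ref{cor:f-Bound-on-Values}(ii), since $\bar{k} \in \{1,\dots,n\}$ implies $2n-\bar{k}+1 \in J_2$.

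Next, for any $B_k = \{k, 2n-k+1\} \in L_a$ with $v_{ak}, v_{a(2n-k+1)} < \tfrac{5}{8} + \gamma$, combining $v_{ak} < \tfrac{5}{8} + \gamma < v_{a\bar{k}}$ with the ordering convention $v_{a1} \ge v_{a2} \ge \cdots$ forces $k > \bar{k}$. Since additionally $k \le n$ and $\bar{k} \ge 1$, I obtain the chain of indices $\bar{k} < k \le n < 2n-k+1 < 2n-\bar{k}+1$. Reading this chain together with non-increasing valuations directly yields $v_{ak} \ge v_{a(2n-\bar{k}+1)} = y$ and $v_{a(2n-k+1)} \ge v_{a(2n-\bar{k}+1)} = y$, completing the argument.

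Since the proof is essentially bookkeeping with the ordered-instance convention and the pairing in~\eqref{eq:B_i}, there is no significant obstacle; the only care required is verifying that $\bar{k}$ is well-defined and that Corollary~\ref{cor:f-Bound-on-Values}(ii) applies to the relevant $J_2$ index, both of which follow from $k_a > 0$ and $\bar{k} \in \{1,\dots,n\}$.
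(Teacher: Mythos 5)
Your proof is correct and takes the same approach the paper sketches in a single sentence for Lemma~\ref{lem:y} (namely, appeal to the bag construction in~\eqref{eq:B_i} plus the fact that each bag in $K_a$ contains an item worth more than $\tfrac{5}{8}+\gamma$); you have simply filled in the index bookkeeping that the paper leaves implicit.
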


Next, we show the extension of Lemma~\ref{lem:bound-vm-max}, whose proof is given in~\ref{sec:mp}.

\begin{restatable}{lemma}{fvmax}\label{lem:f-bound-vm-max}
For agent $a\in N^2$, if $t_1 = 0$ then we have  
\begin{enumerate}
\item $x_a \leq (\FRAC-2y)l_a$
\item $v_a((M\cup D)\setminus J) \geq \max \{x_a+l_a(\tfrac{1}{4}-\gamma) - y z+{z}(\tfrac{1}{4}-\gamma),  {z}(\tfrac{1}{4}-\gamma)\}$, where $z=\max\{2h_a-|\bigcup_{t=t_1+1}^{t_2} P_t\cap J|, 0\}$.
\end{enumerate}
\end{restatable}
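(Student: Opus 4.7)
The plan is to closely mirror the proof of Lemma~\ref{lem:bound-vm-max}, with three systematic changes: replace the threshold $\tfrac{3}{4}$ by $\FRAC$, replace the per-bundle lower bound of $1$ by $\mu_a^n(M\cup D)\geq 1$ via Corollary~\ref{cor:MMS+Dummy}, and work on the manipulated MMS partition $P$ of $M\cup D$ into which the dummies of $D_2$ have been inserted so every bundle in $\{P_1,\dots,P_{t_3}\}$ values at least $\mu_a$. Part~(1) is then immediate from Lemma~\ref{lem:f-y}: since $t_1=0$, every bag $B_k\in L_a$ has both items of value strictly less than $\tfrac{5}{8}+\gamma$, hence both of value at least $y$, so $v_a(B_k)\geq 2y$; summing over $B_k\in L_a$ and invoking the definition of $x_a$ yields $x_a\leq (\FRAC-2y)l_a$.

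For part~(2), the key structural observation is that each $B_k\in K_a$ satisfies $v_a(B_k)>1+\tfrac{3\gamma}{2}$ while its $J_2$-item has value at most $\tfrac{3}{8}+\tfrac{\gamma}{2}$ by Lemma~\ref{lem:f-three}, so the $J_1$-item in $B_k$ must exceed $\tfrac{5}{8}+\gamma$. Every $K_a$-bag thus contributes exactly one ``heavy'' $J_1$-item and one ``light'' $J_2$-item of value at most $y$ (by definition of $y$); Lemma~\ref{lem:f-sp} places the $k_a$ heavy items into $k_a$ distinct bundles, which must all lie in $\{P_1,\dots,P_{t_2}\}$ because $t_1=0$. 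Since $P_{t_3+1},\dots,P_n$ consist only of items from bags outside $L_a\cup K_a$, every non-$J$ item lives in $\{P_1,\dots,P_{t_3}\}$, and hence $v_a(M\cup D\setminus J)$ equals the total non-$J$ value across those $t_3$ bundles. I would then bound this total from two disjoint sources. \emph{Source~1}: the at-least-$z$ bundles in $\{P_1,\dots,P_{t_2}\}$ whose only $J$-item is the heavy one (their existence following from the pigeonhole definition of $z$), each of which must supply more than $1-\FRAC=\tfrac{1}{4}-\gamma$ of non-$J$ value to reach $\mu_a$, yielding at least $z(\tfrac{1}{4}-\gamma)$; this alone gives the second term in the max. \emph{Source~2}: the $l_a$ bundles $\{P_{t_2+1},\dots,P_{t_3}\}$, which together must supply at least $l_a$ minus the total $J$-value they carry.

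For Source~2, the bundles contain exactly $2l_a+z$ items from $J$, and a short cancellation shows their total $J$-value is at most $(\FRAC l_a-x_a)+yz$: any $L_a$-item that migrates into $\{P_1,\dots,P_{t_2}\}$ removes at least $y$ from the $L_a$-contribution here (Lemma~\ref{lem:f-y}), while the corresponding extra $K_a$-light item it forces into $\{P_{t_2+1},\dots,P_{t_3}\}$ adds at most $y$, so the two effects cancel and the bound depends only on $x_a$, $l_a$, and $z$. Consequently Source~2 contributes at least $l_a-(\FRAC l_a-x_a)-yz=x_a+l_a(\tfrac{1}{4}-\gamma)-yz$, and adding Sources~1 and~2 yields the first term in the max. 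The main obstacle will be carrying out this cancellation cleanly: one must verify that the extra $K_a$-items landing in $\{P_{t_2+1},\dots,P_{t_3}\}$ are genuinely light $J_2$-items (each bounded above by $y$) rather than heavy ones, which follows because Lemma~\ref{lem:f-sp} has already exhausted all $k_a$ heavy items in $\{P_1,\dots,P_{t_2}\}$. Once that pigeonhole identification is settled, the remaining manipulations are routine bookkeeping paralleling the argument of Lemma~\ref{lem:bound-vm-max}(2).
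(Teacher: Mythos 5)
Your proof is correct and follows essentially the same approach as the paper: part~(1) via Lemma~\ref{lem:f-y} and the definition of $x_a$, and part~(2) via the pigeonhole identification of $z$ singleton-base bundles in $\{P_{t_1+1},\dots,P_{t_2}\}$, each demanding at least $\tfrac{1}{4}-\gamma$ in filler value, together with the deficit of the $l_a$ remaining bundles. The one place you go further than the paper is the explicit $J$-value accounting for $\{P_{t_2+1},\dots,P_{t_3}\}$: you carefully track the $2l_a+z$ items and show the per-item cancellation (an $L_a$-item migrating into $\{P_1,\dots,P_{t_2}\}$ drops at least $y$, while the displaced $K_a$-light item adds at most $y$), whereas the paper compresses this to the brief remarks ``because the value of each item in the bundles of $L_a$ is at least $y$'' and ``because each of the $z$ items has value at most $y$.'' Both arguments arrive at the same inequality; yours simply makes the bookkeeping explicit, which is a legitimate strengthening of the exposition rather than a different route.
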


The following lemma is an extension of Theorem~\ref{thm:maxva1}, and its proof is given in~\ref{sec:mp}.

\begin{restatable}{theorem}{maxva}\label{thm:maxva}
For agent $a\in N^2$, if $t_1=0$ then
\begin{equation}
    v_a(M \cup D\setminus J) \geq x_a+\tfrac{l_a}{8} \text{ .}
\end{equation}
\end{restatable}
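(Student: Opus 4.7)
The plan is to mirror the proof of the original $t_1=0$ theorem in Section~\ref{sec:MMS-known}, now invoking the $\gamma$-flavoured Lemma~\ref{lem:f-bound-vm-max} in place of Lemma~\ref{lem:bound-vm-max}. First, split on which argument of the max in Lemma~\ref{lem:f-bound-vm-max}(2) achieves the maximum. Because $y > \tfrac{1}{4}+\tfrac{\gamma}{2}$ (recorded right after Definition~\ref{def:y}), we have $y > \tfrac{1}{4}-\gamma$, so the first argument $x_a + l_a(\tfrac{1}{4}-\gamma) - yz + z(\tfrac{1}{4}-\gamma)$ is strictly decreasing in $z$. In Case~1 (first argument wins), the case condition rearranges to $z \leq (x_a + l_a(\tfrac{1}{4}-\gamma))/y$; substituting this upper bound into the decreasing expression and simplifying yields
\[ v_a(M\cup D\setminus J) \;\geq\; \frac{(\tfrac{1}{4}-\gamma)\bigl(x_a + l_a(\tfrac{1}{4}-\gamma)\bigr)}{y}. \]
In Case~2 (second argument wins), the reverse of the same inequality gives $z \geq (x_a + l_a(\tfrac{1}{4}-\gamma))/y$, whence $v_a \geq z(\tfrac{1}{4}-\gamma)$ reproduces exactly the same lower bound. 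Both cases therefore collapse to a single expression, just as in the $\gamma=0$ analysis.

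Next I would apply Lemma~\ref{lem:f-bound-vm-max}(1), rewritten as $y \leq ((\tfrac{3}{4}+\gamma)l_a - x_a)/(2l_a)$, to conclude
\[ v_a(M\cup D\setminus J) \;\geq\; \frac{2(\tfrac{1}{4}-\gamma)\,l_a\bigl(x_a + l_a(\tfrac{1}{4}-\gamma)\bigr)}{(\tfrac{3}{4}+\gamma)l_a - x_a}. \]
The denominator is positive by Lemma~\ref{lem:f-three}. Cross-multiplying, the desired inequality $v_a(M\cup D\setminus J) \geq x_a + l_a/8$ becomes the quadratic condition
\[ x_a^2 \;-\; \bigl(\tfrac{1}{8}+3\gamma\bigr)l_a\,x_a \;+\; \bigl(\tfrac{1}{32}-\tfrac{9\gamma}{8}+2\gamma^2\bigr)l_a^2 \;\geq\; 0. \]

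The main obstacle is verifying this quadratic inequality. Its minimum over $x_a$ is attained at $x_a = (\tfrac{1}{16}+\tfrac{3\gamma}{2})l_a$, where the value equals $\bigl(\tfrac{7}{256} - \tfrac{21\gamma}{16} - \tfrac{\gamma^2}{4}\bigr)l_a^2$. Non-negativity therefore reduces to the single scalar inequality $64\gamma^2 + 336\gamma \leq 7$. This is exactly where the choice $\gamma = \tfrac{1}{12n}$ enters: for $n \geq 5$ we have $\gamma \leq \tfrac{1}{60}$, which comfortably satisfies the bound and matches the regime $n>4$ the paper targets. Conceptually the step is routine, but algebraically it is delicate, since the $\tfrac{1}{4}-\gamma$ and $\tfrac{3}{4}+\gamma$ appearing throughout Lemma~\ref{lem:f-bound-vm-max} must combine with the $-\tfrac{9\gamma}{8}$ in the quadratic's constant coefficient to still leave room for the $+l_a/8$ cushion that the $\gamma=0$ analysis in Section~\ref{sec:MMS-known} exploited as slack.
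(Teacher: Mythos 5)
Your argument is correct and follows the paper's proof through the same skeleton: you split on which argument of the $\max$ in Lemma~\ref{lem:f-bound-vm-max}(2) wins, observe that both cases collapse to the common lower bound $\tfrac{(\tfrac{1}{4}-\gamma)(x_a + l_a(\tfrac{1}{4}-\gamma))}{y}$ (the paper makes the same observation by routing Case~2 back through Case~1), and then eliminate $y$ via Lemma~\ref{lem:f-bound-vm-max}(1). Where you diverge is the final verification. The paper argues, informally, that the resulting rational expression in $\gamma$ is monotone, then substitutes the worst case $\gamma=\tfrac{1}{60}$ (i.e.\ $n=5$) and recognizes the numerator as a square plus a nonnegative remainder. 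You instead keep $\gamma$ symbolic, cross-multiply against the positive denominator $(\tfrac{3}{4}+\gamma)l_a - x_a$, and complete the square in $x_a$: I checked that the quadratic $x_a^2 - (\tfrac{1}{8}+3\gamma)l_a x_a + (\tfrac{1}{32}-\tfrac{9\gamma}{8}+2\gamma^2)l_a^2 \ge 0$ and the vertex value $(\tfrac{7}{256} - \tfrac{21\gamma}{16} - \tfrac{\gamma^2}{4})l_a^2$ are both right, which reduces to $64\gamma^2 + 336\gamma \le 7$. Your route is slightly more transparent: the monotonicity in $\gamma$ of that scalar inequality is immediate (its left side has nonnegative coefficients), whereas the paper's monotonicity claim for the rational function is true but left implicit. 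Both approaches are valid and land at $\gamma \le \tfrac{1}{60}$, matching the paper's assumption $n \ge 5$.
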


Next, we handle the general case when $t_1>0$. Let $L^1_a\subseteq L_a$ be the set of bags in $L_a$, which have one item with value more than $\tfrac{5}{8}+\gamma$, and $L^2_a=L_a\setminus L^1_a$. Clearly, $|L^1_a|=t_1$ and $|L^2_a|=l_a-t_1$. Note that $|L^2_a|\ge 1$ because $B_{n}\in L^2_a$.

\paragraph{Dummy items ($D_3$):} For each bag in $L^1_a$ we add a dummy item $d_j$ where $v_a(d_j) = \gamma$. Therefore,
\begin{equation}\label{eq:D3_total}
     v_a(D_3) = |D_3|\cdot \gamma = |L^1_a|\cdot \gamma < l_a\cdot \gamma \enspace .
\end{equation}
The following theorem is an extension of Theorem~\ref{thm:fb}, and its proof is given in~\ref{sec:mp}.

\begin{restatable}{theorem}{thmfb}\label{f-thm:fb}
For agent $a\in N^2$, we have $v_a((M\cup D) \setminus J) \geq x_a+\frac{l_a}{8}$. 
\end{restatable}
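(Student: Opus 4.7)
The plan is to mirror the proof of Theorem~\ref{thm:fb}, reducing the general case to the $t_1 = 0$ situation already handled by the $\gamma$-extended Theorem~\ref{thm:maxva}. The bookkeeping is slightly more delicate because of the extra $\gamma$ slack in the thresholds and because we must exploit the dummy items in $D_3$ (worth $\gamma$ per bag in $L^1$, see \eqref{eq:D3_total}) to absorb the $\gamma$-loss incurred on each $L^1$ bundle.

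First, I would identify $J_{\min} := \bigcup_{B_k \in L^1} B_k \cap J_2$, the small partners of the large items that populate bundles in $L^1$. Since each $B_k \in L^1$ is worth strictly less than $\FRAC$ to agent $a$ and contains an item of value exceeding $\tfrac{5}{8}+\gamma$, every $j \in J_{\min}$ satisfies $v_{aj} < \tfrac{1}{8}$; in particular $v_a(J_{\min}) < t_1/8$. I would then treat items in $J_{\min}$ as filler items, on par with $M \setminus J$ and the dummy items in $D$.

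Next, I would split $x_a = x'_a + x''_a$, where $x'_a := (\FRAC)\,t_1 - \sum_{B_k \in L^1} v_a(B_k)$ is the residual needed across the $L^1$ bags and $x''_a$ is the residual across the $L^2$ bags. In the re-enumerated partition $P$, each of $P_1, \dots, P_{t_1}$ contains (by Lemma~\ref{lem:f-sp}) a unique large item drawn from an $L^1$ bag, and to reach value $1$ it must collect at least $\tfrac{1}{4}-\gamma$ in fillers in addition to its share of $x'_a$, because each filler item has value less than $\tfrac{1}{4}+\tfrac{\gamma}{3}$ (Corollary~\ref{cor:f-Bound-on-Values}). Thus these $t_1$ bundles account for at least $x'_a + t_1(\tfrac{1}{4}-\gamma)$ filler value. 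The remaining bundles $P_{t_1+1},\dots,P_n$ form a sub-instance to which Theorem~\ref{thm:maxva} applies (effectively $t_1 = 0$), yielding filler consumption of at least $x''_a + (l_a - t_1)/8$. Summing the two contributions and using $v_a(D_3) = t_1\gamma$ to absorb the $-t_1\gamma$ term gives
\[
v_a((M\cup D)\setminus J) + v_a(J_{\min}) \;\ge\; x'_a + \tfrac{t_1}{4} + x''_a + \tfrac{l_a - t_1}{8},
\]
which, combined with $v_a(J_{\min}) < t_1/8$, delivers $v_a((M\cup D)\setminus J) \ge x_a + l_a/8$.

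The main obstacle is the case when some bundle in $P_1, \dots, P_{t_1}$ contains more than one large item, so the count in the previous paragraph becomes inexact. This would be handled by the same dichotomy as in Theorem~\ref{thm:fb}: an extra large item $j$ in such a bundle either originates from another $L^2$ bag, in which case $x'_a$ decreases by less than $v_{aj}$ while $x''_a$ grows by $v_{aj}$, or from a $K_a$ bag, in which case the residue of that $K_a$ bag must recruit at least $v_{aj}$ of additional filler elsewhere. In both sub-cases the inequality only tightens, so the bound persists.
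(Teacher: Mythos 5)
Your proposal reproduces the paper's proof of Theorem~\ref{f-thm:fb} essentially step for step: the same definition of $J_{\min}$ and the base/filler dichotomy, the same split $x_a = x'_a + x''_a$, the same application of Theorem~\ref{thm:maxva} to the remaining bundles to get $x''_a + (l_a - t_1)/8$, the same use of $v_a(D_3) = t_1\gamma$ to absorb the $\gamma$-loss, and the same two-case analysis of extra base items coming from $L^2$ versus $K_a$. One small inaccuracy: the $\tfrac{1}{4}-\gamma$ filler requirement per $L^1$ bundle follows from the base item being worth at most $\tfrac{3}{4}+\gamma$ (Corollary~\ref{cor:f-Bound-on-Values}(i)) so that the remaining items must contribute at least $1 - (\tfrac{3}{4}+\gamma)$, not from the filler items themselves being smaller than $\tfrac{1}{4}+\tfrac{\gamma}{3}$; this mis-citation does not affect the validity of the argument.
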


Next, we obtain the bound without the dummy items in the next theorem.

\begin{theorem}\label{theo:type2-enough}
For any agent $a\in N^2$, $v_a(M\setminus J) \ge x_a + \tf{l_a}{8} -\tf18$.
\end{theorem}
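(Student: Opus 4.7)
The plan is to derive the claim as an almost immediate consequence of Theorem~\ref{f-thm:fb}, together with a short accounting argument on the total value of the dummy items. Since Theorem~\ref{f-thm:fb} asserts $v_a(M\cup D\setminus J) \ge x_a + l_a/8$, and each dummy item in $D = D_1\cup D_2\cup D_3$ is added as an auxiliary item disjoint from $M$ (and hence from $J$), we have $v_a(M\setminus J) = v_a(M\cup D\setminus J) - v_a(D)$. The theorem therefore reduces to the single inequality $v_a(D) \le \tfrac18$.

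To estimate $v_a(D)$, I would assemble the three contributions recorded in Sections~\ref{sec:f-initial-assignment} and~\ref{sec:f-Bag-Filling-Known}. By~\eqref{eq:dummy-value}, $v_a(D_1) = \tfrac{4\gamma r}{3}$, where $r$ is the number of $S_4$ rounds executed in Algorithm~\ref{algo:initial}; because every such round removes one agent, $r \le n_0 - n$, where $n_0$ denotes the value of $|N|$ at the beginning of the algorithm (the one fixing $\gamma = 1/(12 n_0)$). By~\eqref{eq:D2_total}, $v_a(D_2) = \tfrac{3\gamma(n - k_a - l_a)}{2}$. By~\eqref{eq:D3_total}, $v_a(D_3) \le l_a\gamma$.

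Summing these bounds and substituting $\gamma = 1/(12 n_0)$ yields
\[
v_a(D) \;\le\; \frac{n_0 - n}{9 n_0} \;+\; \frac{n - k_a - l_a}{8 n_0} \;+\; \frac{l_a}{12 n_0}.
\]
Multiplying through by $72 n_0$, the desired inequality $v_a(D) \le \tfrac18$ becomes $8(n_0 - n) + 9(n - k_a - l_a) + 6 l_a \le 9 n_0$, which after cancellation simplifies to $n - 9 k_a - 3 l_a \le n_0$. This is immediate from $n \le n_0$ and $k_a, l_a \ge 0$, with room to spare (Lemma~\ref{lem:f-three} even gives $k_a \ge 1$).

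I do not foresee any real technical obstacle at this step: the delicate work has already been absorbed into Theorem~\ref{f-thm:fb} and its supporting lemmas, and what remains is precisely the bookkeeping that motivates the particular choice $\gamma = 1/(12n)$. The constant is calibrated so that the aggregate dummy value across $D_1$, $D_2$, and $D_3$ exactly fits inside the $\tfrac18$ slack carried through~\eqref{eqn:f-mb}; the main thing to be careful about is bounding $|D_1|$ by $n_0 - n$ rather than $n_0$, so that the three individual terms combine to $\tfrac18$ rather than to a larger quantity.
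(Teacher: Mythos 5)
Your proposal is correct and follows essentially the same route as the paper: subtract $v_a(D)$ from the bound of Theorem~\ref{f-thm:fb}, then show $v_a(D)\le\tfrac18$ by summing the contributions of $D_1$, $D_2$, $D_3$ and using $r\le n_0-n$, $|D_2|=n-k_a-l_a$, and $\gamma=1/(12n_0)$. The paper does the same accounting (phrased as $\tfrac{4\gamma r}{3}\le\tfrac{3\gamma r}{2}$, then $|D_2|+r\le n_0-k_a-l_a$ gives $v_a(D)\le\tfrac{3\gamma n_0}{2}=\tfrac18$), so the only difference is cosmetic algebra; your version usefully makes explicit that the bound $r\le n_0-n$, not just $r\le n_0$, is what lets the three terms fit into the $\tfrac18$ budget.
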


\begin{proof}
From Lemma~\ref{lem:1/9n-MMS-Decrease}, \eqref{eq:D2_total}, and \eqref{eq:D3_total}, the total value of the items in $D$ is at most $$v_a(D) \le \frac{4\gamma r}{3} + \frac{3\gamma|D_2|}{2} + \gamma l_a \le \frac{3\gamma n}{2} \le  \frac{1}{8},$$ because $|D_2|+r \le (n-h_a-l_a)$ and $\gamma = \frac{1}{12n}$, where $n$ is the number of agents in the original instance. This, together with Theorem \ref{f-thm:fb}, proves the theorem. 
\end{proof}

\begin{lemma}
The bound $\tf34+\tfrac{1}{12n}$ is tight. 
\end{lemma}
\begin{proof}
The extra $\tfrac{1}{12n}$ is obtained from the extra $\tf18$ in $v_i(M \setminus J)$ (see Lemma~\ref{lem:bag-filling-terminates} and Theorem \ref{thm:fb}), so each agent can take at most $\tfrac{1}{8n}$ extra. For any bag $B_k$ we have $v_i(B_k) < \tf34+\gamma + \tfrac{1}{4}+ \tfrac{\gamma}{2} = \tfrac{9}{8} + \tfrac{3\gamma}{2} = \tfrac{1}{8n}$, which has exactly $\tfrac{1}{8n}$ more than the maximum amount that a bag can have in analysis of Section \ref{sec:MMS-known}. $|H_i|$ can be as big as $n-1$, and hence the bound is tight.
\end{proof}

\section{Conclusions}
We developed a new approach that gives a simple algorithm for showing the existence of a $\tf34$-MMS allocation. Furthermore, we showed that our approach is powerful enough to be easily extended to obtain $(i)$ a strongly polynomial time algorithm to find a $\tf34$-MMS allocation, and $(ii)$ the existence of a $(\tf34+\tfrac{1}{12n})$-MMS allocation, improving the best previous factor. Consequently, this gives a PTAS for finding a ($\tf34+\tfrac{1}{12n}-\epsilon$)-MMS allocation for any $\epsilon>0$. The bound $\tf34+\tfrac{1}{12n}$ is in fact tight.

It could be worth exploring whether the approach extends to obtaining a $\tfrac{4}{5}$-MMS allocation. Such an extension would be challenging because after the initial greedy assignments, there will be $3n$ high-value items, and this would make the process of initializing the bag filling procedure harder due to too many items to handle and also the value of some bags might exceed significantly more than 1.
\bigskip

\noindent{\bf Acknowledgments.}
We would like to thank anonymous referees for their comments and suggestions that have helped to improve the presentation of the paper. Work on this paper supported by NSF Grant CCF-1942321 (CAREER). 

\bibliographystyle{abbrv}
\bibliography{bibliography}

\appendix
\section{Missing Proofs}\label{sec:mp}
\Average*
\begin{proof}
Suppose for some agent $i$, $\mu_i^n(M) > v_i(M)/n$, i.e., there exists a partition of $M$ into $n$ bundles where all bundles have value strictly more than $\frac{v_i(M)}{n}$. Therefore, $v_i(M) \geq n \cdot \mu_i^n(M) > n\cdot v_i(M)/n = v_i(M)$, which is a contradiction.
\end{proof}

\Scale*
\begin{proof}
For any bundle $S\subseteq M$, we have $v_i'(S) = c_i\cdot v_i(S)$. Therefore, $\mu_i' = c_i\cdot \mu_i$. 
Further, $v_i'(A_k) = c_i\cdot v_i(A_k) \ge c_i\cdot\alpha\cdot \mu_i = \alpha\cdot \mu_i', \forall k$. 
\end{proof}

\Order*
\begin{proof}
Consider Algorithms~\ref{algo:Order} and~\ref{algo:Reorder}. It is enough to show that given any instance $\I=\langle N,M,V \rangle$, we can find an ordered instance $\I'=\langle N,M,V' \rangle$ in polynomial time using Algorithm~\ref{algo:Order}. Furthermore, given an $\alpha$-MMS allocation $A'$ for the ordered version $\I'$, we can find an $\alpha$-MMS allocation $A$ for the original instance $\I$ in polynomial time using Algorithm~\ref{algo:Reorder}.

\begin{algorithm}[tbh!]
\caption{\cite{barman2017approximation} Conversion to an Ordered Instance}
\label{algo:Order}
  \SetKwFunction{Define}{Define}
  \DontPrintSemicolon
  \SetKwInOut{Input}{Input}\SetKwInOut{Output}{Output}
  \Input{Instance $\I=\langle N,M,V\rangle$}
  \Output{Instance $\I'=\langle N,M,V'\rangle$ in which $v'_{i1} \geq \dots \geq v'_{im}$ for all $i \in N$}
  \BlankLine
  \For{$i= 1$ to $n$}{
  \For{$j= 1$ to $m$}{
  $j^* \gets j^{\text{th}}$ highest value item of $M$ for agent $i$\;
  $v'_{ij} \gets v_{ij^*}$\;}
  }
\end{algorithm}
\begin{algorithm}[tbh!]
\caption{\cite{barman2017approximation} $\alpha$-MMS Allocation for Unordered Instance}
\label{algo:Reorder}
  \SetKwFunction{Define}{Define}
  \DontPrintSemicolon
  \SetKwInOut{Input}{Input}\SetKwInOut{Output}{Output}
  \Input{Instance $\I=\langle N,M,V\rangle$, Ordered Instance $\I'=\langle N,M,V'\rangle$, $\alpha$-MMS allocation $A'=(A'_i)_{i\in N}$ for $\I'$  }
  \Output{$\alpha$-MMS allocation $A=(A_i)_{i\in N}$ for $\I$}
  \BlankLine
  $A_i \gets \emptyset, \forall i$\;
  \For{$j= 1$ to $m$}{
  $i^* \gets i \in N: j \in A'_{i}$ \tcp*{$i^*$ is the agent who has $j^{\text{th}}$ item of $\I'$ in her bundle}
  $j^* \gets \arg\max_{j \in M \setminus (\bigcup_{i \in N} A_i) } v_{i^*j}$ \tcp*{$j^*$ is $i^*$'s favorite unassigned item}
  $A_{i^*} \gets A_{i^*} \cup \{j^* \} $}
\end{algorithm}

Clearly, items in $\I'$ are sorted by their values, and they have the same order for all agents. Also, it takes $mn$ iterations to obtain $\I'$. Note that each agent's MMS value will remain the same in $\I'$ because it neither depends on the order of items nor on the valuations of other agents. We prove that $v_i(A_i) \geq v'_i(A_i')$ for all $i \in N$. Consider the round $r$ of the Algorithm \ref{algo:Reorder} and consider item $r$ from $\I'$ is in $A'_i$. It means that agent $i$ is getting her $r^{\text{th}}$ favorite item in $A'_i$ but now after $r-1$ round, exactly $r-1$ items are allocated and she will get her favorite item among unallocated items. Therefore, the item she gets in this round is at least as valuable as the one she was getting in the ordered instance. Therefore, $v_i(A_i) \geq v'_i(A_i') \geq \alpha\cdot \mu_i$.
\end{proof}

\termination*
\begin{proof}
This is proof by contradiction. Suppose the algorithm stops at round $t$ because there are not enough items in $L$ ($=M\setminus J$) to satisfy any remaining agent $i$, i.e., $v_i(B_t \cup L)< \FRAC$. 

If $i\in N^1$, each removed bundle in rounds $k \in [t-1]$, has value of at most $1+\tfrac{3\gamma}{2}$ for agent $i$. Because, if $v_i(B_k) \ge \FRAC$ for $k \in [t-1]$, no more item has been added to $T= B_k$. Also, if $v_i(B_k) < \FRAC$ for $k \in [t-1]$ before adding the last item (if any) to $T$ the value of $T$ is less that $\FRAC$ and from Corollary \ref{cor:f-Bound-on-Values}, $v_{ij} < \tfrac{1}{4}+\tfrac{\gamma}{3}$ for $j \in L$. Therefore, at the end of the round $v_i(T) < 1+\tfrac{3\gamma}{2}$. 

This implies that the total value of assigned bundles in round 1 to $t-1$ is at most $(t-1)(1+\tfrac{3\gamma}{2})$. Let $n'$ be the number of agents remaining after Algorithm~\ref{algo:initial}, and $n$ be the number of agents in the original instance. We have $v_i(M) \ge n' - \frac{4\gamma r}{3}$ due to Lemma~\ref{lem:1/9n-MMS-Decrease}. Therefore, the value of items in $L$ before the round $t$ starts is at least $$v_i(L) \ge n' - \frac{4\gamma r}{3} - \left((t-1)(1+ \tfrac{3\gamma}{2}) + v_i(B_t) + (n'-t)(1+\tfrac{3\gamma}{2}) \right) . $$ This implies that $$v_i(L) + v_i(B_t) \ge 1- \frac{3(r+(n'-1))\gamma}{2} \ge 1 - \frac{3(n-1)\gamma}{2} \ge \frac{7}{8}+ \frac{1}{8n} \ge \frac{3}{4} + \gamma,$$ where we use $\gamma =\frac{1}{12n}$, which is a contradiction. 

If $i\in N^2$, then since at round $t$, $v_i(T) < \FRAC$, we have $B_t \in L_i$. Consider a round $k \in [t-1]$. If $B_k\not\in L_i$, then $T= B_k$ has been assigned to someone with no additional items added to $T$ from $L$ because $i \in \Gamma(T)$. If $B_k \in L_i$, then in round $k$, before adding the last item (if any) to $T$, the value of $i$ for $T$ is less that $\FRAC$ and from Lemma \ref{lem:f-three}, all items in $L$ have value of at most $\tf18$. Therefore, if $B_k \in L_i$, the value of the assigned bag for $i$ in round $k$ is less that $\tfrac{7}{8}+\gamma$. Hence, in the beginning of the round $t$, 
$$v_i(L) \ge \left(x_i+ \tfrac{l_i}{8} - \tf18 \right) - \left(x_i- (\tf34 +\gamma- v_i(B_t)) +(l_i-1)/8) \right) = \tf34+\gamma- v_i(B_t), $$ which is a contradiction.
\end{proof}

\giver*
\begin{proof}
If there exists a bundle $P_k$ with exactly one item $j$ from $J$, then $v_i(P_k \setminus J) = 1 - v_{ij} > \tfrac{1}{4}-\gamma$ because value of every item is less than $\FRAC$. Otherwise, each bundle has exactly two items from $J$, which implies that one of the bundles, say $P_{k}$, has two items $j_1,j_2$ from the set $\{n\} \cup J_2$. Since $v_{ij_1} + v_{ij_2} \leq v_{in} + v_{i(n+1)} <\FRAC$, $v_i(P_{k}\setminus J) >\tfrac{1}{4}-\gamma$.
\end{proof}

\fsp*
\begin{proof}
We show that if there are two items $j_1, j_2$ with each value more than $\tfrac{5}{8}+\gamma$ for an agent $i$ in a bundle $P_k$ of $P \in \MP^{n}_{i}(M\cup D)$, then we can construct another $P'\in \MP^n_{i}(M\cup D)$ where this is not true. Corollary~\ref{cor:f-Bound-on-Values} implies that there must exist another bundle $P_{k'}$ for which $\max_{j \in P_{k'}} v_{ij} < \tfrac{3}{8}+\tfrac{\gamma}{2}$. If there exists a partition $Q_{k'}^1$ and $Q_{k'}^2$ of items in $P_{k'}$ such that value of each $Q_{k'}^1$ and $Q_{k'}^2$ is at least $\tfrac{3}{8}-\gamma$, then we can rearrange items in $P_k \cup P_{k'}$ and make two new bundles $(\{j_1\} \cup Q_{k'}^1)$ and $(\{j_2\} \cup Q_{k'}^2)$. Clearly, the value of each bundle is at least $1$ and each has exactly one item with value more than $\tfrac{5}{8}+\gamma$.

If no such $Q_{k'}^1$ and $Q_{k'}^2$ exists, then we claim that there exists a partition of $P_{k'}$ into three sets with each value less than $\tfrac{3}{8}$. We can find this partition as follows: Initialize three empty bundles, and repeatedly add the highest value item of $P_{k'}$ to the bundle with the lowest value. For a contradiction, suppose one of the three bundles has value more than $\tfrac{3}{8}$, then the sum of the values of the other two sets must be less than $\tfrac{3}{8}$ because otherwise, they make a partition of two with each value more than $\tfrac{3}{8}-\gamma$. This means that at least one of the bundles must have a value less than $\tfrac{3}{16}$. This implies that the value of the largest bag before adding the last item must be less than$\tfrac{3}{16}$ and the last item also should value less than $\tfrac{3}{16}$, which is a contradiction. Therefore, there exists a partition $Q_{k'}^1$, $Q_{k'}^2$ and $Q_{k'}^3$ of $P_{k'}$ such that each value less than $\tfrac{3}{8}$.

According to Lemma \ref{lem:f-Giver-Bundle}, there exists a bundle $P_{\hat{k}} \in \MP^{n}_{i}(M)$ such that $v_i(P_{\hat{k}} \setminus J) > \tfrac{1}{4}-\gamma$. Let $v_i(P_{\hat{k}})=1+\delta$ for some $\delta \geq 0$. Observe that $P_{k'}$ cannot be same as $P_{\hat{k}}$ and $P_k$, otherwise we would have made two bundles earlier, each with value at least $1$ and exactly one item more than $\tfrac{5}{8}+\gamma$. We make three bags: $$\text{bag 1}: \{j_1\} \cup Q_{k'}^1 \hspace{2cm} \text{bag 2}: \{j_2\} \cup Q_{k'}^2 \hspace{2cm} \text{bag 3}: (P_{\hat{k}} \cap J) \cup Q_{k'}^3 $$

The value of each of bag 1 and bag 2 is at most $\tfrac{9}{8}+\tfrac{3\gamma}{2}$ and the value of bag 3 is at most $\tfrac{9}{8}+\gamma+\delta$. The total value of all items in $P_k$, $P_{k'}$ and $P_{\hat{k}}$ is at least $3+\tfrac{2}{8}+\delta+2\gamma$. We sort the remaining items in decreasing order and add them one by one to a bag with the lowest value. Since the value of the last item added to each bag has a value of at most $\tf18$ (from Lemma \ref{lem:f-three}), each bag has a value of at least 1 and it has at most one item with value more than $\tfrac{5}{8}+\gamma$. We repeat this process for all bundles with two item of value more than $\tfrac{5}{8}+\gamma$ to find a desired partition. 
\end{proof}

\fvmax*
\begin{proof}
For the first part, since $t_1=0$, we have $v_a(B_k) \geq 2y, \forall B_k\in L_a$ (Lemma~\ref{lem:f-y}). This implies that $x_a = (\FRAC)l_a - \sum_{k: B_k \in L_a} B_k \le (\FRAC - 2y)l_a$. 

For the second part, in $\bigcup_{k:B_k \in H_a} B_k $ there are exactly $2h_a$ items from $J$. If there are at least $2h_a$ items in $\bigcup_{t=t_1+1}^{t_2} P_t\cap J$ then $z=0$. On the other hand, if there are $2h_a-z$ items in $\bigcup_{t=t_1+1}^{t_2} P_t \cap J$, then it means that there are at least $z$ bundles in $P_{t_1+1},\dots, P_{t_2} $ with exactly one item from $J$. Each of these bundles need more than $\tfrac{1}{4}-\gamma$ value of items from $M\setminus J$ to become 1. 

Next, if $z=0$, then all bundles in $\{P_{t_2+1}, \dots, P_{t_3}\}$ need at least $x_a+l_a(\tfrac{1}{4}-\gamma)$ value of items from $M\setminus J$ to become $1$ because the value of each item in the bundles of $L_a$ is at least $y$ (follows from Definition~\ref{def:y} and the construction of $B_k$'s in \eqref{eq:B_i}). If $z>0$, then all bundles in $\{P_{t_2+1}, \dots, P_{t_3}\}$ need at least $x_a+l_a(\tfrac{1}{4}-\gamma) -yz$ value of items from $M\setminus J$ to become one because each of the $z$ items has value at most $y$ (Lemma~\ref{lem:f-y}). Therefore, in total, there should be at least $x_a+l_a(\tfrac{1}{4}-\gamma)- y z+z(\tfrac{1}{4}-\gamma)$ value of items from $M\setminus J$ in $\{P_{t_1+1}, \dots, P_{t_3}\}$. 

Further, in case $z$ is large such that the value of $x_a+l_a(\tfrac{1}{4}-\gamma)- y z+z(\tfrac{1}{4}-\gamma)$ is negative, we still need $z(\tfrac{1}{4}-\gamma)$ value of items from $M\setminus J$ to make all bundles in $\{P_{t_1+1},\dots, P_{t_2}\}$ value 1. 
\end{proof}

\maxva*
\begin{proof}
Using Lemma~\ref{lem:f-bound-vm-max}(2) there are two cases for the maximum of $x_a+l_a(\tfrac{1}{4}-\gamma)- y z+z(\tfrac{1}{4}-\gamma)$ and $z(\tfrac{1}{4}-\gamma)$. We prove the claim for each case separately.
\medskip

\noindent\textbf{Case 1:} Suppose $x_a+l_a(\tfrac{1}{4}-\gamma)- y z+z(\tfrac{1}{4}-\gamma) \ge z(\tfrac{1}{4}-\gamma)$, then 
        $z \leq \frac{x_a+ l_a(\tfrac{1}{4}-\gamma)}{y}$. Further, Lemma~\ref{lem:f-bound-vm-max}(1) implies $y\leq \frac{l_a(\FRAC)-x_a}{2l_a}$. 
   Using these, we get 
   \begin{align*}
       v_a(M \cup D \setminus J) \quad &\geq \quad x_a+l_a(\tfrac{1}{4}-\gamma) - y z+{z}(\tfrac{1}{4}-\gamma)\\
       &\ge \quad \frac{x_a(\tfrac{1}{4}-\gamma)+l_a(\tfrac{1}{4}-\gamma)^2}{y} \\
       & \ge \quad \frac{2x_al_a(\tfrac{1}{4}-\gamma)+2l_a^2(\tfrac{1}{4}-\gamma)^2}{l_a(\FRAC)-x_a} \enspace .
   \end{align*}
    \[  \]
We need to show that 
\begin{equation}\label{eqn:mt}
\frac{2x_al_a(\tfrac{1}{4}-\gamma)+2l_a^2(\tfrac{1}{4}-\gamma)^2}{l_a(\FRAC)-x_a} - \left( x_a+ \frac{l_a}{8}\right) \ge 0 \enspace . 
\end{equation}

For $n\le 4$ (i.e., number of agents in the original instance is at most 4), there is a simpler way to prove this claim. However, there are better approximation factors available for them in any case~\cite{ghodsi2017fair}, so we assume that $n>4$. Since $\gamma = \frac{1}{12n}$, it decreases as $n$ increases. Therefore, it is enough to show \eqref{eqn:mt} for $n=5$. For this, we put $\gamma=\frac{1}{60}$ in \eqref{eqn:mt} and get, 

\begin{align*}
\frac{2x_al_a(\tfrac{1}{4}-\tfrac{1}{60})+2l_a^2(\tfrac{1}{4}-\tfrac{1}{60})^2}{l_a(\tf34 + \tfrac{1}{60})-x_a} - \left( x_a+ \frac{l_a}{8}\right) \quad & = \quad \frac{7(30x_a l_a + 7l_a^2)}{15(23l_a - 30x_a)} - \frac{8x_a + l_a}{8} \\
& = \quad \frac{3600x_a^2 - 630x_a l_a + 47l_a^2}{120(23l_a - 30x_a)} \\
& > \quad  \frac{36(10x_a - l_a)^2 + l_a(11l_a + 90x_a)}{120(23l_a - 30x_a)}  \\
& \ge \quad 0 \enspace . 
\end{align*}

The last inequality follows because the denominator is positive using the definition of $x_a$. 
\medskip

\noindent\textbf{Case 2:} Suppose $x_a+l_a(\tfrac{1}{4}-\gamma) - y z+{z}(\tfrac{1}{4}-\gamma) \le {z}(\tfrac{1}{4}-\gamma)$. Then, using Lemma~\ref{lem:f-bound-vm-max}(1), we have
\[v_a((M \cup D) \setminus J) \geq {z}(\tfrac{1}{4}-\gamma) \geq \frac{x_a(\tfrac{1}{4}-\gamma)+l_a(\tfrac{1}{4}-\gamma)^2}{y} \geq x_a + \tfrac{l_a}{8}\enspace ,\] where the last inequality follows from the Case 1. 
\end{proof}

\thmfb*
\begin{proof}
Let $J_{\min} = \bigcup_{B_k\in L^1_a} B_k \cap J_2$. Then, we have $v_{aj}< \tf18, \forall  j\in J_{\min}$ because these items are bagged with an item value more than $\tfrac{5}{8}+\gamma$ and together they have value less than $\FRAC$. Therefore, $v_a(J_{\min}) < \tfrac{t_1}{8}$. Let $x_a' := (\FRAC)t_1  - \sum_{B_k\in L^1_a} v_a(B_k)$, and $x_a'' := x_a-x_a'$. For agent $a$, we can treat items of $J_{\min}$ as low-value items so we will prove:
 \begin{equation}
     \label{eq-extra-residual}
     v_a((M\cup D) \setminus J)+ v_a(J_{\min}) \geq x_a'+\tfrac{t_1}{4}+x''_a+{(l_a-t_1)}/{8} \text{.} 
 \end{equation}
 Since $v_a(J_{\min})$ is at most $\tfrac{t_1}{8}$, \eqref{eq-extra-residual} directly implies the theorem.
Let us call items in $(M\setminus J)\cup J_{\min} \cup D$ as \textit{filler} items (each has value $<\tf18$) and items in $J \setminus J_{\min}$ as \textit{base} items (each has value $> \tfrac{1}{4}+\tfrac{\gamma}{3}$).

In re-enumerated $\MP^{n}_{a}(M\cup D)$, recall that each of $\{P_1, \dots, P_{t_1}\}$ contains a base item whose value is at least the maximum of all the items in the remaining bundles. Further, if there are two base items in any bundle $P_k$ in $\{P_1, \dots, P_{t_1}\}$, then $v_a(P_k) > 1$. If each bundle in $\{P_1,\dots, P_{t_1}\}$ has exactly one base item ($t_1$ bundles in total) then items of value at least $x'_a+t_1(\tfrac{1}{4}-\gamma)$ are needed to make all these bundles at least 1 and there are $t_1$ items in $D_3$ with value $\gamma$. This, together with Theorem \ref{thm:maxva}, proves the bound. On the other hand, if there are more than $t_1$ \textit{base} items in $\{P_1,\dots, P_{t_1}\}$, then for each extra base item $j$, there are two cases. If $j$ comes from $B_k\in L^2_a$ then clearly, $x'_a$ will decrease by less than $v_{aj}$ but $x''_a$ will increase by $v_{aj}$, so the bound only improves. For the other case, if $j$ comes from $B_k\in H_a$ then it will make $B_k\setminus \{j\}$ to need more low-value items than the bundle in $\{P_1,\dots, P_{t_1}\}$ who gets $j$ to become one, so $x'_a$ will only increase and $x''_a$ stays same, and hence the bound only improves. This completes the proof.  
\end{proof}

\end{document}